\newif\ifpageseparation
\pageseparationtrue

\newif\ifonecol
\onecolfalse
\ifonecol
  \documentclass[journal,onecolumn]{IEEEtran} 
\else
  \documentclass[journal,twocolumn]{IEEEtran} 
\fi

\usepackage{balance}
\usepackage[export]{adjustbox}

\usepackage{amssymb}
\usepackage{amsmath}
\usepackage{amsthm}
\usepackage{thmtools}

\usepackage{tikz}
\usetikzlibrary{calc, arrows, decorations.markings, shapes, shapes.geometric, positioning, 3d}
\usetikzlibrary{matrix}
\usepackage{xifthen}

\usepackage{calc} %
\usepackage{subfigure}
\usepackage{graphicx}
\usepackage{comment}
\usepackage{color}
\usepackage{xcolor}

\date{}

\newcommand{\F}{\mathbb F}
\newcommand{\R}{\mathbb R}
\newcommand{\img}{ {\rm im\,}}

\newcommand{\A}{{\cal A}}
\newcommand{\setC}{{\cal C}}
\newcommand{\D}{{\cal D}}
\newcommand{\E}{{\cal E}}
\newcommand{\G}{{\cal G}}

\newcommand{\N}{{\cal N}}
\newcommand{\setP}{{\cal P}}
\newcommand{\V}{{\cal V}}
\newcommand{\setS}{{\cal S}}
\newcommand{\s}{{\cal S}}
\newcommand{\T}{{\cal T}}
\newcommand{\X}{{\cal X}}
\newcommand{\Y}{{\cal Y}}
\newcommand{\vnfg}{{\cal F}}
\newcommand{\B}{{\cal B}}
\newcommand{\code}{{\cal C}}
\newcommand{\I}{{\cal I}}
\newcommand{\eqpunc}{\,}

\newcommand{\Ising}{I}
\newcommand{\Nisingbeta}{\N_{\mathrm{\Ising}(\beta)}}
\newcommand{\Nisingbetatilde}{\N_{\mathrm{\Ising}(\widetilde{\beta})}}
\newcommand{\Nisingbetatildeh}{\N_{\mathrm{\Ising}(\widetilde{\beta})}^{\mathrm{h}}}
\newcommand{\Nisingbetatildev}{\N_{\mathrm{\Ising}(\widetilde{\beta})}^{\mathrm{v}}}
\newcommand{\Nisingbetatildehv}{\N_{\mathrm{\Ising}(\widetilde{\beta})}^{\mathrm{hv}}}
\newcommand{\FTNisingbeta}{\widehat{\N}_{\mathrm{\Ising}(\beta)}}

\newcommand*{\qedb}{\hfill\ensuremath{\blacksquare}}%
\newcommand*{\qede}{\hfill\ensuremath{\triangle}}%
\newcommand*{\qedd}{\hfill\ensuremath{\blacksquare}}%
\newcommand*{\qeda}{\hfill\ensuremath{\blacksquare}}%
\newcommand*{\qedt}{\hfill\ensuremath{\blacksquare}}%
\newcommand*{\qedl}{\hfill\ensuremath{\blacksquare}}%

\newtheorem{theorem}{Theorem}
\newtheorem{example}[theorem]{Example}
\newtheorem{remark}[theorem]{Remark}
\newtheorem{definition}[theorem]{Definition}
\newtheorem{lemma}[theorem]{Lemma}
\newtheorem{assumption}[theorem]{Assumption}

\newcommand{\refrecipe}{Construction}

\newcommand{\defeq}{\triangleq}

\specialcomment{comment}{\begingroup\color{blue}}{\endgroup}
\specialcomment{commentold}{\begingroup\color{blue}}{\endgroup}

\def\angle{30}
\pgfmathsetmacro\cosine{cos(\angle)}
\pgfmathsetmacro\sine{sin(\angle)}

\newcommand\cube[3]{
	\path[shift={(0:#1)},shift={(90:#2)}, shift={(\angle:#3*\cosine)}]
		(0,0)  coordinate (A)  (1,0)     coordinate (B) %
		+(\angle:\cosine) coordinate (C)  (\angle:\cosine) coordinate (D)
		(0,1)  coordinate (E) +(\angle:\cosine) coordinate (H)   %
		(1,1)  coordinate (F) +(\angle:\cosine) coordinate (G)  ;
	\draw[draw=black]
		(A) -- (E) (B) -- (F) (C) -- (G)  %
		(A) -- (B) -- (C) %
		(E) -- (F) -- (G) -- (H) -- (E); %
	\draw[fill=red!40, draw=black]
		(D) -- (H) (D) -- (C) (D) -- (A);  
		\def\gap{.99}
		\node[inner sep=0,minimum size=0](cf) at (barycentric cs:A=1,B=1,F=1,E=1){};
		\path(cf) -- coordinate[pos=\gap] (Af) (A.0);
		\path(cf) -- coordinate[pos=\gap] (Bf) (B);
		\path(cf) -- coordinate[pos=\gap] (Ff) (F);
		\path(cf) -- coordinate[pos=\gap] (Ef) (E);
		\draw(F)--(Ff);
		\node[inner sep=0,minimum size=0](cr) at (barycentric cs:B=1,C=1,G=1,F=1){};
		\path(cr) -- coordinate[pos=\gap] (Br) (B);
		\path(cr) -- coordinate[pos=\gap] (Cr) (C);
		\path(cr) -- coordinate[pos=\gap] (Gr) (G);
		\path(cr) -- coordinate[pos=\gap] (Fr) (F);
		\node[inner sep=0,minimum size=0](ct) at (barycentric cs:E=1,F=1,G=1,H=1){};
		\path(ct) -- coordinate[pos=\gap] (Et) (E);
		\path(ct) -- coordinate[pos=\gap] (Gt) (G);
		\path(ct) -- coordinate[pos=\gap] (Ft) (F);
		\path(ct) -- coordinate[pos=\gap] (Ht) (H);

	\path[fill=white,fill opacity=.6]
		(Af)--(Bf) -- (Ff)-- (Ef)
		(Et)--(Ft) -- (Gt)-- (Ht)
		(Br)--(Cr) -- (Gr)-- (Fr);
	\foreach \vertex in {A,...,C} {\node[vertex](v\vertex) at (\vertex) {};}
	\foreach \vertex in {E,...,H} {\node[vertex](v\vertex) at (\vertex) {};}
	\node[vertex,fill=gray](vD) at (D) {};       
}

\newcommand\cubedirected[3]{
	\cube{#1}{#2}{#3}
	\draw[>=latex, ->] (vA)--(vE);
	\draw[>=latex, ->] (vA) -- (vB);
	\draw[>=latex, ->,gray] (vA) -- (vD);
	\draw[>=latex, ->] (vB) -- (vC); 
	\draw[>=latex, ->]	(vB) -- (vF);
	\draw[>=latex, ->]	(vC) -- (vG);
	\draw[>=latex, ->, gray]	(vD) -- (vC);
	\draw[>=latex, ->, gray]	(vD) -- (vH);
	\draw[>=latex, ->]	(vE) -- (vF);
	\draw[>=latex, ->]	(vE) -- (vH);
	\draw[>=latex, ->]	(vF) -- (vG);
	\draw[>=latex, ->]	(vH) -- (vG);
}

\newcommand\topdirected[3]{
	\phantom{\cube{#1}{#2}{#3}}
	\foreach \vertex in {E,...,H}{
		\draw[->,>=latex] (v\vertex)	--++(90		:.3) ;
	}
}
\newcommand\bottomdirected[3]{
	\phantom{\cube{#1}{#2}{#3}}
	\foreach \vertex in {A,...,D}{
		\draw[<-,>=latex] (v\vertex)	--++(90		:-.3) ;
	}
}
\newcommand\leftdirected[3]{
	\phantom{\cube{#1}{#2}{#3}}
	\foreach \vertex in {A,D,E,H}{
		\draw[<-,>=latex] (v\vertex)	--++(00		:-.3) ;
	}
}
\newcommand\rightdirected[3]{
	\phantom{\cube{#1}{#2}{#3}}
	\foreach \vertex in {B,C,F,G}{
		\draw[->,>=latex] (v\vertex)	--++(00		: .3) ;
	}
}
\newcommand\frontdirected[3]{
	\cubecoord{#1}{#2}{#3}
	\foreach \vertex in {A,B,E,F}{
		\draw[<-,>=latex](v\vertex)  --++(180+\angle:.3);
	}
}
\newcommand\backdirected[3]{
	\cubecoord{#1}{#2}{#3}
	\foreach \vertex in {C,D,G,H}{
		\draw[->,>=latex](v\vertex)  --++(\angle:.3);
	}
}

\tikzset{vertex/.style={
			shape=circle, %
			minimum size=1mm,
			fill=black,
			inner sep = 0pt}}
\def\extend{.4}
\def\extendclr{gray!50}
\definecolor{aqua}{rgb}{0.0, 1.0, 1.0}
\definecolor{brightcerulean}{rgb}{0.11, 0.67, 0.84}
\definecolor{darkpastelgreen}{rgb}{0.01, 0.75, 0.24}
\def\extendclr{brightcerulean}
\def\extendtype{dashed}

\def\angle{30}
\pgfmathsetmacro\cosine{cos(\angle)}
\pgfmathsetmacro\sine{sin(\angle)}
\newcommand\cubeorientext[3]{
	\path[shift={(0:#1)},shift={(90:#2)}, shift={(\angle:#3*\cosine)}]
		(0,0)  coordinate (A)  (1,0)     coordinate (B) %
		+(\angle:\cosine) coordinate (C)  (\angle:\cosine) coordinate (D)
		(0,1)  coordinate (E) +(\angle:\cosine) coordinate (H)   %
		(1,1)  coordinate (F) +(\angle:\cosine) coordinate (G)  ;
	\draw[draw=black]
		(A) -- (E) (B) -- (F) (C) -- (G)  %
		(A) -- (B) -- (C) %
		(E) -- (F) -- (G) -- (H) -- (E); %
	\draw[fill=red!40, draw=black]
		(D) -- (H) (D) -- (C) (D) -- (A);  
	\draw[dashed,>=latex, \extendclr] (D)edge[<- ]++(180			:\extend);
	\draw[dashed,>=latex, \extendclr] (D)edge[ ->]++(    \angle	:\extend);
	\draw[dashed,>=latex, \extendclr] (D)edge[<- ]++(-90			:\extend) ;

		\def\gap{.99}
		\node[inner sep=0,minimum size=0](cf) at (barycentric cs:A=1,B=1,F=1,E=1){};
		\path(cf) -- coordinate[pos=\gap] (Af) (A.0);
		\path(cf) -- coordinate[pos=\gap] (Bf) (B);
		\path(cf) -- coordinate[pos=\gap] (Ff) (F);
		\path(cf) -- coordinate[pos=\gap] (Ef) (E);
		\draw(F)--(Ff);
		\node[inner sep=0,minimum size=0](cr) at (barycentric cs:B=1,C=1,G=1,F=1){};
		\path(cr) -- coordinate[pos=\gap] (Br) (B);
		\path(cr) -- coordinate[pos=\gap] (Cr) (C);
		\path(cr) -- coordinate[pos=\gap] (Gr) (G);
		\path(cr) -- coordinate[pos=\gap] (Fr) (F);
		\node[inner sep=0,minimum size=0](ct) at (barycentric cs:E=1,F=1,G=1,H=1){};
		\path(ct) -- coordinate[pos=\gap] (Et) (E);
		\path(ct) -- coordinate[pos=\gap] (Gt) (G);
		\path(ct) -- coordinate[pos=\gap] (Ft) (F);
		\path(ct) -- coordinate[pos=\gap] (Ht) (H);

	\node[inner sep=0,minimum size=0](cback) at (barycentric cs:D=1,C=1,G=1,H=1){};
	\node[inner sep=0,minimum size=0](cl) at (barycentric cs:A=1,D=1,H=1,E=1){};
	\node[inner sep=0,minimum size=0](cbottom) at (barycentric cs:A=1,B=1,C=1,D=1){};
	\node(back)[red] at (cback){\clockxy{}};
	\node(bottom)[red, rotate=.5*\angle] at (cbottom){\clockxz{}};
	\node(left)[red, rotate=2*\angle] at (cl){\clockyz{}};
	\path[fill=white,fill opacity=.6]
		(Af)--(Bf) -- (Ff)-- (Ef)
		(Et)--(Ft) -- (Gt)-- (Ht)
		(Br)--(Cr) -- (Gr)-- (Fr);

	\node[red] at (cf){\clockxy{}};
	\node[red, rotate=2*\angle] at (cr){\clockyz{}};
	\node[red, rotate=.5*\angle] at (ct){\clockxz{}};
	\foreach \vertex in {A,...,C} {\node[vertex](v\vertex) at (\vertex) {};}
	\foreach \vertex in {E,...,H} {\node[vertex](v\vertex) at (\vertex) {};}
	\node[vertex,fill=gray](vD) at (D) {};       
	\draw[>=latex, ->] (vA)--(vE);
	\draw[>=latex, ->] (vA) -- (vB);
	\draw[>=latex, ->,gray] (vA) -- (vD);
	\draw[>=latex, ->] (vB) -- (vC); 
	\draw[>=latex, ->]	(vB) -- (vF);
	\draw[>=latex, ->]	(vC) -- (vG);
	\draw[>=latex, ->, gray]	(vD) -- (vC);
	\draw[>=latex, ->, gray]	(vD) -- (vH);
	\draw[>=latex, ->]	(vE) -- (vF);
	\draw[>=latex, ->]	(vE) -- (vH);
	\draw[>=latex, ->]	(vF) -- (vG);
	\draw[>=latex, ->]	(vH) -- (vG);

	\draw[dashed,>=latex, ->, \extendclr]
		(vA)edge[<- ]++(180			:\extend)
		(vA)edge[<- ]++(180+\angle	:\extend)
		(vA)edge[<- ]++(-90			:\extend)

		(vB)edge[ ->]++(0				:\extend)
		(vB)edge[<- ]++(180+\angle	:\extend)
		(vB)edge[<- ]++(-90			:\extend)

		(vC)edge[ ->]++(0				:\extend)
		(vC)edge[ ->]++(    \angle	:\extend)
		(vC)edge[<- ]++(-90			:\extend)

		(vE)edge[<- ]++(180			:\extend)
		(vE)edge[<- ]++(180+\angle	:\extend)
		(vE)edge[ ->]++(90				:\extend)

		(vF)edge[ ->]++(0				:\extend)
		(vF)edge[<- ]++(180+\angle	:\extend)
		(vF)edge[ ->]++(90				:\extend)

		(vG)edge[ ->]++(0				:\extend)
		(vG)edge[ ->]++(    \angle	:\extend)
		(vG)edge[ ->]++(90				:\extend)

		(vH)edge[<- ]++(180			:\extend)
		(vH)edge[ ->]++(    \angle:\extend)
		(vH)edge[ ->]++(90			:\extend)
		;

}

\newcommand\cubeorient[3]{
	\path[shift={(0:#1)},shift={(90:#2)}, shift={(\angle:#3*\cosine)}]
		(0,0)  coordinate (A)  (1,0)     coordinate (B) %
		+(\angle:\cosine) coordinate (C)  (\angle:\cosine) coordinate (D)
		(0,1)  coordinate (E) +(\angle:\cosine) coordinate (H)   %
		(1,1)  coordinate (F) +(\angle:\cosine) coordinate (G)  ;
	\draw[draw=black]
		(A) --coordinate(ae) (E) (B) --coordinate(bf) (F) (C) --coordinate(cg) (G)  %
		(A) --coordinate(ab) (B) --coordinate(bc) (C) %
		(E) --coordinate(ef) (F) --coordinate(fg) (G) --coordinate(gh) (H) --coordinate(eh) (E); %
	\draw[fill=red!40, draw=black]
		(D) --coordinate(dh) (H) (D) --coordinate(cd) (C) (D) --coordinate(ad) (A);  

		\def\gap{.99}
		\node[inner sep=0,minimum size=0](cf) at (barycentric cs:A=1,B=1,F=1,E=1){};
		\path(cf) -- coordinate[pos=\gap] (Af) (A.0);
		\path(cf) -- coordinate[pos=\gap] (Bf) (B);
		\path(cf) -- coordinate[pos=\gap] (Ff) (F);
		\path(cf) -- coordinate[pos=\gap] (Ef) (E);
		\draw(F)--(Ff);
		\node[inner sep=0,minimum size=0](cr) at (barycentric cs:B=1,C=1,G=1,F=1){};
		\path(cr) -- coordinate[pos=\gap] (Br) (B);
		\path(cr) -- coordinate[pos=\gap] (Cr) (C);
		\path(cr) -- coordinate[pos=\gap] (Gr) (G);
		\path(cr) -- coordinate[pos=\gap] (Fr) (F);
		\node[inner sep=0,minimum size=0](ct) at (barycentric cs:E=1,F=1,G=1,H=1){};
		\path(ct) -- coordinate[pos=\gap] (Et) (E);
		\path(ct) -- coordinate[pos=\gap] (Gt) (G);
		\path(ct) -- coordinate[pos=\gap] (Ft) (F);
		\path(ct) -- coordinate[pos=\gap] (Ht) (H);

	\path[fill=white,fill opacity=.6]
		(Af)--(Bf) -- (Ff)-- (Ef)
		(Et)--(Ft) -- (Gt)-- (Ht)
		(Br)--(Cr) -- (Gr)-- (Fr);

	\node[red] at (cf){\clockxy{}};
	\node[red, rotate=2*\angle] at (cr){\clockyz{}};
	\node[red, rotate=.5*\angle] at (ct){\clockxz{}};
	\foreach \vertex in {A,...,C} {\node[vertex](v\vertex) at (\vertex) {};}
	\foreach \vertex in {E,...,H} {\node[vertex](v\vertex) at (\vertex) {};}
	\node[vertex,fill=gray](vD) at (D) {};       
	\draw[>=latex, ->] (vA)--(vE);
	\draw[>=latex, ->] (vA) -- (vB);
	\draw[>=latex, ->,gray] (vA) -- (vD);
	\draw[>=latex, ->] (vB) -- (vC); 
	\draw[>=latex, ->]	(vB) -- (vF);
	\draw[>=latex, ->]	(vC) -- (vG);
	\draw[>=latex, ->, gray]	(vD) -- (vC);
	\draw[>=latex, ->, gray]	(vD) -- (vH);
	\draw[>=latex, ->]	(vE) -- (vF);
	\draw[>=latex, ->]	(vE) -- (vH);
	\draw[>=latex, ->]	(vF) -- (vG);
	\draw[>=latex, ->]	(vH) -- (vG);

}

\newcommand\cubebdthree[8]{

	\path[shift={(0:#1)},shift={(90:#2)}, shift={(\angle:#3*\cosine)}]
		(0,0)  coordinate (A)  (1,0)     coordinate (B) %
		+(\angle:\cosine) coordinate (C)  (\angle:\cosine) coordinate (D)
		(0,1)  coordinate (E) +(\angle:\cosine) coordinate (H)   %
		(1,1)  coordinate (F) +(\angle:\cosine) coordinate (G)  ;
	\draw[dotted]
		(D) -- (H) (D) -- (C) (D) -- (A);
	\node[xyfv#8bd3](cback) at (barycentric cs:D=1,C=1,G=1,H=1){$#4$};
	\node[yzfv#8bd3](cleft) at (barycentric cs:A=1,D=1,H=1,E=1)   {$#4$};
	\node[xzfv#8bd3](cbottom) at (barycentric cs:A=1,B=1,C=1,D=1){$#4$};

	\draw[dotted]
		(A) -- (E) (B) -- (F) (C) -- (G)  %
		(A) -- (B) -- (C) %
		(E) -- (F) -- (G) -- (H) -- (E); %

		\def\gap{.99}
		\node[inner sep=0,minimum size=0](cfront) at (barycentric cs:A=1,B=1,F=1,E=1){};
		\path(cfront) -- coordinate[pos=\gap] (Af) (A.0);
		\path(cfront) -- coordinate[pos=\gap] (Bf) (B);
		\path(cfront) -- coordinate[pos=\gap] (Ff) (F);
		\path(cfront) -- coordinate[pos=\gap] (Ef) (E);
		\draw(F)--(Ff);
		\node[inner sep=0,minimum size=0](cright) at (barycentric cs:B=1,C=1,G=1,F=1){};
		\path(cright) -- coordinate[pos=\gap] (Br) (B);
		\path(cright) -- coordinate[pos=\gap] (Cr) (C);
		\path(cright) -- coordinate[pos=\gap] (Gr) (G);
		\path(cright) -- coordinate[pos=\gap] (Fr) (F);
		\node[inner sep=0,minimum size=0](ctop) at (barycentric cs:E=1,F=1,G=1,H=1){};
		\path(ctop) -- coordinate[pos=\gap] (Et) (E);
		\path(ctop) -- coordinate[pos=\gap] (Gt) (G);
		\path(ctop) -- coordinate[pos=\gap] (Ft) (F);
		\path(ctop) -- coordinate[pos=\gap] (Ht) (H);

	\draw (cfront)--(cback) (ctop)--(cbottom) (cleft)--(cright);
	\node(cubecenter)[fvertex] at ($(ctop)!.5!(cbottom)$){$#5$};

	\path[fill=white,fill opacity=0.4]
		(Af)--(Bf) -- (Ff)-- (Ef)
		(Et)--(Ft) -- (Gt)-- (Ht)
		(Br)--(Cr) -- (Gr)-- (Fr)
		;

	\node[xyfv#8bd3](vcfront) at (cfront){$#4$};
	\node[yzfv#8bd3](vcright) at (cright){$#4$};
	\node[xzfv#8bd3](vctop) at (ctop)  {$#4$};

		\ifthenelse{\equal{#5}{+}}
		{
			\node at (cubecenter)[fcil, gciliatednode={0 }{3}]{\phantom{$+$}};
			\node at (cubecenter)[fcil, gciliatednode={90}{3}]{\phantom{$+$}};
			\node at (cubecenter)[fcil, gciliatednode={0+\angle}{3}]{\phantom{$+$}};
		}
		{
			\node at (vcfront)[fcil, ciliatednode={\angle }{3}]{\phantom{$+$}};
			\node at (cleft)[fcil, ciliatednode={0}{3}]{\phantom{$+$}};
			\node at (cbottom)[fcil, ciliatednode={90}{3}]{\phantom{$+$}};
		}

}

\newcommand\cubebdthreex[8]{
	\phantom{ \cubebdthree{#1}{#2}{#3}{#4}{#5}{#6}{#7}{#8}}
	\draw(cubecenter) -- ++(0: 	 -.9);
	\draw(cubecenter) -- ++(0: 	  .9);
	\draw(cubecenter) -- ++(90: 	  .9);
	\draw(cubecenter) -- ++(90:	 -.9);
	\draw(cubecenter) -- ++(\angle:-.9*\cosine);
	\draw(cubecenter) -- ++(\angle: .9*\cosine);
	\cubebdthree{#1}{#2}{#3}{#4}{#5}{#6}{#7}{#8}
	\draw(vcright.east) -- ++(0: 	  .3);
	\draw(vcfront.{180+\angle}) -- ++(180+\angle: 	  .3);
	\draw(vctop.north) -- ++(90: 	  .3);
		\ifthenelse{\equal{#5}{+}}
		{
			\node at (cubecenter)[fcil, gciliatednode={0 }{3}]{\phantom{$+$}};
			\node at (cubecenter)[fcil, gciliatednode={90}{3}]{\phantom{$+$}};
			\node at (cubecenter)[fcil, gciliatednode={0+\angle}{3}]{\phantom{$+$}};
		}
		{
		}
}

\newcommand\torusbdthree[8]{

	\path[shift={(0:#1)},shift={(90:#2)}, shift={(\angle:#3*\cosine)}]
		(0,0)  coordinate (A)  (1,0)     coordinate (B) %
		+(\angle:\cosine) coordinate (C)  (\angle:\cosine) coordinate (D)
		(0,1)  coordinate (E) +(\angle:\cosine) coordinate (H)   %
		(1,1)  coordinate (F) +(\angle:\cosine) coordinate (G)  ;
	\draw[dotted]
		(D) -- (H) (D) -- (C) (D) -- (A);
	\node[fcil](cback) at (barycentric cs:D=1,C=1,G=1,H=1){};
	\node[fcil](cleft) at (barycentric cs:A=1,D=1,H=1,E=1){};
	\node[fcil](cbottom) at (barycentric cs:A=1,B=1,C=1,D=1){};

	\draw[dotted]
		(A) -- (E) (B) -- (F) (C) -- (G)  %
		(A) -- (B) -- (C) %
		(E) -- (F) -- (G) -- (H) -- (E); %

		\def\gap{.99}
		\node[inner sep=0,minimum size=0](cfront) at (barycentric cs:A=1,B=1,F=1,E=1){};
		\path(cfront) -- coordinate[pos=\gap] (Af) (A.0);
		\path(cfront) -- coordinate[pos=\gap] (Bf) (B);
		\path(cfront) -- coordinate[pos=\gap] (Ff) (F);
		\path(cfront) -- coordinate[pos=\gap] (Ef) (E);
		\draw(F)--(Ff);
		\node[inner sep=0,minimum size=0](cright) at (barycentric cs:B=1,C=1,G=1,F=1){};
		\path(cright) -- coordinate[pos=\gap] (Br) (B);
		\path(cright) -- coordinate[pos=\gap] (Cr) (C);
		\path(cright) -- coordinate[pos=\gap] (Gr) (G);
		\path(cright) -- coordinate[pos=\gap] (Fr) (F);
		\node[inner sep=0,minimum size=0](ctop) at (barycentric cs:E=1,F=1,G=1,H=1){};
		\path(ctop) -- coordinate[pos=\gap] (Et) (E);
		\path(ctop) -- coordinate[pos=\gap] (Gt) (G);
		\path(ctop) -- coordinate[pos=\gap] (Ft) (F);
		\path(ctop) -- coordinate[pos=\gap] (Ht) (H);

	\draw (cfront)--(cback) (ctop)--(cbottom) (cleft)--(cright);
	\node(cubecenter)[fvertex] at ($(ctop)!.5!(cbottom)$){$#5$};

	\path[fill=white,fill opacity=0.4]
		(Af)--(Bf) -- (Ff)-- (Ef)
		(Et)--(Ft) -- (Gt)-- (Ht)
		(Br)--(Cr) -- (Gr)-- (Fr)
		;

	\node[torusxyfv#8bd3](vcfront) at (cfront){$#4$};
	\node[torusyzfv#8bd3](vcright) at (cright){$#4$};
	\node[torusxzfv#8bd3](vctop) at (ctop)  {$#4$};

		\ifthenelse{\equal{#5}{+}}
		{
			\node at (cubecenter)[fcil, gciliatednode={0 }{3}]{\phantom{$+$}};
			\node at (cubecenter)[fcil, gciliatednode={90}{3}]{\phantom{$+$}};
			\node at (cubecenter)[fcil, gciliatednode={0+\angle}{3}]{\phantom{$+$}};
		}
		{
			\node at (vcfront)[fcil, ciliatednode={\angle }{3}]{\phantom{$+$}};
			\node at (cright)[fcil, ciliatednode={0}{3}]{\phantom{$+$}};
			\node at (ctop)[fcil, ciliatednode={90}{3}]{\phantom{$+$}};
		}
}

\def\distdang{2mm}

\tikzset {
    xyfvnonebd3/.style={
    draw, minimum size=3mm, fill=white, inner sep = 0pt,
    append after command={
      \pgfextra{
        \draw
		   ([shift={(0:0)}]\tikzlastnode.center) -- ([shift={(0:-\distdang)}]\tikzlastnode.center)
			 node[halfedge,rotate=\angle]{}
		 ;
      }
    },
  }
}
\tikzset {
    xzfvnonebd3/.style={
    draw, minimum size=3mm, fill=white, inner sep = 0pt,
    append after command={
      \pgfextra{
        \draw
		   ([shift={(0:0)}]\tikzlastnode.center) -- ([shift={(0:\distdang)}]\tikzlastnode.center)
			 node[halfedge,rotate=90]{}
		 ;
      }
    },
  }
}
\tikzset {
    yzfvnonebd3/.style={
    draw, minimum size=3mm, fill=white, inner sep = 0pt,
    append after command={
      \pgfextra{
        \draw
		   ([shift={(0:0)}]\tikzlastnode.center) -- ([shift={(1*\angle:\distdang)}]\tikzlastnode.center)
			 node[halfedge,rotate=0]{}
		 ;
      }
    },
  }
}

\tikzset {
    xyfvhalfbd3/.style={
    draw, minimum size=3mm, fill=white, inner sep = 0pt,
    append after command={
      \pgfextra{
        \draw
		   ([shift={(0:0)}]\tikzlastnode.center) -- ([shift={(0:-\distdang)}]\tikzlastnode.center)
			 node[halfedge,rotate=90]{}
		 ;
      }
    },
  }
}
\tikzset {
    xzfvhalfbd3/.style={
    draw, minimum size=3mm, fill=white, inner sep = 0pt,
    append after command={
      \pgfextra{
        \draw
		   ([shift={(0:0)}]\tikzlastnode.center) -- ([shift={(0:\distdang)}]\tikzlastnode.center)
			 node[halfedge,rotate=1*\angle]{}
		 ;
      }
    },
  }
}
\tikzset {
    yzfvhalfbd3/.style={
    draw, minimum size=3mm, fill=white, inner sep = 0pt,
    append after command={
      \pgfextra{
        \draw
		   ([shift={(0:0)}]\tikzlastnode.center) -- ([shift={(1*\angle:\distdang)}]\tikzlastnode.center)
			 node[halfedge,rotate=90+0*\angle]{}
		 ;
      }
    },
  }
}
\tikzset {
    torusxyfvhalfbd3/.style={
    draw, minimum size=3mm, fill=white, inner sep = 0pt,
    append after command={
      \pgfextra{
        \draw
		   ([shift={(0:0)}]\tikzlastnode.center) -- ([shift={(0:-\distdang)}]\tikzlastnode.center)
			 node[halfedge,rotate=90]{}
		 ;
        \draw
		   ([shift={(0:0)}]\tikzlastnode.center) -- ([shift={(180+\angle:1.5*\distdang)}]\tikzlastnode.center)
			;
      }
    },
  }
}
\tikzset {
    torusxzfvhalfbd3/.style={
    draw, minimum size=3mm, fill=white, inner sep = 0pt,
    append after command={
      \pgfextra{
        \draw
		   ([shift={(0:0)}]\tikzlastnode.center) -- ([shift={(0:\distdang)}]\tikzlastnode.center)
			 node[halfedge,rotate=1*\angle]{}
		 ;
        \draw
		   ([shift={(0:0)}]\tikzlastnode.center) -- ([shift={(90:1.5*\distdang)}]\tikzlastnode.center)
			;
      }
    },
  }
}
\tikzset {
    torusyzfvhalfbd3/.style={
    draw, minimum size=3mm, fill=white, inner sep = 0pt,
    append after command={
      \pgfextra{
        \draw
		   ([shift={(0:0)}]\tikzlastnode.center) -- ([shift={(1*\angle:\distdang)}]\tikzlastnode.center)
			 node[halfedge,rotate=90+0*\angle]{}
		 ;
        \draw
		   ([shift={(0:0)}]\tikzlastnode.center) -- ([shift={(0:1.5*\distdang)}]\tikzlastnode.center)
			;
      }
    },
  }
}

\newcommand\cubebdtwo[8]{

	\path[shift={(0:#1)},shift={(90:#2)}, shift={(\angle:#3*\cosine)}]
		(0,0)  coordinate (A)  (1,0)     coordinate (B) %
		+(\angle:\cosine) coordinate (C)  (\angle:\cosine) coordinate (D)
		(0,1)  coordinate (E) +(\angle:\cosine) coordinate (H)   %
		(1,1)  coordinate (F) +(\angle:\cosine) coordinate (G)  ;
	\draw[dotted]
		(D) -- (H) (D) -- (C) (D) -- (A);
	\node[xyfv#8](cback) at (barycentric cs:D=1,C=1,G=1,H=1){$#4$};
	\node[yzfv#8](cleft) at (barycentric cs:A=1,D=1,H=1,E=1)   {$#4$};
	\node[xzfv#8](cbottom) at (barycentric cs:A=1,B=1,C=1,D=1){$#4$};

	\node[xev#7bd2] at ($(D)!.5!(C)$){$#5$};
	\node[yev#7bd2] at ($(D)!.5!(H)$){$#5$};
	\node[zev#7bd2] at ($(D)!.5!(A)$){$#5$} (A);
	
	\draw[dotted]
		(A) -- (E) (B) -- (F) (C) -- (G)  %
		(A) -- (B) -- (C) %
		(E) -- (F) -- (G) -- (H) -- (E); %

		\def\gap{.99}
		\node[inner sep=0,minimum size=0](cfront) at (barycentric cs:A=1,B=1,F=1,E=1){};
		\path(cfront) -- coordinate[pos=\gap] (Af) (A.0);
		\path(cfront) -- coordinate[pos=\gap] (Bf) (B);
		\path(cfront) -- coordinate[pos=\gap] (Ff) (F);
		\path(cfront) -- coordinate[pos=\gap] (Ef) (E);
		\draw(F)--(Ff);
		\node[inner sep=0,minimum size=0](cright) at (barycentric cs:B=1,C=1,G=1,F=1){};
		\path(cright) -- coordinate[pos=\gap] (Br) (B);
		\path(cright) -- coordinate[pos=\gap] (Cr) (C);
		\path(cright) -- coordinate[pos=\gap] (Gr) (G);
		\path(cright) -- coordinate[pos=\gap] (Fr) (F);
		\node[inner sep=0,minimum size=0](ctop) at (barycentric cs:E=1,F=1,G=1,H=1){};
		\path(ctop) -- coordinate[pos=\gap] (Et) (E);
		\path(ctop) -- coordinate[pos=\gap] (Gt) (G);
		\path(ctop) -- coordinate[pos=\gap] (Ft) (F);
		\path(ctop) -- coordinate[pos=\gap] (Ht) (H);
	\path[fill=white,fill opacity=.5]
		(Af)--(Bf) -- (Ff)-- (Ef)
		(Et)--(Ft) -- (Gt)-- (Ht)
		(Br)--(Cr) -- (Gr)-- (Fr);

	\node[xyfv#8] at (cfront){$#4$};
	\node[yzfv#8] at (cright){$#4$};
	\node[xzfv#8] at (ctop)  {$#4$};

	\node[xev#7bd2] at ($(E)!.5!(F)$){$#5$};
	\node[yev#7bd2] at ($(E)!.5!(A)$){$#5$};
	\node[zev#7bd2] at ($(E)!.5!(H)$){$#5$};
	\node[xev#7bd2] at ($(G)!.5!(H)$){$#5$};
	\node[yev#7bd2] at ($(G)!.5!(C)$){$#5$};
	\node[zev#7bd2] at ($(G)!.5!(F)$){$#5$};
	\node[xev#7bd2] at ($(B)!.5!(A)$){$#5$};
	\node[yev#7bd2] at ($(B)!.5!(F)$){$#5$};
	\node[zev#7bd2] at ($(B)!.5!(C)$){$#5$};
			\node at (cfront)[fcil, ciliatednode={180}{3}]{\phantom{$+$}};
			\node at (cfront)[fcil, ciliatednode={90}{3}]{\phantom{$+$}};
			\node at (cback)[fcil, gciliatednode={180}{3}]{\phantom{$+$}};
			\node at (cback)[fcil, gciliatednode={90}{3}]{\phantom{$+$}};
			\node at (cright)[fcil, ciliatednode={180+\angle}{3}]{\phantom{$+$}};
			\node at (cright)[fcil, ciliatednode={90}{3}]{\phantom{$+$}};
			\node at (cleft)[fcil, gciliatednode={180+\angle}{3}]{\phantom{$+$}};
			\node at (cleft)[fcil, gciliatednode={90}{3}]{\phantom{$+$}};
			\node at (ctop)[fcil, ciliatednode={180}{3}]{\phantom{$+$}};
			\node at (ctop)[fcil, ciliatednode={\angle}{3}]{\phantom{$+$}};
			\node at (cbottom)[fcil, gciliatednode={180}{3}]{\phantom{$+$}};
			\node at (cbottom)[fcil, gciliatednode={\angle}{3}]{\phantom{$+$}};
}

\newcommand\torusbdtwo[8]{

	\path[shift={(0:#1)},shift={(90:#2)}, shift={(\angle:#3*\cosine)}]
		(0,0)  coordinate (A)  (1,0)     coordinate (B) %
		+(\angle:\cosine) coordinate (C)  (\angle:\cosine) coordinate (D)
		(0,1)  coordinate (E) +(\angle:\cosine) coordinate (H)   %
		(1,1)  coordinate (F) +(\angle:\cosine) coordinate (G)  ;
	\draw[dotted]
		(D) -- (H) (D) -- (C) (D) -- (A);

	\draw[dotted]
		(A) -- (E) (B) -- (F) (C) -- (G)  %
		(A) -- (B) -- (C) %
		(E) -- (F) -- (G) -- (H) -- (E); %

		\def\gap{.99}
		\node[inner sep=0,minimum size=0](cfront) at (barycentric cs:A=1,B=1,F=1,E=1){};
		\path(cfront) -- coordinate[pos=\gap] (Af) (A.0);
		\path(cfront) -- coordinate[pos=\gap] (Bf) (B);
		\path(cfront) -- coordinate[pos=\gap] (Ff) (F);
		\path(cfront) -- coordinate[pos=\gap] (Ef) (E);
		\draw(F)--(Ff);
		\node[inner sep=0,minimum size=0](cright) at (barycentric cs:B=1,C=1,G=1,F=1){};
		\path(cright) -- coordinate[pos=\gap] (Br) (B);
		\path(cright) -- coordinate[pos=\gap] (Cr) (C);
		\path(cright) -- coordinate[pos=\gap] (Gr) (G);
		\path(cright) -- coordinate[pos=\gap] (Fr) (F);
		\node[inner sep=0,minimum size=0](ctop) at (barycentric cs:E=1,F=1,G=1,H=1){};
		\path(ctop) -- coordinate[pos=\gap] (Et) (E);
		\path(ctop) -- coordinate[pos=\gap] (Gt) (G);
		\path(ctop) -- coordinate[pos=\gap] (Ft) (F);
		\path(ctop) -- coordinate[pos=\gap] (Ht) (H);
	\path[fill=white,fill opacity=.5]
		(Af)--(Bf) -- (Ff)-- (Ef)
		(Et)--(Ft) -- (Gt)-- (Ht)
		(Br)--(Cr) -- (Gr)-- (Fr);

	\node[xyfv#8] at (cfront){$#4$};
	\node[yzfv#8] at (cright){$#4$};
	\node[xzfv#8] at (ctop)  {$#4$};

	\node[torusxev#7bd2] at ($(E)!.5!(F)$){$#5$};
	\node[toruszev#7bd2] at ($(G)!.5!(F)$){$#5$};
	\node[torusyev#7bd2] at ($(B)!.5!(F)$){$#5$};
			\node at (cfront)[fcil, ciliatednode={180}{3}]{\phantom{$+$}};
			\node at (cfront)[fcil, ciliatednode={90}{3}]{\phantom{$+$}};
			\node at (cright)[fcil, ciliatednode={180+\angle}{3}]{\phantom{$+$}};
			\node at (cright)[fcil, ciliatednode={90}{3}]{\phantom{$+$}};
			\node at (ctop)[fcil, ciliatednode={180}{3}]{\phantom{$+$}};
			\node at (ctop)[fcil, ciliatednode={\angle}{3}]{\phantom{$+$}};
}

\def\distdang{2mm}

\tikzset {
    zevnonebd2/.style={ draw, minimum size=3mm, fill=white, inner sep = 0pt, }
}
\tikzset {
    yevnonebd2/.style={ draw, minimum size=3mm, fill=white, inner sep = 0pt, }
}
\tikzset {
    xevnonebd2/.style={ draw, minimum size=3mm, fill=white, inner sep = 0pt, }
}

\tikzset {
    xevhalfbd2/.style={
    draw, minimum size=3mm, fill=white, inner sep = 0pt,
    append after command={
      \pgfextra{
		 \draw ([shift={(0,0)}]\tikzlastnode.center) -- ([shift={(0:\distdang)}]\tikzlastnode.center)
		 node[halfedge,rotate=90]{}
		 ;
      }
    },
  }
}
\tikzset {
    yevhalfbd2/.style={
    draw, minimum size=3mm, fill=white, inner sep = 0pt,
    append after command={
      \pgfextra{
        \draw ([shift={(0mm, 0mm)}]\tikzlastnode.center) -- ([shift={(90:\distdang)}]\tikzlastnode.center)
		 node[halfedge,rotate=0]{}
		 ;
      }
    },
  }
}
\tikzset {
    zevhalfbd2/.style={
    draw, minimum size=3mm, fill=white, inner sep = 0pt,
    append after command={
      \pgfextra{
        \draw ([shift={(0mm, 0mm)}]\tikzlastnode.center) -- ([shift={(\angle:\distdang)}]\tikzlastnode.center)
		 node[halfedge,rotate=0]{}
		 ;
      }
    },
  }
}
\tikzset {
    torusxevnonebd2/.style={
    draw, minimum size=3mm, fill=white, inner sep = 0pt,
    append after command={
      \pgfextra{
		 \draw
		 ([shift={(0,0)}]\tikzlastnode.center) -- ([shift={(90:\distdang)}]\tikzlastnode.center)
		 ([shift={(0,0)}]\tikzlastnode.center) -- ([shift={(180+\angle:\distdang)}]\tikzlastnode.center)
		 ;
      }
    },
  }
}
\tikzset {
    torusyevnonebd2/.style={
    draw, minimum size=3mm, fill=white, inner sep = 0pt,
    append after command={
      \pgfextra{
		 \draw
		 ([shift={(0,0)}]\tikzlastnode.center) -- ([shift={(0:\distdang)}]\tikzlastnode.center)
		 ([shift={(0,0)}]\tikzlastnode.center) -- ([shift={(180+\angle:\distdang)}]\tikzlastnode.center)
		 ;
      }
    },
  }
}
\tikzset {
    toruszevnonebd2/.style={
    draw, minimum size=3mm, fill=white, inner sep = 0pt,
    append after command={
      \pgfextra{
		 \draw
		 ([shift={(0,0)}]\tikzlastnode.center) -- ([shift={(90:\distdang)}]\tikzlastnode.center)
		 ([shift={(0,0)}]\tikzlastnode.center) -- ([shift={(0:\distdang)}]\tikzlastnode.center)
		 ;
      }
    },
  }
}
\tikzset {
    torusxevhalfbd2/.style={
    draw, minimum size=3mm, fill=white, inner sep = 0pt,
    append after command={
      \pgfextra{
		 \draw
		 ([shift={(0,0)}]\tikzlastnode.center) -- ([shift={(0:\distdang)}]\tikzlastnode.center)
		 node[halfedge,rotate=90]{}
		 ;
		 \draw
		 ([shift={(0,0)}]\tikzlastnode.center) -- ([shift={(90:\distdang)}]\tikzlastnode.center)
		 ([shift={(0,0)}]\tikzlastnode.center) -- ([shift={(180+\angle:\distdang)}]\tikzlastnode.center)
		 ;
      }
    },
  }
}
\tikzset {
    torusyevhalfbd2/.style={
    draw, minimum size=3mm, fill=white, inner sep = 0pt,
    append after command={
      \pgfextra{
        \draw ([shift={(0mm, 0mm)}]\tikzlastnode.center) -- ([shift={(90:\distdang)}]\tikzlastnode.center)
		 node[halfedge,rotate=0]{}
		 ;
		 \draw
		 ([shift={(0,0)}]\tikzlastnode.center) -- ([shift={(0:\distdang)}]\tikzlastnode.center)
		 ([shift={(0,0)}]\tikzlastnode.center) -- ([shift={(180+\angle:\distdang)}]\tikzlastnode.center)
		 ;
      }
    },
  }
}
\tikzset {
    toruszevhalfbd2/.style={
    draw, minimum size=3mm, fill=white, inner sep = 0pt,
    append after command={
      \pgfextra{
        \draw ([shift={(0mm, 0mm)}]\tikzlastnode.center) -- ([shift={(\angle:\distdang)}]\tikzlastnode.center)
		 node[halfedge,rotate=0]{}
		 ;
		 \draw
		 ([shift={(0,0)}]\tikzlastnode.center) -- ([shift={(90:\distdang)}]\tikzlastnode.center)
		 ([shift={(0,0)}]\tikzlastnode.center) -- ([shift={(0:\distdang)}]\tikzlastnode.center)
		 ;
      }
    },
  }
}

\tikzset {
    xevkappabd2/.style={
    draw, minimum size=3mm, fill=white, inner sep = 0pt,
    append after command={
      \pgfextra{
		 \draw ([shift={(0,0)}]\tikzlastnode.center) -- ([shift={(0:\distdang)}]\tikzlastnode.center)
		 node[kappa]{$\kappa$}
		 ;
      }
    },
  }
}
\tikzset {
    yevkappabd2/.style={
    draw, minimum size=3mm, fill=white, inner sep = 0pt,
    append after command={
      \pgfextra{
        \draw ([shift={(0mm, 0mm)}]\tikzlastnode.center) -- ([shift={(90:\distdang)}]\tikzlastnode.center)
		 node[kappa]{$\kappa$}
		 ;
      }
    },
  }
}
\tikzset {
    zevkappabd2/.style={
    draw, minimum size=3mm, fill=white, inner sep = 0pt,
    append after command={
      \pgfextra{
        \draw ([shift={(0mm, 0mm)}]\tikzlastnode.center) -- ([shift={(\angle:\distdang)}]\tikzlastnode.center)
		 node[kappa]{$\kappa$}
		 ;
      }
    },
  }
}

\tikzset {
    xyfvnone/.style={
    draw, minimum size=3mm, fill=white, inner sep = 0pt,
    append after command={
      \pgfextra{
        \draw
		   ([shift={(0:0)}]\tikzlastnode.center) -- ([shift={(180:.5)}]\tikzlastnode.east)
		   ([shift={(0:0)}]\tikzlastnode.center) -- ([shift={(0:.5)}]\tikzlastnode.west)
		   ([shift={(0:0)}]\tikzlastnode.center) -- ([shift={(90:.5)}]\tikzlastnode.south)
		   ([shift={(0:0)}]\tikzlastnode.center) -- ([shift={(-90:.5)}]\tikzlastnode.north)
		 ;
      }
    },
  }
}
\tikzset {
    xzfvnone/.style={
    draw, minimum size=3mm, fill=white, inner sep = 0pt,
    append after command={
      \pgfextra{
        \draw
		   ([shift={(0:0)}]\tikzlastnode.center) -- ([shift={(180:.5)}]\tikzlastnode.east)
		   ([shift={(0:0)}]\tikzlastnode.center) -- ([shift={(0:.5)}]\tikzlastnode.west)
			([shift={(0:0)}]\tikzlastnode.center) -- ([shift={(\angle:.5*\cosine)}]\tikzlastnode.{180+\angle})
		   ([shift={(0:0)}]\tikzlastnode.center) -- ([shift={(180+\angle:.5*\cosine)}]\tikzlastnode.\angle)
		 ;
      }
    },
  }
}
\tikzset {
    yzfvnone/.style={
    draw, minimum size=3mm, fill=white, inner sep = 0pt,
    append after command={
      \pgfextra{
        \draw
			([shift={(0:0)}]\tikzlastnode.center) -- ([shift={(\angle:.5*\cosine)}]\tikzlastnode.{180+\angle})
		   ([shift={(0:0)}]\tikzlastnode.center) -- ([shift={(180+\angle:.5*\cosine)}]\tikzlastnode.\angle)
		   ([shift={(0:0)}]\tikzlastnode.center) -- ([shift={(90:.5)}]\tikzlastnode.south)
		   ([shift={(0:0)}]\tikzlastnode.center) -- ([shift={(-90:.5)}]\tikzlastnode.north)
		 ;
      }
    },
  }
}

\tikzset {
    xyfvhalf/.style={
    draw, minimum size=3mm, fill=white, inner sep = 0pt,
    append after command={
      \pgfextra{
        \draw
		   ([shift={(0:0)}]\tikzlastnode.center) -- ([shift={(180:.5)}]\tikzlastnode.east)
		   ([shift={(0:0)}]\tikzlastnode.center) -- ([shift={(0:.5)}]\tikzlastnode.west)
		   ([shift={(0:0)}]\tikzlastnode.center) -- ([shift={(90:.5)}]\tikzlastnode.south)
		   ([shift={(0:0)}]\tikzlastnode.center) -- ([shift={(-90:.5)}]\tikzlastnode.north)
		   ([shift={(0:0)}]\tikzlastnode.center) -- ([shift={(45:\distdang)}]\tikzlastnode.center)
			 node[halfedge,rotate=-45]{}
		 ;
      }
    },
  }
}
\tikzset {
    xzfvhalf/.style={
    draw, minimum size=3mm, fill=white, inner sep = 0pt,
    append after command={
      \pgfextra{
        \draw
		   ([shift={(0:0)}]\tikzlastnode.center) -- ([shift={(180:.5)}]\tikzlastnode.east)
		   ([shift={(0:0)}]\tikzlastnode.center) -- ([shift={(0:.5)}]\tikzlastnode.west)
			([shift={(0:0)}]\tikzlastnode.center) -- ([shift={(\angle:.5)}]\tikzlastnode.{180+\angle})
		   ([shift={(0:0)}]\tikzlastnode.center) -- ([shift={(180+\angle:.5)}]\tikzlastnode.\angle)
		   ([shift={(0:0)}]\tikzlastnode.center) -- ([shift={(\angle+90:\distdang)}]\tikzlastnode.center)
			 node[halfedge,rotate=0*\angle]{}
		 ;
      }
    },
  }
}
\tikzset {
    yzfvhalf/.style={
    draw, minimum size=3mm, fill=white, inner sep = 0pt,
    append after command={
      \pgfextra{
        \draw
		  ([shift={(0:0)}]\tikzlastnode.center) -- ([shift={(\angle:.5)}]\tikzlastnode.{180+\angle})
		   ([shift={(0:0)}]\tikzlastnode.center) -- ([shift={(180+\angle:.5)}]\tikzlastnode.\angle)
		   ([shift={(0:0)}]\tikzlastnode.center) -- ([shift={(90:.5)}]\tikzlastnode.south)
		   ([shift={(0:0)}]\tikzlastnode.center) -- ([shift={(-90:.5)}]\tikzlastnode.north)
		   ([shift={(0:0)}]\tikzlastnode.center) -- ([shift={(2*\angle:\distdang)}]\tikzlastnode.center)
			 node[halfedge,rotate=90+1*\angle]{}
		 ;
      }
    },
  }
}

\newcommand\cubebdone[8]{

	\path[shift={(0:#1)},shift={(90:#2)}, shift={(\angle:#3*\cosine)}]
		(0,0)  coordinate (A)  (1,0)     coordinate (B) %
		+(\angle:\cosine) coordinate (C)  (\angle:\cosine) coordinate (D)
		(0,1)  coordinate (E) +(\angle:\cosine) coordinate (H)   %
		(1,1)  coordinate (F) +(\angle:\cosine) coordinate (G)  ;
	\draw[draw=black]
		(D) -- (H) (D) -- (C) (D) -- (A);

	\node[xev#7](evDx) at ($(D)!.5!(C)$){$#5$};
	\node[yev#7](evDy) at ($(D)!.5!(H)$){$#5$};
	\node[zev#7](evDz) at ($(D)!.5!(A)$){$#5$} (A);
	
	\draw[draw=black]
		(A) -- (E) (B) -- (F) (C) -- (G)  %
		(A) -- (B) -- (C) %
		(E) -- (F) -- (G) -- (H) -- (E); %

		\def\gap{.99}
		\node[inner sep=0,minimum size=0](cf) at (barycentric cs:A=1,B=1,F=1,E=1){};
		\path(cf) -- coordinate[pos=\gap] (Af) (A.0);
		\path(cf) -- coordinate[pos=\gap] (Bf) (B);
		\path(cf) -- coordinate[pos=\gap] (Ff) (F);
		\path(cf) -- coordinate[pos=\gap] (Ef) (E);
		\draw(F)--(Ff);
		\node[inner sep=0,minimum size=0](cr) at (barycentric cs:B=1,C=1,G=1,F=1){};
		\path(cr) -- coordinate[pos=\gap] (Br) (B);
		\path(cr) -- coordinate[pos=\gap] (Cr) (C);
		\path(cr) -- coordinate[pos=\gap] (Gr) (G);
		\path(cr) -- coordinate[pos=\gap] (Fr) (F);
		\node[inner sep=0,minimum size=0](ct) at (barycentric cs:E=1,F=1,G=1,H=1){};
		\path(ct) -- coordinate[pos=\gap] (Et) (E);
		\path(ct) -- coordinate[pos=\gap] (Gt) (G);
		\path(ct) -- coordinate[pos=\gap] (Ft) (F);
		\path(ct) -- coordinate[pos=\gap] (Ht) (H);
	\path[fill=white,fill opacity=.6]
		(Af)--(Bf) -- (Ff)-- (Ef)
		(Et)--(Ft) -- (Gt)-- (Ht)
		(Br)--(Cr) -- (Gr)-- (Fr);
	\foreach \vertex in {A,...,C} {\node(v\vertex)[sitev#6] at (\vertex) {$#4$};}
	\foreach \vertex in {E,...,H} {\node(v\vertex)[sitev#6] at (\vertex) {$#4$};}
	\node(vD)[sitev#6,draw=gray,text=gray] at (D) {$#4$};       

	\node[xev#7](evEx) at ($(E)!.5!(F)$){$#5$};
	\node[yev#7](evEy) at ($(E)!.5!(A)$){$#5$};
	\node[zev#7](evEz) at ($(E)!.5!(H)$){$#5$};
	\node[xev#7](evGx) at ($(G)!.5!(H)$){$#5$};
	\node[yev#7](evGy) at ($(G)!.5!(C)$){$#5$};
	\node[zev#7](evGz) at ($(G)!.5!(F)$){$#5$};
	\node[xev#7](evBx) at ($(B)!.5!(A)$){$#5$};
	\node[yev#7](evBy) at ($(B)!.5!(F)$){$#5$};
	\node[zev#7](evBz) at ($(B)!.5!(C)$){$#5$};
		\ifthenelse{\equal{#4}{+}}
		{
			\node at (vA)[fcil, ciliatednode={0}{3}]{\phantom{$+$}};
			\node at (vA)[fcil, ciliatednode={90}{3}]{\phantom{$+$}};
			\node at (vA)[fcil, gciliatednode={\angle}{3}]{\phantom{$+$}};
			\node at (vB)[fcil, ciliatednode={\angle}{3}]{\phantom{$+$}};
			\node at (vB)[fcil, ciliatednode={90}{3}]{\phantom{$+$}};
			\node at (vC)[fcil, ciliatednode={90}{3}]{\phantom{$+$}};
			\node at (vD)[fcil, gciliatednode={0}{3}]{\phantom{$+$}};
			\node at (vD)[fcil, gciliatednode={90}{3}]{\phantom{$+$}};
			\node at (vE)[fcil, ciliatednode={0}{3}]{\phantom{$+$}};
			\node at (vE)[fcil, ciliatednode={\angle}{3}]{\phantom{$+$}};
			\node at (vF)[fcil, ciliatednode={\angle}{3}]{\phantom{$+$}};
			\node at (vH)[fcil, ciliatednode={0}{3}]{\phantom{$+$}};
		}
		{
			\node at (evDx)[fcil, gciliatednode={0}{3}]{\phantom{$+$}};
			\node at (evDy)[fcil, gciliatednode={90}{3}]{\phantom{$+$}};
			\node at (evDz)[fcil, gciliatednode={\angle}{3}]{\phantom{$+$}};
			\node at (evEx)[fcil,  ciliatednode={0}{3}]{\phantom{$+$}};
			\node at (evEy)[fcil,  ciliatednode={90}{3}]{\phantom{$+$}};
			\node at (evEz)[fcil,  ciliatednode={\angle}{3}]{\phantom{$+$}};
			\node at (evGx)[fcil,  ciliatednode={0}{3}]{\phantom{$+$}};
			\node at (evGy)[fcil,  ciliatednode={90}{3}]{\phantom{$+$}};
			\node at (evGz)[fcil,  ciliatednode={\angle}{3}]{\phantom{$+$}};
			\node at (evBx)[fcil,  ciliatednode={0}{3}]{\phantom{$+$}};
			\node at (evBy)[fcil,  ciliatednode={90}{3}]{\phantom{$+$}};
			\node at (evBz)[fcil,  ciliatednode={\angle}{3}]{\phantom{$+$}};
		}
}

\newcommand\torusbdone[8]{

	\path[shift={(0:#1)},shift={(90:#2)}, shift={(\angle:#3*\cosine)}]
		(0,0)  coordinate (A)  (1,0)     coordinate (B) %
		+(\angle:\cosine) coordinate (C)  (\angle:\cosine) coordinate (D)
		(0,1)  coordinate (E) +(\angle:\cosine) coordinate (H)   %
		(1,1)  coordinate (F) +(\angle:\cosine) coordinate (G)  ;
	\draw[dotted]
		(D) -- (H) (D) -- (C) (D) -- (A) 	%
		(A) -- (E) (B) -- (F) (C) -- (G)  	%
		(A) -- (B) -- (C) 						%
		(E) -- (F) -- (G) -- (H) -- (E) 		%
		;
	\draw[draw=black]
		(F)edge++(180	 	  :.9)
		(F)edge++(-90		  :.9)
		(F)edge++(    \angle:.9*\cosine)
		(F)edge++(  0	 	  :.4)
		(F)edge++( 90		  :.4)
		(F)edge++(180+\angle:.4*\cosine)
		; 
	\node(vF)[sitev#6] at (F) {$#4$};

	\node[yev#7](evFy) at ($(F)!.5!(B)$){$#5$};
	\node[xev#7](evFx) at ($(F)!.5!(E)$){$#5$};
	\node[zev#7](evFz) at ($(F)!.5!(G)$){$#5$};

		\ifthenelse{\equal{#4}{+}}
		{
			\node at (vF)[fcil, ciliatednode={0}{3}]{\phantom{$+$}};
			\node at (vF)[fcil, ciliatednode={90}{3}]{\phantom{$+$}};
			\node at (vF)[fcil, ciliatednode={\angle}{3}]{\phantom{$+$}};
		}
		{
			\node at (evFx)[fcil, ciliatednode={0}{3}]{\phantom{$+$}};
			\node at (evFy)[fcil, ciliatednode={90}{3}]{\phantom{$+$}};
			\node at (evFz)[fcil, ciliatednode={\angle}{3}]{\phantom{$+$}};
		}
}

\tikzset{sitevnone/.style={
	draw, minimum size=3mm, fill=white, inner sep = 0pt},
}
\tikzset {
	sitevhalf/.style={
	draw, minimum size=3mm, fill=white, inner sep = 0pt,
    append after command={
      \pgfextra{
		 \draw ([shift={(0,0)}]\tikzlastnode.center) -- ([shift={(135:\distdang)}]\tikzlastnode.center)
		 node[halfedge,rotate=0]{}
		 ;
      }
    },
  }
}

\tikzset {
    xevnone/.style={ draw, minimum size=3mm, fill=white, inner sep = 0pt, }
}
\tikzset {
    yevnone/.style={ draw, minimum size=3mm, fill=white, inner sep = 0pt, }
}
\tikzset {
    zevnone/.style={ draw, minimum size=3mm, fill=white, inner sep = 0pt, }
}

\tikzset {
    xevhalf/.style={
    draw, minimum size=3mm, fill=white, inner sep = 0pt,
    append after command={
      \pgfextra{
		 \draw ([shift={(0,0)}]\tikzlastnode.center) -- ([shift={(1*\angle:\distdang)}]\tikzlastnode.center)
		 node[halfedge,rotate=0]{}
		 ;
      }
    },
  }
}
\tikzset {
    yevhalf/.style={
    draw, minimum size=3mm, fill=white, inner sep = 0pt,
    append after command={
      \pgfextra{
        \draw ([shift={(0mm, 0mm)}]\tikzlastnode.center) -- ([shift={(0:\distdang)}]\tikzlastnode.center)
		 node[halfedge,rotate=90]{}
		 ;
      }
    },
  }
}
\tikzset {
    zevhalf/.style={
    draw, minimum size=3mm, fill=white, inner sep = 0pt,
    append after command={
      \pgfextra{
        \draw ([shift={(0mm, 0mm)}]\tikzlastnode.center) -- ([shift={(180:\distdang)}]\tikzlastnode.center)
		 node[halfedge,rotate=\angle]{}
		 ;
      }
    },
  }
}

\tikzset {
    xevkappa/.style={
    draw, minimum size=3mm, fill=white, inner sep = 0pt,
    append after command={
      \pgfextra{
		 \draw ([shift={(0,0)}]\tikzlastnode.center) -- ([shift={(1*\angle:\distdang)}]\tikzlastnode.center)
		 node[kappa]{$\kappa$}
		 ;
      }
    },
  }
}
\tikzset {
    yevkappa/.style={
    draw, minimum size=3mm, fill=white, inner sep = 0pt,
    append after command={
      \pgfextra{
        \draw ([shift={(0mm, 0mm)}]\tikzlastnode.center) -- ([shift={(0:\distdang)}]\tikzlastnode.center)
		 node[kappa]{$\kappa$}
		 ;
      }
    },
  }
}
\tikzset {
    zevkappa/.style={
    draw, minimum size=3mm, fill=white, inner sep = 0pt,
    append after command={
      \pgfextra{
        \draw ([shift={(0mm, 0mm)}]\tikzlastnode.center) -- ([shift={(180:\distdang)}]\tikzlastnode.center)
		 node[kappa]{$\kappa$}
		 ;
      }
    },
  }
}

\tikzset {
    xevkappahat/.style={
    draw, minimum size=3mm, fill=white, inner sep = 0pt,
    append after command={
      \pgfextra{
		 \draw ([shift={(0,0)}]\tikzlastnode.center) -- ([shift={(1*\angle:\distdang)}]\tikzlastnode.center)
		 node[kappa]{$\widehat{\kappa}$}
		 ;
      }
    },
  }
}
\tikzset {
    yevkappahat/.style={
    draw, minimum size=3mm, fill=white, inner sep = 0pt,
    append after command={
      \pgfextra{
        \draw ([shift={(0mm, 0mm)}]\tikzlastnode.center) -- ([shift={(0:\distdang)}]\tikzlastnode.center)
		  node[kappa]{$\widehat{\kappa}$}
		 ;
      }
    },
  }
}
\tikzset {
    zevkappahat/.style={
    draw, minimum size=3mm, fill=white, inner sep = 0pt,
    append after command={
      \pgfextra{
        \draw ([shift={(0mm, 0mm)}]\tikzlastnode.center) -- ([shift={(180:\distdang)}]\tikzlastnode.center)
		  node[kappa]{$\widehat{\kappa}$}
		 ;
      }
    },
  }
}

\newcommand\cubecoord[3]{
	\path[shift={(0:#1)},shift={(90:#2)}, shift={(\angle:#3*\cosine)}]
		(0,0)  coordinate (A)  (1,0)     coordinate (B) %
		+(\angle:\cosine) coordinate (C)  (\angle:\cosine) coordinate (D)
		(0,1)  coordinate (E) +(\angle:\cosine) coordinate (H)   %
		(1,1)  coordinate (F) +(\angle:\cosine) coordinate (G)  ;
	\foreach \vertex in {A,...,H}
		\node(d\vertex)[dvertex] at (\vertex){};
}

\newcommand\topdang[3]{
	\cubecoord{#1}{#2}{#3}
	\foreach \vertex in {E,...,H}{
		\draw (d\vertex)	--++(90		:.3) ;
	}
}
\newcommand\topextend[4]{
	\cubecoord{#1}{#2}{#3}
	\foreach \vertex in {E,...,H}{
		\draw[\extendtype,\extendclr] (d\vertex)	--++(90		:\extend) ;
		\ifthenelse{\equal{#4}{+}} { \node at (d\vertex)[fcil, extendciliatednode={90}{3}]{\phantom{$+$}}; }
	}
}

\newcommand\topfaceextend[4]{
	\cubecoord{#1}{#2}{#3}
	\node[dvertex](ctop) at (barycentric cs:E=1,F=1,G=1,H=1){};
	\draw[\extendtype,\extendclr] (ctop)--++(90:\extend) ;
	\ifthenelse{\equal{#4}{+}} {\node at (ctop)[fcil, extendciliatednode={90}{3}]{\phantom{$+$}}; }
}

\newcommand\bottomextend[4]{
	\cubecoord{#1}{#2}{#3}
	\foreach \vertex in {A,...,D}{
		\draw[\extendtype,\extendclr] (d\vertex)	--++(-90:\extend) ;
		\ifthenelse{\equal{#4}{+}} { \node at (d\vertex)[fcil, extendciliatednode={-90}{3}]{\phantom{$+$}}; }
	}
}
\newcommand\longbottomdang[5]{
	\cubecoord{#1}{#2}{#3}
	\foreach \vertex in {A,...,D}{
		\draw
			(d\vertex)--node(temp)[yev#4]{$#5$}++(-90:.7)
		;
		\ifthenelse{\equal{#5}{+}}
		{
			\node at (temp)[fcil, ciliatednode={90}{3}]{\phantom{$+$}};
		}
	}
}
\newcommand\bottomfacedang[3]{
	\cubecoord{#1}{#2}{#3}
	\node[dvertex](cbottom) at (barycentric cs:A=1,B=1,C=1,D=1){};
	\draw (cbottom)--++(-90:.5) ;
}
\newcommand\bottomfaceextend[4]{
	\cubecoord{#1}{#2}{#3}
	\node[dvertex](cbottom) at (barycentric cs:A=1,B=1,C=1,D=1){};
	\draw[\extendtype,\extendclr] (cbottom)--++(-90:\extend) ;
	\ifthenelse{\equal{#4}{+}} {\node at (cbottom)[fcil, extendciliatednode={90}{3}]{\phantom{$+$}}; }
}

\newcommand\leftdang[3]{
	\cubecoord{#1}{#2}{#3}
	\foreach \vertex in {A,D,E,H}{
		\draw(d\vertex)--++(180:.3);
	}
}	
\newcommand\leftextend[4]{
	\cubecoord{#1}{#2}{#3}
	\foreach \vertex in {A,D,E,H}{
		\draw[\extendtype,\extendclr](d\vertex)--++(180:\extend) ;
		\ifthenelse{\equal{#4}{+}} { \node at (d\vertex)[fcil, extendciliatednode={180}{3}]{\phantom{$+$}}; }
	}
}
\newcommand\longleftdang[5]{
	\cubecoord{#1}{#2}{#3}
	\foreach \vertex in {A,D,E,H}{
		\draw(d\vertex)--node(temp)[xev#4]{$#5$}++(180:.7);
		\ifthenelse{\equal{#5}{+}}
		{
			\node at (temp)[fcil, ciliatednode={0}{3}]{\phantom{$+$}};
		}
	}
}
\newcommand\leftfacedang[3]{
	\cubecoord{#1}{#2}{#3}
	\node[dvertex](cleft) at (barycentric cs:A=1,D=1,E=1,H=1){};
	\draw (cleft)--++(180:.5) ;
}
\newcommand\leftfaceextend[4]{
	\cubecoord{#1}{#2}{#3}
	\node[dvertex](cleft) at (barycentric cs:A=1,D=1,E=1,H=1){};
	\draw[\extendtype,\extendclr] (cleft)--++(180:\extend) ;
	\ifthenelse{\equal{#4}{+}} {\node at (cleft)[fcil, extendciliatednode={180}{3}]{\phantom{$+$}}; }
}

\newcommand\rightdang[3]{
	\cubecoord{#1}{#2}{#3}
	\foreach \vertex in {B,C,F,G}{
		\draw(d\vertex)--++(0:.3);
	}
}
\newcommand\rightextend[4]{
	\cubecoord{#1}{#2}{#3}
	\foreach \vertex in {B,C,F,G}{
		\draw[\extendtype,\extendclr](d\vertex)--++(0:\extend) ;
		\ifthenelse{\equal{#4}{+}} { \node at (d\vertex)[fcil, extendciliatednode={0}{3}]{\phantom{$+$}}; }
	}
}
\newcommand\longrightdang[5]{
	\cubecoord{#1}{#2}{#3}
	\foreach \vertex in {B,C,F,G}{
		\draw(d\vertex)--node[xev#4]{$#5$}++(0:.7);
	}
}

\newcommand\rightfaceextend[4]{
	\cubecoord{#1}{#2}{#3}
	\node[dvertex](cright) at (barycentric cs:B=1,C=1,F=1,G=1){};
	\draw[\extendtype,\extendclr] (cright)--++(00:.5) ;
	\ifthenelse{\equal{#4}{+}} {\node at (cright)[fcil, extendciliatednode={0}{3}]{\phantom{$+$}}; }
}

\newcommand\frontdang[3]{
	\cubecoord{#1}{#2}{#3}
	\foreach \vertex in {A,B,E,F}{
		\draw(d\vertex)--++(180+\angle:.3);
	}
}
\newcommand\frontextend[4]{
	\cubecoord{#1}{#2}{#3}
	\foreach \vertex in {A,B,E,F}{
		\draw[\extendtype,\extendclr](d\vertex)--++(180+\angle:.3);
		\ifthenelse{\equal{#4}{+}} {\node at (d\vertex)[fcil, extendciliatednode={180+\angle}{3}]{\phantom{$+$}}; }
	}
}

\newcommand\frontfaceextend[4]{
	\cubecoord{#1}{#2}{#3}
	\node[dvertex](cfront) at (barycentric cs:A=1,B=1,E=1,F=1){};
	\draw[\extendtype,\extendclr] (cfront)--++(180+\angle:\extend) ;
	\ifthenelse{\equal{#4}{+}} {\node at (cfront)[fcil, extendciliatednode={180+\angle}{3}]{\phantom{$+$}}; }
}

\newcommand\backextend[4]{
	\cubecoord{#1}{#2}{#3}
	\foreach \vertex in {C,D,G,H}{
		\draw[\extendtype,\extendclr](d\vertex)--++(\angle:\extend) ;
		\ifthenelse{\equal{#4}{+}} { \node at (d\vertex)[fcil, extendciliatednode={\angle}{3}]{\phantom{$+$}}; }
	}
}
\newcommand\longbackdang[5]{
	\cubecoord{#1}{#2}{#3}
	\foreach \vertex in {C,D,G,H}{
		\draw(d\vertex)--node(temp)[zev#4]{$#5$}++(\angle:.7);
		\ifthenelse{\equal{#5}{+}}
		{
			\node at (temp)[fcil, ciliatednode={\angle}{3}]{\phantom{$+$}};
		}
	}
}
\newcommand\backfacedang[3]{
	\cubecoord{#1}{#2}{#3}
	\node[dvertex](center) at (barycentric cs:C=1,D=1,G=1,H=1){};
	\draw (center)--++(\angle:.5) ;
}
\newcommand\backfaceextend[4]{
	\cubecoord{#1}{#2}{#3}
	\node[dvertex](cback) at (barycentric cs:C=1,D=1,G=1,H=1){};
	\draw[\extendtype,\extendclr] (cback)--++(\angle:\extend) ;
	\ifthenelse{\equal{#4}{+}} { \node at (cback)[fcil, extendciliatednode={\angle}{3}]{\phantom{$+$}}; }
}

\newcommand\cubecoordmid[3]{
	\path[shift={(0:#1)},shift={(90:#2)}, shift={(\angle:#3*\cosine)}]
		(0,0)  coordinate (A)  (1,0)     coordinate (B) %
		+(\angle:\cosine) coordinate (C)  (\angle:\cosine) coordinate (D)
		(0,1)  coordinate (E) +(\angle:\cosine) coordinate (H)   %
		(1,1)  coordinate (F) +(\angle:\cosine) coordinate (G)  ;
	\path
		(A) -- node(ab)[dvertex]{}(B) (A)--node(ae)[dvertex]{}(E) (A)--node(ad)[dvertex]{}(D)
		(F) -- node(bf)[dvertex]{}(B) (F)--node(ef)[dvertex]{}(E) (F)--node(fg)[dvertex]{}(G)
		(C) -- node(bc)[dvertex]{}(B) (C)--node(cd)[dvertex]{}(D) (C)--node(cg)[dvertex]{}(G)
		(H) -- node(dh)[dvertex]{}(D) (H)--node(eh)[dvertex]{}(E) (H)--node(gh)[dvertex]{}(G)
		;
}

\newcommand\topextendmid[4]{
	\cubecoordmid{#1}{#2}{#3}
	\foreach \vertex in {ef,fg,gh,eh}{
		\draw[\extendtype,\extendclr] (\vertex)	--++(90		:\extend) ;
	}
}
\newcommand\bottomextendmid[4]{
	\cubecoordmid{#1}{#2}{#3}
	\foreach \vertex in {ab,ad,bc,cd}{
		\draw[\extendtype,\extendclr] (\vertex)	--++(-90		:\extend) ;
	}
}
\newcommand\leftextendmid[4]{
	\cubecoordmid{#1}{#2}{#3}
	\foreach \vertex in {ae,ad,dh,eh}{
		\draw[\extendtype,\extendclr] (\vertex)	--++(180		:\extend) ;
	}
}
\newcommand\rightextendmid[4]{
	\cubecoordmid{#1}{#2}{#3}
	\foreach \vertex in {bc,bf,fg,cg}{
		\draw[\extendtype,\extendclr] (\vertex)	--++(0		:\extend) ;
	}
}
\newcommand\frontextendmid[4]{
	\cubecoordmid{#1}{#2}{#3}
	\foreach \vertex in {ab,ae,bf,ef}{
		\draw[\extendtype,\extendclr] (\vertex)	--++(180+\angle	:\extend) ;
	}
}
\newcommand\backextendmid[4]{
	\cubecoordmid{#1}{#2}{#3}
	\foreach \vertex in {cd,cg,dh,gh}{
		\draw[\extendtype,\extendclr] (\vertex)	--++(\angle	:\extend) ;
	}
}

\tikzset{dvertex/.style={
			draw=none, 
			minimum size=3mm,
			fill=none,
			inner sep = 0pt},
}

\tikzset{fvertex/.style={
				draw, minimum size=3mm, fill=white, inner sep = 0pt},
}
\tikzset{fcil/.style={
				draw=none, minimum size=3mm, fill=none, inner sep = 0pt},
}

\tikzset{halfedge/.style={
				draw, minimum height=0mm, minimum width=1.5mm, inner sep = 0pt},
}

\tikzset{kappa/.style={
				draw, fill=white, minimum size=2mm, inner sep = .3mm, font=\small,
			},
}

\def\clr{}
\def\linetype{}
\def\subscale{1}	
\def\dist{1}	
\tikzset{vcaption/.style={node distance=.0cm, font=\scriptsize} }

\makeatletter
\newlength\@SizeOfCirc%
\newcommand{\CricArrowRight}[1]{%
	\setlength{\@SizeOfCirc}{\maxof{\widthof{#1}}{\heightof{#1}}}%
	\tikz [>=angle 90, x=1.0ex,y=1.0ex,line width=.1ex, draw=black]%
	\draw [->,anchor=center]%
		node[fill=none, draw=none] (0,0) {#1}%
		(0,1.0\@SizeOfCirc) arc (85:-240:1.0\@SizeOfCirc);%
}%
\newcommand{\clockwise}[1]{%
	\setlength{\@SizeOfCirc}{\maxof{\widthof{#1}}{\heightof{#1}}}%
	\tikz [>=angle 90, x=1.0ex,y=1.0ex,line width=.1ex, draw=black]%
	\draw [->,anchor=center]%
		node[fill=none, draw=none] (0,0) {#1}%
		(0,1.0\@SizeOfCirc) arc (85:-240:1.0\@SizeOfCirc);%
}%
\newcommand{\counterclockwise}[1]{%
	\setlength{\@SizeOfCirc}{\maxof{\widthof{#1}}{\heightof{#1}}}%
	\begin{tikzpicture}[>=angle 90, x=1.0ex,y=1.0ex,line width=.1ex, draw=black]%
	\draw [->,anchor=center]%
		node[fill=none, draw=none] (0,0) {#1}%
		(0,1.0\@SizeOfCirc) arc (-275:40:1.0\@SizeOfCirc);%
	\end{tikzpicture}
}%
\makeatother

\def\clocksize{4}

\def\angend{45}
\newcommand{\clockxy}[1]{%
	\tikz[>=angle 90, x=1.0ex,y=1.0ex,line width=.1ex, draw=black]%
	\draw [->,anchor=center]%
		node[fill=none, draw=none] (0,0) {#1}%
		(0,{1*\clocksize}) arc (0:\angend-360:{.75*\clocksize});%
}%
\newcommand{\clockxz}[1]{%
	\tikz[>=angle 90, x=1.0ex,y=1.0ex,line width=.1ex, draw=black]%
	\draw [->,anchor=center]%
		node[fill=none, draw=none] (0,0) {#1}%
		(0,{.5*\clocksize}) arc (0:\angend-360:{\clocksize*\cosine} and {\clocksize*\sine});%
}%
\newcommand{\clockyz}[1]{%
	\tikz[>=angle 90, x=1.0ex,y=1.0ex,line width=.1ex, draw=black]%
	\draw [->,anchor=center]%
		node[fill=none, draw=none] (0,0) {#1}%
		(0,{.5*\clocksize}) arc (0:\angend-360:{\clocksize*\cosine} and {\clocksize*\sine});%
}%

\tikzstyle ciliation=[circle,draw,fill=white,inner sep=0pt, minimum size=#1]
\tikzset{ ciliatednode/.style 2 args
	={ pin={[pin distance=0mm, ciliation=#2]#1:{}} }
}
\tikzstyle gciliation=[gray, circle,draw,fill=white,inner sep=0pt, minimum size=#1]
\tikzset{ gciliatednode/.style 2 args
	={ pin={[pin distance=0mm, gciliation=#2]#1:{}} }
}
\tikzstyle extendciliation=[\extendclr, circle,draw,fill=white,inner sep=0pt, minimum size=#1]
\tikzset{ extendciliatednode/.style 2 args
	={ pin={[pin distance=0mm, extendciliation=#2]#1:{}} }
}

\tikzset{ --|/.style={
  decoration={ markings,
  mark=at position 1 with {\pgftransformscale{1.5}\arrowreversed[line width=0mm]{|}}},
  postaction={decorate}
  }
}

\usepackage{soul}

\usepackage{cite}

\newcommand\C{\mathbb{C}}
\newcommand\Z{\mathbb{Z}}

\newcommand{\x}{x}
\newcommand{\y}{y}
\newcommand{\z}{z}

\excludecomment{commentold}
\excludecomment{comment}

\begin{document}

\title{A Factor-Graph Approach to Algebraic Topology,
       With Applications to Kramers--Wannier Duality}

\author{Ali Al-Bashabsheh 
        and 
        Pascal O.\ Vontobel,~\IEEEmembership{Senior Member,~IEEE}%
	\thanks{
		Manuscript received June 30, 2016; revised June 22, 2017 and December 27, 2017; accepted April 5, 2018.
		This paper was presented in part at the 
                2015 IEEE International Symposium on Information 
                Theory~\cite{av:kramers--wannier}.}
	\thanks{A.~Al-Bashabsheh was with the Institute of Network Coding,
	        The Chinese University of Hong Kong, Shatin, Hong Kong.
                He is now with
					 the Beijing Advanced Innovation Center for Big Data
					 and Brain Computing, Beihang University, Beijing, 100191
                (email: entropyali@gmail.com).}
	\thanks{P.~O.~Vontobel is with the 
                Department of Information Engineering
                and the Institute of Theoretical Computer Science 
                and Communications,
                The Chinese University of Hong Kong, Shatin, Hong Kong
                (email: pascal.von\-tobel@ieee.org).}
	\thanks{This work was supported in part by a grant from
	        the University Grants Committee of the Hong Kong SAR, 
                China (Project No.~{AoE/E-02/08)}.}
}

\maketitle

\begin{abstract}
  Algebraic topology studies topological spaces with the help of tools from
  abstract algebra. The main focus of this paper is to show that many concepts
  from algebraic topology can be conveniently expressed in terms of (normal)
  factor graphs. As an application, we give an alternative proof of a
  classical duality result of Kramers and Wannier, which expresses the
  partition function of the two-dimensional Ising model at a low temperature in
  terms of the partition function of the two-dimensional Ising model at a high
  temperature.  Moreover, we discuss analogous results for the
  three-dimensional Ising model and the Potts model.
\end{abstract}

\begin{IEEEkeywords}
  Algebraic topology,
  boundary operator,
  chain complex,
  graphical models,
  factor graphs,
  factor-graph duality,
  Ising model,
  Kramers--Wannier duality,
  partition function,
  Potts model.
\end{IEEEkeywords}

\section{Introduction}
\label{sec:into}

General topology (also known as point-set topology) studies
transformation-invariant properties of topological spaces. For example, a
mug-shaped object can be smoothly transformed into a doughnut-shaped object,
and so general topology does not study the exact details of a given mug-shaped
object, but only the properties that are maintained after the mug-shaped
object has been smoothly transformed into a doughnut-shaped object.

A particular approach to topology is based on abstract algebra and results in
what is known as algebraic topology (see, e.g., \cite{bamberg:course,
  hatcher:topology, munkres:algebraic-topology}). The power of algebraic
topology comes from the vast amount of results available in abstract algebra,
in particular about vector spaces.\footnote{More generally, algebraic topology
  can be formulated in terms of modules, however, for our purposes vector
  spaces are general enough. (Recall that a module is, roughly speaking, a
  generalization of a vector space where the scalar multiplication by an
  element in some field is replaced by scalar multiplication by an element in
  some ring.)} As we will see in this paper, central objects of algebraic
topology are vector spaces associated with topological spaces, along with
boundary operators and coboundary operators defined on these vector
spaces. Homology is then the study of certain quotient spaces based on images
and kernels of the boundary operators. On the other hand, cohomology is the
study of certain quotient spaces based on images and kernels of the coboundary
operators. The importance of homology and cohomology comes from the fact that
the dimensions of the above-mentioned quotient spaces are invariants of a
topological space.

In this paper we show how these objects can be conveniently represented with
the help of normal factor graphs~\cite{frank:factor, loeliger:intro,
  forney:normal}. Besides this representation being of inherent interest, the
power of this approach comes from the fact that one can apply various known
results from the factor-graph literature to study the resulting factor graphs,
in particular, one can apply various duality results (see, e.g.,
\cite{forney:normal, ay:nfg-hol}).

Of particular interest in this paper are topological spaces where the
above-mentioned quotient spaces are trivial or low dimensional.
\begin{itemize}

\item Consider first the case where one of these quotient spaces is trivial,
  which implies that certain two vector spaces are equal. This equality of two
  vector spaces is interesting because these two vector spaces have typically
  rather different looking representations in terms of factor graphs. Now,
  assume to have a factor graph representing one of the two vector spaces. The
  above observation allows one to replace this factor graph by a factor graph
  representing the other vector space. As mentioned in the next paragraph,
  such a replacement can be beneficially used in the study of certain types of
  factor graphs.

\item Consider now the case where one of the above-mentioned quotient spaces
  is low dimensional, but not trivial. Similar observations as above can be
  made, however, one factor graph is now replaced by a small number of factor
  graphs. As we will see, also this replacement can be beneficially used in
  the study of certain types of factor graphs.

\end{itemize}

As an application of our factor-graph approach to algebraic topology, we show
how the Kramers--Wannier duality~\cite{kramers-wannier} (see also~\cite{savit:duality,
druhl:duality}), which expresses the partition function of
the two-dimensional Ising model \cite{ising1925} at a low temperature in terms of the partition
function of the two-dimensional Ising model at a high temperature, can be
re-proven with the tools introduced in this paper. As a quick preview,
Fig.~\ref{fig:kw-duality:simplified:1} summarizes our approach to proving the
Kramers--Wannier duality: first, we will apply a ``Fourier transform'' step
(which amounts to using Fourier duality results for factor graphs) and then we
will apply a ``change of support NFG'' step (which amounts to using the
observations made in the previous paragraph). In our opinion, the resulting
proof is easier than existing proofs and gives additional insights.

The fact that one can express the partition function of a statistical model at a
low temperature in terms of the partition function of the same or another
statistical model at a high temperature is very valuable because frequently it
is easier to simulate systems at higher temperatures than at lower
temperatures. Overall, note that the Kramers--Wannier duality fits into a more
general theme in physics, where some properties of some model are expressed as
some other properties of some other model. A very famous, somewhat recent,
example is the anti-de Sitter~/ conformal field theory correspondence, also
known as Maldacena duality or gauge~/ gravity duality~\cite{maldacena:98:1}.

On the side, let us point out that the factor-graph approach to algebraic
topology discussed in this paper has recently been used beneficially to study
stabilizer quantum codes~\cite{Li:Vontobel:16:1}. Moreover, a recently
submitted paper by Forney~\cite{Forney:algebraic:topology:notes} discusses
additional results on how to express objects from algebraic topology in terms
of normal factor graphs, with a particular emphasis on coding theory; that
paper was motivated by~\cite{av:kramers--wannier} and an earlier version of
the present paper.

This paper is structured as follows. In Section~\ref{sec:nfg}, we review the
basics of normal factor graphs (NFGs), in particular also how to obtain the
Fourier transform of an NFG. A central part of the present paper are
Sections~\ref{sec:one:complex}--\ref{sec:two:complex}, where we review notions
from algebraic topology and then show how they can be expressed in terms of
NFGs. In Section~\ref{sec:statistical}, we review statistical models, in
particular the Boltzmann distribution. We then combine the results of the
previous sections toward re-proving the Kramers--Wannier duality for the
two-dimensional Ising model in Section~\ref{sec:kw}. Finally, we consider
extensions of Kramer--Wannier type duality results to the three-dimensional
Ising model and to the two-dimensional Potts model \cite{potts1952} (see also, e.g.,
\cite{wu:Potts})
in Section~\ref{sec:ext},
and conclude the paper in Section~\ref{sec:conclusion:1}.

Throughout this paper, we use calligraphic letters to primarily denote
sets. Moreover, $\mathbb{F}$ is some fixed finite field, $\R$ is the field of
real numbers, and $\mathbb{C}$ is the field of complex numbers.

\section{Normal Factor Graphs}
\label{sec:nfg}

In this section we review, as far as needed for this paper, the basics of
normal factor graphs (NFGs) and their Fourier transform. For further
background on these topics we refer the interested reader
to~\cite{frank:factor, loeliger:intro, forney:normal, ay:nfg-hol,
  forney-pascal:partition-functions}.

\subsection{NFG definition}
\label{sec:nfg:definition:1}

We use the following notation. Let $\E$ and $\X_e$, $e \in \E$, be some finite
sets. Based on these sets, we define $\X^{\E} \defeq \prod_{e \in \E}
\X_e$. Frequently, an element of $\X^{\E}$ is denoted by $x_{\E}$ and referred
to as a \emph{configuration}. For any configuration $x_{\E}$ and any nonempty
$\I \subseteq \E$, we use $x_{\I}$ to denote the components of $x_{\E}$ that
are indexed by $\I$, i.e., $x_{\I} \defeq (x_i : i \in \I)$.

\begin{definition}[Normal factor graph]

  Consider an undirected (finite) graph with vertex set $\vnfg$ and edge set
  $\E$. Based on this graph, we construct an NFG $\N \defeq (\vnfg, \E)$ as
  follows:
  \begin{itemize}

  \item with each edge $e \in \E$ we associate a finite alphabet $\X_e$ and a variable $x_e$ taking
	  values from $\X_e$;

  \item and with each vertex $f \in \vnfg$ we associate a (local) function
    $f(x_{\E(f)})$, where $\E(f)$ is the set of edges incident on $f$. The
    degree of a local function $f$ is defined as the degree of $f$, i.e.,
    the number of edges incident on $f$. (Note that we use $f$ as the
    label of the function node and as the name of the local function.)

  \end{itemize}

\item We have two types of edges in $\E$: half-edges, each of which is
  incident on one vertex, and full-edges, each of which is incident on two
  vertices. In the following, we will use $\D$ to denote the set of half-edges
  and $\E \setminus \D$ to denote the set of full-edges. We define
  \begin{itemize}

  \item the global function of the NFG $\N$ to be
    \begin{align*}
      f_{\N}(x_{\E})
        &\defeq\prod_{f \in \vnfg} f(x_{\E(f)}) \eqpunc ,
    \end{align*}

  \item and the exterior function of $\N$ to be
    \begin{align*}
      Z_{\N}(x_{\D})
        &\defeq 
           \sum_{x'_{\E} : \, x'_{\D} = x_{\D}} 
             f_{\N}(x'_{\E}),
               \quad x_{\D} \in \X_{\D} \eqpunc .
    \end{align*}
    If $\D$ is empty, then $Z_{\N}$ is a constant that is called the partition
    function of the NFG $\N$. (Clearly, such a constant depends on the local
    functions. We often allow some of the local functions to depend on some
    parameter(s), and so $Z_{\N}$ is a function of such parameter(s).)

  \end{itemize}
  Throughout the paper, local functions will take on values in $\mathbb{R}$ or
  $\mathbb{C}$. With that, the global function and the exterior function
  will also take on values in $\mathbb{R}$ or $\mathbb{C}$.
  \qedd
\end{definition}

Subsequently, we make the following assumption about the variable
alphabets. (Many of the upcoming results can be suitably generalized to other
scenarios.)

\begin{assumption}
  \label{assumption:alphabet:1}

  We assume $\X_e \defeq \X$, $e\in \E$, where $\X$ is some finite field
  $\F$.
  \qeda
\end{assumption}

Of common use in the context of NFGs are indicator functions, which are
zero/one-valued functions. The following two indicator functions are of
particular interest:
\begin{itemize}

\item the equality indicator function $\delta_{=}$, which evaluates to
  one if all its arguments are equal, and to zero otherwise;

\item the parity indicator function $\delta_{+}$, which evaluates to
  one if the sum of its arguments is zero, and to zero otherwise. 

\end{itemize}
The two indicator functions above are marked in the NFG by drawing an ``$=$''
or a ``$+$'' symbol, respectively, inside the corresponding vertices.  Note
that an indicator function may be viewed as a constraint on its arguments, and
so, for a configuration for which the indicator function evaluates to one, we
may say that the indicator function is satisfied.

We say that an NFG $\N$ represents a function $f$ if $Z_{\N}$ equals $f$ up to
a (positive) scaling factor, where, whenever possible, we keep track of such a
scaling factor in the explicit expression relating $f$ and $Z_{\N}$. Let $\Y$
be some subset of $\X^{\D}$ and let $\delta_{\Y}$ be its indicator
function. If $\N$ represents $\delta_{\Y}$, then, for brevity, we say that
$\N$ represents the set $\Y$.

Finally, a small circle marking an edge $e$ incident on a vertex $f$ is
adopted to mean that $x_{e}$ appears negated in the local function $f$, i.e.,
such a circle is a shorthand notation of the parity indicator function of
degree two.%
\footnote{If the characteristic of $\X$ is $2$, such markings on the NFG are
  not necessary since $x = -x$ for all $x \in \X$. However, even for setups
  where the characteristic of $\X$ is $2$ (see, e.g., Section~\ref{sec:kw}
  about the Ising model), we choose this convention in order to make the
  discussion general.}

\begin{example}
  \label{example:nfg:introduction:1}

  Fig.~\ref{fig:eg-nfg}(a) shows an NFG $\N$, whose global func\-tion is given
  by
  \begin{align*}
    f_{\N}(x_{1}, \ldots, x_{45}) 
      &= f_1(x_1,x_{12},x_{15}) \cdot f_{3}(x_3,x_{23},x_{34}) \\
      &\quad\quad
         \cdot \,
         f_4(-x_{24},x_{34},x_{45})
         \cdot
         \delta_{+}(x_{12},x_{23},x_{24}) \\
      &\quad\quad
         \cdot \,
         \delta_{=}(x_{5},x_{15},x_{45})
         \eqpunc .
  \end{align*}
  From the definition of the exterior function it follows that 
  \begin{align*}
    Z_{\N}(x_{1}, \, x_{3}, \, x_{5}) 
      &= \sum_{x_{12},x_{23},x_{34}}
           f_1(x_1,x_{12},x_{5})
           \cdot
           f_{3}(x_3,x_{23},x_{34}) \\
      &\quad\quad\quad\quad\quad\quad\quad
           \cdot \,
           f_4(x_{12}\!+\!x_{23},x_{34},x_5) \eqpunc .
  \end{align*}
  \qede
\end{example}

\begin{figure}[t]
	\def\dist{1.3cm}
	\centering
	\def\xshift{4.5}
	\def\yshift{-3.5}
	\begin{tikzpicture}
          \node(a) at (0,0){\input{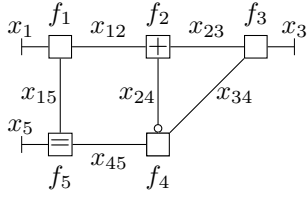}};
	\end{tikzpicture}
	\caption[aaa]{NFG $\N$ discussed in Example~\ref{example:nfg:introduction:1}.}
	\label{fig:eg-nfg}
\end{figure}

\begin{example}
  \label{example:nfg:introduction:2}
  As another example,
  Fig.~\ref{fig:nfg:fourier:transform:1}(a) shows an NFG $\N$
  whose global function is given by
  \begin{align*}
	  f_{\N}(&x_1, \dots, x''_{\mathrm{d}}) =
	  f_1(x_1) \cdot
	  f_2(x_2) \cdot
	  f_3(x_3) \cdot
	  f_4(x_4) \cdot
	  f_5(x_5) \cdot
	  \\ &
	  \delta_{=}(x_{\rm{a}},x'_{\rm{a}}) \cdot
	  \delta_{=}(x_{\rm{b}},x'_{\rm{b}},x''_{\rm{b}}) \cdot
	  \delta_{=}(x_{\rm{c}},x'_{\rm{c}}) \cdot
	  \delta_{=}(x_{\rm{d}},x'_{\rm{d}},x''_{\rm{d}}) \cdot
	  \\ &
	  \delta_{+}(x_1,x'_{\rm{a}},-x_{\rm{d}}) \cdot
	  \delta_{+}(x_2,x_{\rm{a}},-x_{\rm{b}}) \cdot
	  \delta_{+}(x_3,x'_{\rm{b}},-x'_{\rm{d}}) \cdot
	  \\ &
	  \delta_{+}(x_4,x''_{\rm{b}},-x'_{\rm{c}}) \cdot
	  \delta_{+}(x_5,x_{\rm{c}},-x''_{\rm{d}}) \eqpunc{.}
  \end{align*}
  One can verify that the NFG's exterior function (which here is also the
  NFG's partition function) is given by
	\begin{align*}
		Z_{\N}= \sum_{x_{\rm{a}}, x_{\rm{b}}, x_{\rm{c}}, x_{\rm{d}}}
		& f_{1}(x_{\rm{d}}-x_{\rm{a}})\cdot
		f_{2}(x_{\rm{b}}-x_{\rm{a}})\cdot
		f_{3}(x_{\rm{d}}-x_{\rm{b}})\cdot
		\\ &
		f_{4}(x_{\rm{c}}-x_{\rm{b}})\cdot
		f_{5}(x_{\rm{d}}-x_{\rm{c}}) \eqpunc .
	\end{align*}
  \qede
\end{example}

In this work, except for the NFG in Example~\ref{example:nfg:introduction:1}, the NFGs we consider are of a particular form. 

\begin{assumption}
  \label{assumption:nfg:1}

  The class of NFGs we are interested in is such that every local function is a parity indicator
  function or an equality indicator function. In addition to such functions and the edges
  connecting them, the NFG also contains degree-one (real or complex valued) functions attached to some
  of the indicator functions. 
  At some places we may replace every degree-one function with a half edge as detailed below.
    \qeda

\end{assumption}

For ease of reference, we define an \emph{interaction function} to be a
degree-one local function that is not an indicator function. The motivation
for this terminology will become evident as we progress.

\begin{example}
  \label{example:special:type:nfg:example:1}
	Figs.~\ref{fig:nfg:fourier:transform:1}(a) and (b) show two examples of the NFGs we will see in
	this work. Both NFGs satisfy Assumption~\ref{assumption:nfg:1}.
  \qede
\end{example}

Given an NFG $\N = (\vnfg,\E)$ as in Assumption~\ref{assumption:nfg:1},
we make the following definitions:
\begin{itemize}

\item The support NFG of $\N$, sometimes denoted $\N^{\circ} = (\vnfg^{\circ},
  \E^{\circ})$, is the NFG obtained from $\N$ by cutting out all the
  interaction functions, i.e., by replacing each interaction function (and its
  incident edge) with a half-edge. Note that $\E^{\circ} = \E$, where we
  assume that full-edges that became half-edges kept their label. The
  half-edges of $\N^{\circ}$ will be denoted by $\D^{\circ}$.

\item The set of (global) \emph{valid} configurations, denoted by $\B_{\N}$,
  is the set of configurations $x_{\E}$ that satisfy all the indicator
  functions in $\N$.
  (Or, equivalently, all the indicator functions in $\N^{\circ}$.)

\item The set of projected valid configurations, denoted by $\setC_{\N}$, is
  the projection of the elements of $\B_{\N}$ onto the half-edges of the
  support NFG of $\N$, i.e.,
  \begin{align}
    \setC_{\N}
      &\defeq
         \left\{ 
           x_{\D^{\circ}}: x \in \B_{\N}
         \right\} \eqpunc ,
           \label{eq:set:proj:valid:configuration:1}
  \end{align}
\end{itemize}
(If the NFG $\N$ is clear from the context, then $\B_{\N}$ and $\setC_{\N}$
might be simplified to $\B$ and $\setC$, respectively.)

\begin{example}
  \label{example:special:type:nfg:example:2}

  We continue Example~\ref{example:special:type:nfg:example:1}. The support
  NFGs of the NFGs in Figs.~\ref{fig:nfg:fourier:transform:1}(a) and~(b) are
  shown in Figs.~\ref{fig:nfg:fourier:transform:1}(c) and~(d), respectively.
  \qede
\end{example}

The importance of the support NFG and the set of projected valid
configurations will become clear from subsequent discussions. Note that for
any valid configuration $x_{\E}$, the value of the global function, i.e.,
$f_{\N}(x_{\E})$, depends only on the part of $x_{\E}$ indexed by the edges
incident on the interaction functions.

\begin{figure}[t]
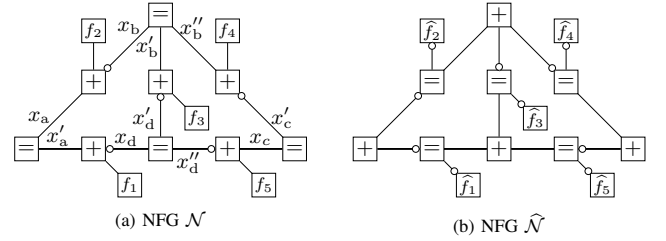

	\centering
  \def\xshift{4.5cm}
  \def\yshift{-3.5cm}
  \begin{tikzpicture}
    \def\xone{} \def\xtwo{} \def\xthree{} \def\xfour{} \def\xfive{}
    \def\fone{} \def\ftwo{} \def\fthree{} \def\ffour{} \def\ffive{}
    \def\dist{2.8cm}
    \def\distin{\dist*.5}
    \def\distout{\dist*.27}
    \def\s{.9}
    \small
    \def\angletri{45}
    \def\dist{2.8cm}
    \def\fone{$f_1$} \def\ftwo{$f_2$} \def\fthree{$f_3$} 
    \def\ffour{$f_4$} \def\ffive{$f_5$}
    \node(c)[scale=\s] at (0,0)					
      {\input{figtex/support/support1-labeled.tex}};
    \def\fone{$\widehat{f}_1$} \def\ftwo{$\widehat{f}_2$} 
    \def\fthree{$\widehat{f}_3$}
    \def\ffour{$\widehat{f}_4$} \def\ffive{$\widehat{f}_5$}
    \node(d)[scale=\s] at (\xshift,0)		
      {\input{figtex/support/support1ft.tex}};
    \def\dist{2.8cm}
    \def\fone{} \def\ftwo{} \def\fthree{} \def\ffour{} \def\ffive{}
    \node(g)[scale=\s] at (0*\xshift,1*\yshift)
      {\input{figtex/support/support1s.tex}};
	 \def\fone{} \def\ftwo{} \def\fthree{} \def\ffour{} \def\ffive{}
		\node(h)[scale=\s] at (1*\xshift,1*\yshift) 
                {\input{figtex/support/support1sft.tex}};
    \node(c)[vcaption, below = of c]{(a) NFG $\N$};
    \node(d)[vcaption, below = of d]{(b) NFG $\widehat{\N}$};
    \node(g)[vcaption, below = of g]{(c) The support NFG of $\N$};
    \node(h)[vcaption, below = of h]{(d) The support NFG of $\widehat{\N}$};
  \end{tikzpicture}
  \caption{NFGs discussed in Examples~\ref{example:special:type:nfg:example:1},
    \ref{example:special:type:nfg:example:2},
    and~\ref{example:special:type:nfg:example:3}.}
  \label{fig:nfg:fourier:transform:1}
\end{figure}

\subsection{Pairwise interaction NFG}
\label{sec:interaction}

As we mentioned in the previous subsection, any NFG in this work will be
according to Assumption~\ref{assumption:nfg:1}.
We also saw in Fig.~\ref{fig:nfg:fourier:transform:1} some NFGs representative
of the ones we will see in this work. As discussed in the next subsection, the
NFGs in Figs.~\ref{fig:nfg:fourier:transform:1}(a) and~(c) are intimately
related to the ones in Figs.~\ref{fig:nfg:fourier:transform:1}(b) and~(d),
respectively.
Namely, given one of the NFGs, the other NFG is uniquely determined, where
we not only have an explicit relation between the local functions of the NFGs,
but also an explicit relation between the exterior functions of the NFGs.  This
motivates naming one of the NFGs and treating it as a primary object,
while viewing the other one as a secondary NFG.
For reasons that are briefly motivated below, and will become more apparent in
subsequent sections, we choose NFGs similar to the NFG in
Fig.~\ref{fig:nfg:fourier:transform:1}(a) to be the primary NFGs.

The NFG in Fig.~\ref{fig:nfg:fourier:transform:1}(a) has a simple
interpretation as a physical system. For instance,
\begin{itemize}

\item each equality indicator function can be seen (through the equality
  constraint it imposes on the variables of its incident edges) as a variable
  representing some physical property of, say, a particle, e.g., spin
  orientation;

\item the degree-one functions can be seen as representing the interaction
  between neighbouring particles;

\item and the parity indicator functions can be seen as restricting attention
  to systems in which the interaction between neighbouring particles depends
  only on the difference between the value of the particles' variables.%
  \footnote{One can consider more general interactions where each parity
    indicator function and its degree-one function are replaced by a bivariate
    real or complex function. However, this is beyond the scope of this work.}

\end{itemize}
In another instance,
\begin{itemize}

\item each equality indicator function can be seen as the voltage potential of
  a site in an electrical network;

\item the degree-one functions can be seen as representing the interaction
  between two sites through an electrical component, e.g., a resistor;

\item and the parity indicator functions can be seen as asserting that
  interactions between adjacent sites depend on the voltage differences
  between the two sites.

\end{itemize}
(For more details on electrical networks in this context, we refer the
interested reader to \cite{vontobel:electrical}, where an NFG that resembles
Fig.~\ref{fig:nfg:fourier:transform:1}(a) was referred to as a ``voltage
version'' factor graph.)

Such a physical system of pairwise interactions can be depicted as a
(function) weighted graph $\G = (\V, \A)$ with vertex set $\V$ and directed
edge set $\A$ as shown in Fig.~\ref{fig:support-interaction} for the example
NFG $\N$ in Fig.~\ref{fig:nfg:fourier:transform:1}(a). For such a graph, we
associate with every vertex $v \in \V$ the variable $x_v$, and with every
directed edge $a_i = (u,v) \in \A$ the weight $f_i(x_v - x_u)$, where $f_i$ is the
corresponding interaction function. (The directedness of the edge reflects the
fact that an interaction function might not necessarily be symmetric, i.e.,
$f_i(-x) \neq f_i(x)$, in general.)

With this, the weighted graph $\G$ can be constructed from the NFG $\N$ as
follows.
\begin{itemize}

\item Replace each equality indicator function in $\N$ with a vertex $v$ in
  $\G$ and associate with such a vertex a variable $x_v$.

\item Replace each parity indicator function, its incident edges, and interaction
  function in $\N$ with an edge $a \defeq (u,v)$ in $\G$, where $u$ and $v$
  are the vertices in $\G$ that correspond to the (equality) neighbours of the
  parity indicator function in $\N$. The edge is directed towards the variable
  that appears negated in the parity indicator function. Finally, with $f$
  denoting the interaction function in hand, the edge is assigned the weight
  $f(x_v-x_u)$.

\end{itemize}
It is clear that the above procedure is reversible, i.e., given $\G$, we can
recover $\N$ in the obvious way. Moreover, note that we can now express the
partition function of $\N$ in terms of $\G = (\V,\A)$ as
\[
  Z_{\N} = \sum_{x_{\V}} \prod_{a_i=(u,v)\in \A} f_{i}(x_{v}-x_{u}) \eqpunc{,}
\]
which is the expression given in Example~\ref{example:nfg:introduction:2}
since $\V = \{ v_{\mathrm{a}}, \ldots, v_{\mathrm{d}} \}$ and
$\A=\{a_1,\dots,a_5\}$.

\begin{figure}[t]
	\centering
	\ifonecol
		\def\xshift{6cm}
		\centering
		\else
		\def\xshift{4.5cm}
	\fi
	\def\yshift{-3.5cm}
	\begin{tikzpicture}
	\def\xone{} \def\xtwo{} \def\xthree{} \def\xfour{} \def\xfive{}
	\def\fone{} \def\ftwo{} \def\fthree{} \def\ffour{} \def\ffive{}
	\def\dist{2.8cm}
	\def\distin{\dist*.5}
	\def\distout{\dist*.27}
	\def\s{.9}
	\small
	\def\angletri{45}
		\node(a)[scale=\s] at (0,-\yshift)		{\input{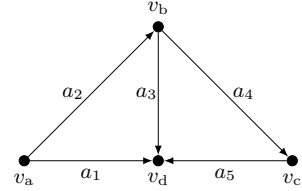}};
	\end{tikzpicture}
	\caption{
		The corresponding weighted directed graph of the NFG in Fig.~\ref{fig:nfg:fourier:transform:1}(a).  
	}
	\label{fig:support-interaction}
\end{figure}

The above discussion motivates the following definition.
\begin{definition}
  \label{assumption:nfg:2}
  An NFG according to Assumption~\ref{assumption:nfg:1} that resembles
  Fig.~\ref{fig:nfg:fourier:transform:1}(a), i.e., can be associated with a
  weighted graph according to the above procedure, is referred to as a
  (pairwise) interaction NFG.%
  \footnote{Similar to the interaction functions attached to parity indicator
    functions, one can also allow degree-one functions to be attached to the
    equality indicator functions. In the system of particles' spins, such
    degree-one functions represent the existence of an external field.  }
  \qeda
\end{definition}
While the main focus of this work is on pairwise interactions, the definition
of an interaction NFG can be easily extended to higher-order interactions.
In this case,
an interaction function is attached to a parity indicator function of a degree that can be larger than three.
We will see an example of such NFGs in
Section~\ref{sec:3d:ising:model:1}.

In contrast to the above, the NFG obtained from an interaction NFG as
described in the next subsection (e.g., the NFG in
Fig.~\ref{fig:nfg:fourier:transform:1}(b)) does not seem to have a physical
interpretation in general. (Note that in terms of electrical networks, such an
NFG can be seen as a ``current version'' factor graph
\cite{vontobel:electrical}, but it can no longer be regarded as a system of
pairwise interactions. In the example of a system of particles' spins, it is
not clear what interpretation can be given to such an NFG.) Nevertheless, as
will be detailed in place, such an NFG will be very useful as an alternative
NFG for approximating the partition function at low temperature (as in
sampling-based methods, which perform poorly on the interaction NFG at low
temperature), or as an intermediate step in mapping an interaction NFG at low
temperature to another interaction NFG at high temperature (as in
Kramers--Wannier duality).

Throughout this paper, we make the following assumption.
\begin{assumption}
  \label{assumption:connectedness:1}

  All graphs and NFGs in this work are connected graphs, unless specified
  otherwise.
\end{assumption}

\color{black}

\subsection{Fourier transform of an NFG}
\label{sec:fourier}

In this section we give a brief review of the Fourier transform of an
NFG~\cite{mao:convolution-fg, ay:nfg-hol, forney-pascal:partition-functions}.

Note that, because of Assumption~\ref{assumption:alphabet:1},
$\X$ is a finite Abelian group w.r.t. addition. Let $\widehat{x}:\X \rightarrow
\C\backslash\{0\}$ be a group homomorphism, i.e., $\widehat{x}(x+y) =
\widehat{x}(x) \cdot \widehat{x}(y)$ for all $x,y\in \X$.  Such a homomorphism
is called a character (on $\X$). Moreover, the set of all characters,
denoted by $\widehat{\X}$, is a group that is isomorphic to $\X$, where for all
$\widehat{x},\widehat{y}\in \widehat{\X}$, the addition in $\widehat{\X}$ is
defined as $(\widehat{x}+\widehat{y})(x) \defeq \widehat{x}(x) \cdot
\widehat{y}(x)$, $x \in \X$. (See, e.g., \cite{lidl:finite-fields}.)

\begin{definition}
  Let $f: \X \rightarrow \C$ be an arbitrary function on $\X$. The Fourier
  transform $\widehat{f}: \widehat{\X} \rightarrow \C$ of $f$ is then defined
  as
  \begin{align}
    \widehat{f}(\widehat{x})
      &\defeq 
         \sum_{x\in \X} 
           f(x) \cdot \widehat{x}(x) \eqpunc .
  	\label{eq:fourier} \\[-1cm]
    \nonumber
  \end{align}
  \qedd
\end{definition}

One can show that $f$ is recoverable from $\widehat{f}$ via\footnote{This
  follows from the well-known fact that for two characters $\widehat{x}_1,
  \widehat{x}_2 \in \widehat{\X}$ the expression $\sum_{x} \widehat{x}_1(x)
  \cdot \widehat{x}_2(-x)$ evaluates to $|\X|$ if $\widehat{x}_1 =
  \widehat{x}_2$ and to $0$ otherwise.}
\begin{align}
  f(x) 
    &= \frac{1}{|\X|}
         \cdot
         \sum_{\widehat{x} \in \widehat{\X}} 
           \widehat{f}(\widehat{x}) \cdot \widehat{x}(-x) \eqpunc .
\end{align}

Let $\Y$ be a subgroup of $\X$ and define
\begin{align*}
  \Y^{\perp}
    &\defeq 
       \Bigl\{ 
         \widehat{x} \in \widehat{\X}
       \Bigm|
         \widehat{x}(x) = 1 \ \text{for all $x \in \Y$}
       \Bigr\} \eqpunc .
\end{align*}
One can verify that ${\Y}^{\perp}$ is a subgroup of $\widehat{\X}$; it is
called the orthogonal subgroup to $\Y$.
Let $\delta_{\Y}$ be the indicator function of $\Y$, i.e.,
for all $x\in \X$, the indicator function defined as 
$\delta_{\Y}(x) = 1$ iff $x\in \Y$.
The following fact regarding
the Fourier transform of $\delta_{\Y}$
will be useful.

\begin{lemma}
  \label{lemma:fourier-group-indicator}

  Let $\Y$ be a subgroup of $\X$. Then it holds that
  \begin{align}
    \widehat{\delta_{\Y}}= |\Y| \cdot \delta_{ {\cal
        Y}^{\perp}} \eqpunc .
	\label{eq:dual-group}
  \end{align}
\end{lemma}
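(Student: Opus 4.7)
The plan is to unwind the definition of the Fourier transform and reduce the lemma to a standard character-sum identity on $\Y$, then dispatch the two cases $\widehat{x} \in \Y^{\perp}$ and $\widehat{x} \notin \Y^{\perp}$ separately. Concretely, I would start by applying~\eqref{eq:fourier}: since $\delta_{\Y}(x)$ vanishes off $\Y$, the defining sum collapses to
\begin{align*}
\widehat{\delta_{\Y}}(\widehat{x})
  \;=\; \sum_{x \in \X} \delta_{\Y}(x) \cdot \widehat{x}(x)
  \;=\; \sum_{x \in \Y} \widehat{x}(x) .
\end{align*}
Thus the lemma is equivalent to the statement that $\sum_{x \in \Y} \widehat{x}(x)$ equals $|\Y|$ when $\widehat{x} \in \Y^{\perp}$ and $0$ otherwise.

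The first case is immediate: by the definition of $\Y^{\perp}$, every summand $\widehat{x}(x)$ is $1$, so the sum is $|\Y|$. For the second case, I would use the classical orbit-averaging trick. Pick $y_0 \in \Y$ with $\widehat{x}(y_0) \neq 1$, which exists because $\widehat{x} \notin \Y^{\perp}$. Since $\Y$ is a subgroup of $\X$, the translation $y \mapsto y + y_0$ is a bijection of $\Y$ onto itself, and the homomorphism property of $\widehat{x}$ gives
\begin{align*}
\sum_{y \in \Y} \widehat{x}(y)
  \;=\; \sum_{y \in \Y} \widehat{x}(y + y_0)
  \;=\; \widehat{x}(y_0) \cdot \sum_{y \in \Y} \widehat{x}(y) .
\end{align*}
Since $\widehat{x}(y_0) \neq 1$, the sum must be zero. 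Combining both cases yields $\widehat{\delta_{\Y}}(\widehat{x}) = |\Y| \cdot \delta_{\Y^{\perp}}(\widehat{x})$ for every $\widehat{x} \in \widehat{\X}$, which is~\eqref{eq:dual-group}.

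The argument is entirely classical and I do not anticipate any real obstacle; the only minor care required is to keep the additive notation on $\X$ (and hence on the subgroup $\Y$) consistent with the multiplicative, $\mathbb{C}$-valued evaluations of characters, so that the bijection $y \mapsto y + y_0$ and the identity $\widehat{x}(y+y_0) = \widehat{x}(y) \cdot \widehat{x}(y_0)$ are invoked correctly. No structure beyond $\Y$ being a subgroup and $\widehat{x}$ being a group homomorphism is used, which is why the same proof would go through verbatim for any finite abelian group in place of $\langle \X, + \rangle$.
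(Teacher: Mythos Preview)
Your argument is correct and is exactly the standard character-sum proof of this classical fact. The paper does not give its own proof but simply refers to~\cite[Ch.~5]{lidl:finite-fields}, so your write-up is a direct elaboration of what the paper leaves to the reference.
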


\begin{proof}
  See, e.g., \cite[Ch.~5]{lidl:finite-fields}.
\end{proof}

\begin{example}
  \label{example:important:fourier:transforms:1}

  Consider an equality indicator function node and a parity indicator function
  node of degree $d$. It is straightforward to verify that there are subgroups
  $\Y_{=}$ and $\Y_{+}$ of $\X^d$ such that $\delta_{=} = \delta_{\Y_{=}}$ and
  $\delta_{+} = \delta_{\Y_{+}}$, respectively. Moreover, because of
  $(\Y_{=})^{\perp} = \Y_{+}$, $|\Y_{=}| = |\X|$, $|\Y_{+}| = |\X|^{d-1}$, and
  Lemma~\ref{lemma:fourier-group-indicator}, we have
	\begin{equation}
	\begin{aligned}
    \label{eq:equality-parity}
    \widehat{\delta_{=}} 
      &= |\X|\cdot \delta_{+} \eqpunc ,
	  \\
    \widehat{\delta_{+}} 
      &= |\X|^{d-1}\cdot \delta_{=} \eqpunc .
	\end{aligned}
	\end{equation}
  \qede
\end{example}

\begin{remark}[Fourier-transformed NFG]

  Given an NFG $\N$, the Fourier-transformed NFG
  $\widehat{\N}$ can be obtained as follows:
  \begin{enumerate}
  
  \item Insert a degree-two parity indicator function on each internal edge.
  
  \item Replace each local function with its Fourier
    transform.\footnote{\label{footnote:fourier:transform:rescaling:1}
      Frequently, we will replace a local (indicator) function not by its Fourier
      transform but by a scaled version of its Fourier transform, and
      separately keep track of the product of all the omitted scaling
      factors.
		For example, $\delta_{=}$ is replaced by $\delta_{+}$ (not by $|\X| \cdot \delta_{+}$) and the omitted scaling factor $|\X|$ is dealt with separately.
		Similarly, $\delta_{+}$ is replaced by $\delta_{=}$ (not by $|\X|^{d-1} \cdot \delta_{=}$) and the omitted scaling factor $|\X|^{d-1}$ is dealt with separately.
	}

  \item As far as needed in order to be explicit, associate suitable variables
    with the edges in $\widehat{\N}$.
  
  \end{enumerate}
  The exterior functions of $\N$ and $\widehat{\N}$ are related as
  follows~\cite{ay:nfg-hol}
  \begin{align}
    \widehat{Z_{\N}}(\widehat{x}_{\D})
      &= \frac{1}
              {\prod_{e \in \E\backslash \D}
                |\X_e|
              }
         \cdot
         Z_{\widehat{\N}}(\widehat{x}_{\D})
           \eqpunc .
             \label{eq:nfg-dual-general}
  \end{align}
  In particular, when there are no half edges, $Z_{\N}$ and $Z_{\widehat \N}$
  are equal, up to a scaling factor, i.e., $ Z_{\N} = \frac{1}{\prod_{e \in
      \E} |\X_e|} \cdot Z_{\widehat{\N}}$. (The expression in
  \eqref{eq:nfg-dual-general} assumes that local functions are replaced by
  their Fourier transform in Step 2. If the procedure outlined in
  Footnote~\ref{footnote:fourier:transform:rescaling:1} is applied, then the
  scaling factor in~\eqref{eq:nfg-dual-general} needs to be suitably
  adjusted.)
  \qede

\end{remark}

\begin{example}
  \label{example:special:type:nfg:example:3}

  Fig.~\ref{fig:nfg:fourier:transform:1}(b) shows the Fourier transform of the
  NFG in Fig.~\ref{fig:nfg:fourier:transform:1}(a). Note that we have taken
  advantage of the procedure outlined in
  Footnote~\ref{footnote:fourier:transform:rescaling:1}.
  \qede
\end{example}

The next theorem is a specialization of \eqref{eq:nfg-dual-general} customized
to the NFGs in this work.
\begin{theorem}
	\label{thm:nfg-dual}
	Let $\N$ be an interaction NFG as in Definition~\ref{assumption:nfg:2} and $\widehat{\N}$ be its
	Fourier-transformed NFG, then
	\begin{align}
		\code_{\N} = (\code_{\widehat{\N}})^{\perp}\eqpunc.
		\label{thm:ft-code}
	\end{align}
	Moreover,  with the scaling factors of the indicator functions omitted in $\widehat{\N}$ (see
	Footnote~\ref{footnote:fourier:transform:rescaling:1}), we have 
	\begin{align}
		Z_{\N} = \frac{1}{|\X|^{|\A|-|\V|}} \cdot Z_{\widehat{\N}}\eqpunc,
		\label{thm:ft-z}
	\end{align}
	or more explicitly,
	\ifonecol
	\begin{align}
		|\X| \cdot \!\!\!
		\sum_{x_{\A}\in \code_{\N}} \prod_{e\in \A}  f_{e}(x_{e})
		=
		\frac{1}{|\X|^{|\A| - |\V|}}  \cdot  \!\!
		\sum_{x_{\A}\in (\code_{\N}\!)^{\perp}} \prod_{e\in \A} \! \widehat{f}_{e}(x_{e})\eqpunc.
		\label{thm:ft-explicit}
	\end{align}
	\else
	\begin{align}
		|\X| \cdot \!\!\!
		\sum_{x_{\A}\in \code_{\N}} \prod_{e\in \A} \! f_{e}(x_{e})
		=
		\frac{1}{|\X|^{|\A| - |\V|}} \cdot \!\!\! \!\!\! \!
		\sum_{x_{\A}\in (\code_{\N}\!)^{\perp}} \prod_{e\in \A} \widehat{f}_{e}(x_{e})\eqpunc,
		\label{thm:ft-explicit}
	\end{align}
	\fi
	where $(\V,\A)$ is the weighted graph obtained from $\N$ using the procedure above Definition~\ref{assumption:nfg:2}.
	\qedb
\end{theorem}
\begin{proof}
	We prove (\ref{thm:ft-code})
	by observing that
	\begin{align*}
		{\delta_{\code_{\widehat{\N}}}}
		&= 
		c\cdot Z_{(\widehat{\N})^{\circ}}
		=
		c\cdot Z_{\widehat{(\N^{\circ})}}
		\\ &
		\stackrel{(\ref{eq:nfg-dual-general})}{=}
		c'\cdot \widehat{Z_{\N^{\circ}}}
		=
		c''\cdot \widehat{\delta_{\code_{\N}}}
		\stackrel{(\ref{eq:dual-group})}{=}
		c''' \cdot \delta_{(C_{\N})^{\perp}}\eqpunc,
	\end{align*}
	where $c, \ldots, c'''$ are scaling factors, and where the unlabelled equalities
	follow directly from the definition of the support NFG.
	Equation (\ref{thm:ft-z}) follows from
	(\ref{eq:nfg-dual-general}) and (\ref{eq:equality-parity}) by noting that, in $\N$, 
	there are $3\times |\A|$ full edges
	and no half edges,
	and that there are $|\V|$ equality-indicator functions and $|\A|$  parity-indicator functions (each of degree three).
	Finally, $Z_{\N}$ is equal to the l.h.s. of (\ref{thm:ft-explicit}),
        which is clear by noting that only a valid configuration contributes
        to $Z_{\N}$ and such a contribution depends only on the corresponding
        projection in $\code_{\N}$.  The scaling factor $|\X|$
        appears %
	because there are $|\X|$ valid configurations that have the same
        projection $x$ for all $x\in \code_{\N}$. (Here we used
        Assumption~\ref{assumption:connectedness:1}.)  The r.h.s. of
        (\ref{thm:ft-explicit}) follows from ({\ref{thm:ft-code}}) and
        (\ref{thm:ft-z}) by noting that $Z_{\widehat{\N}} = \sum_{x_{\A} \in
          \code_{\widehat{\N}}} \prod_{e\in \A}\widehat{f}_{e}(x_e)$. (Observe
        that, in comparison to the argument leading to the l.h.s, the scaling
        factor here is one since a configuration on the half edges of the
        support NFG of $\widehat{\N}$ uniquely determines
		a global configuration on all the edges of such an NFG.)
\end{proof}
\color{black}

\section{$1$-Complexes}
\label{sec:one:complex}

In this section we introduce $1$-complexes. More precisely,
Section~\ref{sec:one:complex:primal:domain:1} defines chains, boundary
operators, and homology spaces, whereas
Section~\ref{sec:one:complex:dual:domain:1} defines the duals of these
objects, i.e., cochains, coboundary operators, and cohomology spaces,
respectively. For further background on these topics, we refer the interested
reader to~\cite{bamberg:course, hatcher:topology, munkres:algebraic-topology}.

Sections~\ref{sec:one:complex:primal:domain:1} and~\ref{sec:one:complex:dual:domain:1} are complemented by
Sections~\ref{sec:one:complex:nfg:representation:1}
and~\ref{sec:one:complex:nfg:representation:2}, where we introduce some NFGs
that we associate with boundary and coboundary operators, respectively.

Note that $2$-complexes will be introduced in Sections~\ref{sec:two:complex}.
(Higher-order complexes are briefly discussed in
Appendix~\ref{sec:cw-complex}.)

\subsection{Chains, boundary operators, and homology spaces}
\label{sec:one:complex:primal:domain:1}

\begin{definition}
  \label{def:one:complex:1}

  Consider a graph $\G \defeq (\V,\A)$ with vertex set $\V$ and edge set
  $\A$. We define the following objects:
  \begin{itemize}
 
  \item $C_0 \defeq \F^{\V}$, whose elements are called $0$-chains;

  \item $C_1 \defeq \F^{\A}$, whose elements are called $1$-chains.

  \end{itemize}
  Note that $C_0$ is the set of functions from $\V$ to $\F$, and $C_1$ is the
  set of functions from $\A$ to $\F$. We can also think of $C_0$ to be a
  $|\V|$-dimensional vector space over $\F$ and of $C_1$ to be an
  $|\A|$-dimensional vector space over $\F$.

  With a $0$-chain $\x \in \F^{\V}$ we associate the formal sum
  \begin{align}
	  \label{eq:0-chain-basis}
    \x
      &\defeq
         \sum_{v \in \V}
           \x(v) \cdot v \eqpunc .
  \end{align}
  Similarly, with a $1$-chain $\y \in \F^{\A}$ we associate the formal sum
  \begin{align}
	  \label{eq:1-chain-basis}
    \y
      &\defeq
         \sum_{a \in \A}
           \y(a) \cdot a \eqpunc .
  \end{align}
  With the above associations, we can view $\V$ and $\A$ as (standard) bases
  of $\F^{\V}$ and $\F^{\A}$, respectively. (Here, for every $v \in \V$, the
  vertex $v$ is associated with the $0$-chain $1\cdot v \in \F^{\V}$, with a
  similar statement for the edges $a \in \A$.)
  \qedd
\end{definition}

In the following, for every $a \in \A$ we fix a direction.\footnote{Although
  some of the resulting calculations will be different for different choices
  of directions, the quantities that are ultimately of interest, like the
  dimension of certain quotient spaces, will be independent of this choice of
  directions.} We will write $a = (v,v')$ if the direction of the edge $a$ has
been chosen such that the edge starts at vertex $v$ and ends at vertex
$v'$. For such an edge $a$, the boundary vertices are $v$ and $v'$.  We
formalize this as follows.

\begin{definition}
  The boundary operator $\partial_1: C_1 \to C_0$, which maps $1$-chains to
  $0$-chains, is defined to be the (unique) linear map which satisfies
  \begin{align*}
    \text{for every $a = (v,v') \in \A$:} \quad \partial_1(a) \defeq v' - v.
  \end{align*}
  \qedd
\end{definition}

We see that $\partial_1(a)$ not only gives the boundary vertices of a directed
edge $a$, but also tells us which boundary vertex is the starting vertex and
which boundary vertex is the ending vertex, namely, the vertices with
coefficients $-1$ and $+1$, respectively. Note that if $a = (v,v')$, then
\begin{align*}
  \partial_1(-a) 
    &= - \partial_1(a) 
     = - (v' \! - \! v) 
     = v - v' \eqpunc ,
\end{align*}
i.e., $-a$ represents the directed edge from $v'$ to $v$.

\begin{figure}[t]
	\def\s{.80}	
	\def\L{2}
	\def\dist{1.9cm}	
	\def\linetype{}
	\def\dongle{.1}
	\def\c{.5}
	\ifonecol
		\def\xshift{6}
		\else
		\def\xshift{4.5}
	\fi
	\def\yshift{-4.9}
	\def\xoffset{.3*\s}
	\def\yoffset{.35*\s}
	\centering
	\begin{tikzpicture}
          \node(a)[scale=\s] at
          (.5*\xshift,0)
          {\input{figtex/square/square-1complex/square-graph.tex}};
	\end{tikzpicture}
	\caption{A graph $\G = (\V,\A)$.}
	\label{fig:square}
\end{figure}

\begin{example}
  \label{example:one:complex:1}

  Consider the directed graph in Fig.~\ref{fig:square} with $|\V| = 9$
  vertices and $|\A| = 12$ edges.
  In the following,
  we use  $a_{ij}$ to denote the directed edge $(v_i,v_j)$.
  \begin{itemize}

  \item Boundary of $a_{01} + a_{14} + a_{47}$:
    \begin{align*}
       \partial_1(a_{01} + a_{14} + a_{47})
         &= \partial_1(a_{01}) + \partial_1(a_{14}) + \partial_1(a_{47}) \\
         &= (v_1 \! - \! v_0) + (v_4 \! - \! v_1) + (v_7 \! - \! v_4) \\
         &= v_7 - v_0 \eqpunc .
    \end{align*}
    This calculation makes sense because $a_{01} + a_{14} + a_{47}$ represents
    the concatenation of the directed edges $a_{01}$, $a_{14}$, and $a_{47}$,
    which is the directed path from starting vertex $v_0$ to ending vertex
    $v_7$.
  
  \item Boundary of $a_{01} + a_{14} + (-a_{34}) + (-a_{03})$:
    \begin{align*}
      &
      \partial_1(a_{01} + a_{14} + (-a_{34}) + (-a_{03})) \\
        &\quad
         = \partial_1(a_{01} + a_{14} - a_{34} - a_{03}) \\
        &\quad
         = \partial_1(a_{01}) 
           + 
           \partial_1(a_{14}) 
           - 
           \partial_1(a_{34})
           - 
           \partial_1(a_{03}) \\
        &\quad
         = (v_1 - v_0)
           +
           (v_4 - v_1)
           -
           (v_4 - v_3)
           -
           (v_3 - v_0) \\
        &\quad
         = 0 \eqpunc .
    \end{align*}
    Again, this calculation makes sense because $a_{01} + a_{14} + (-a_{34}) +
    (-a_{03})$ represents the concatenation of the directed edges $a_{01}$,
    $a_{14}$, $-a_{34}$, and $-a_{03}$, which is a directed closed cycle from
    $v_0$ to $v_0$, and because directed closed cycles have no boundary.
  
  \item Boundary of $2a_{67} + 3a_{78}$:
    \begin{align*}
       \partial_1(2a_{67} + 3a_{78})
         &= 2 \cdot \partial_1(a_{67}) + 3 \cdot \partial_1(a_{78}) \\
         &= 2 \cdot (v_7 - v_6) + 3 \cdot (v_8 - v_7) \\
         &= 3 v_8 - v_7 - 2v_6 \eqpunc .
    \end{align*}
  
  \end{itemize}
  \qede
\end{example}

Recall that the kernel (denoted by ``$\ker$'') of a linear map is the set of
all elements in the domain that map to zero. %

\begin{figure}[t]
	%\begin{tikzpicture}
	%\node (a){
  %\begin{minipage}{0.9\linewidth}
  \begin{align}
    \label{eq:one:complex:chain:1}
	&\textcolor{gray}{C_{2}} \ 
         \textcolor{gray}{\xrightarrow{\partial_{2}}} \
         C_{1} 
         \xrightarrow{\partial_{1}} 
         C_{0} \ 
         \textcolor{gray}{\xrightarrow{\partial_{0}}} \ 
         \textcolor{gray}{C_{-1}} \\
    \label{eq:one:complex:cochain:1}
        &\textcolor{gray}{\widehat{C}_{2}} \ 
         \textcolor{gray}{\xleftarrow{d_{2}}} \
         \widehat{C}_{1} 
         \xleftarrow{d_{1}} 
         \widehat{C}_{0} \ 
         \textcolor{gray}{\xleftarrow{d_{0}}} \ 
         \textcolor{gray}{\widehat{C}_{-1}}
  \end{align}
  %\end{minipage}
  %};
	%\end{tikzpicture}
  \caption{Spaces and mappings associated with a $1$-complex.}
  \label{fig:one:complex:chain:and:cochain:1}
\end{figure}
%\addtocounter{equation}{2}

\begin{definition}
  The elements of $\ker \partial_1$, i.e., the elements of $C_1$ with zero
  boundaries, will be called $1$-cycles.
  \qedd
\end{definition}

\begin{example}
  \label{example:one:complex:2}

  The second calculation in Example~\ref{example:one:complex:1} shows that
  $a_{01} + a_{14} - a_{34} - a_{03}$ is a $1$-cycle. Note that any multiple
  of $a_{01} + a_{14} - a_{34} - a_{03}$ is also a $1$-cycle.
  \qede
\end{example}

The following definition turns out to be useful for subsequent considerations.

\begin{definition}
  \label{def:one:complex:1:extension:1}

  We define the sets $C_2 \defeq \{ 0 \}$ and $C_{-1} \defeq \{ 0 \}$, along
  with the trivial mappings $\partial_2: C_2 \to C_1$ and $\partial_0: C_0 \to
  C_{-1}$.
  \qedd
\end{definition}

The objects that we have introduced so far are summarized in
\eqref{eq:one:complex:chain:1}; the collection of these objects is known as a
$1$-dimensional chain complex, or simply $1$-complex.

In the following, we will use ``$\img$'' to denote the image of a map.
Because $\img \partial_2 = \{ 0 \}$ and $\ker \partial_0 = C_0$, we clearly
have
\begin{align}
  \img \partial_2 
    &\subseteq 
       \ker \partial_1 \eqpunc ,
         \label{eq:one:complex:containment:1} \\
  \img \partial_1 
    &\subseteq 
       \ker \partial_0 \eqpunc . 
         \label{eq:one:complex:containment:0}
\end{align}
The fact that there might be a gap between $\img \partial_2$ and
$\ker \partial_1$ is captured by the $1$-st homology space. Similarly, the
fact that there might be a gap between $\img \partial_1$ and $\ker \partial_0$
is captured by the $0$-th homology space.

\begin{definition}
  \label{def:homology:spaces:1:1}
 
  The $1$-st and the $0$-th homology spaces are defined to be the quotient
  spaces
  \begin{align}
    H_1
      &\defeq
         \ker\partial_{1} \ / \ \img\partial_{2} \eqpunc ,
      \label{eq:homology:1:1} \\
    H_0
      &\defeq
         \ker\partial_{0} \ / \ \img\partial_{1} \eqpunc ,
      \label{eq:homology:1:0}
  \end{align}
  respectively.
  \qedd
\end{definition}

We now discuss the dimensions of the spaces we have seen so far.
We start with $H_{0}$ and let $\G = (\V,\A)$ be a connected graph.%
\footnote{Here we benefited from \cite{forney:kw-private} in first discussing $\dim H_0$
and then $\dim H_{1}$, instead of the other way around. Such a choice facilitates the arguments
and  appears more natural.
}
	For an arbitrary $v_0 \in \V$ and any other $v\in \V$, let $\y \in C_{1}$ be a 
	$1$-chain corresponding to an oriented path in $\G$ from $v_0$ to $v$.
	(An oriented path is a directed path that contains elements from $\{a,-a \mid a \in \A\}$, where
	$-a \defeq (v,u)$ for all $(u,v)\in \A$.
	The corresponding $1$-chain $y$ is such that for all
	$a\in \A$, $\y(a)=1$, if $a$ is in the oriented path;
	$\y(a)=-1$, if $-a$ is in the oriented path;
	and $\y(a)=0$, otherwise.)
	Then,
	\[
		\partial_{1}\y = v-v_0 \in \img \partial_{1}\eqpunc,
	\]
	and so, every element $v$ of the basis $\V$ is equivalent to $v_0$, modulo $\img \partial_1$.
	Hence,
	\begin{align}
		\label{eq:dim-h0}
		\dim H_{0} = 1\eqpunc.
	\end{align}
	From this and (\ref{eq:homology:1:0}), it is immediate that
	\begin{align}
		\label{eq:dim-img-bd1}
		\dim(\img\partial_{1}) = \dim(\ker\partial_{0}) - \dim H_{0} = |\V|-1\eqpunc.
	\end{align}
	(Recall that $\partial_{0}$ is the zero map.)
	On the other hand, by the rank-nullity theorem, we have
	\begin{align}
		\label{eq:dim-ker-bd1}
		\dim(\ker\partial_{1}) = |\A| - \dim(\img\partial_{1}) = |\A|-|\V|+1\eqpunc,
	\end{align}
	and so,
	\begin{align}
		\label{eq:dim-h1}
		\dim H_{1} = |\A|-|\V|+1\eqpunc.
	\end{align}
	(Recall that, for a $1$-complex, $\partial_2$ is the zero map.) 

	To construct a basis for $\img \partial_{1}$, let $\T \defeq (\V,\A_{\T})$ be a spanning tree of
	$\G$. The $0$-chains in the set
	$\left\{ \partial_{1}a: a\in \A_{\T} \right\}$
	are independent since $\T$ contains no cycles, and so, by \eqref{eq:dim-img-bd1} they form a
	basis of $\img\partial_1$.

	To construct a basis for $\ker \partial_{1}$, note that each edge $a\in \A \backslash \A_{\T}$ induces an
	oriented cycle in $\G$, and so the corresponding $1$-chain is a
	$1$-cycle.
	These $1$-cycles are independent since each one involves a different
	edge from $\A\backslash \A_{\T}$, and so, by \eqref{eq:dim-ker-bd1} they form a basis of $\ker\partial_{1}$.
	\begin{example}
	  \label{example:one:complex:basis}
	  Continuing with the example graph $\G = (\V,\A)$ in Fig.~\ref{fig:square},
	  let $\cal T\defeq(\V,\A_{\cal T})$ be a spanning tree of $\G$, say,
	  \begin{align*}
		 \A_{\cal T}
			&\defeq
				\left\{ 
				  a_{01}, a_{12}, a_{34}, a_{45}, a_{67}, a_{78}, a_{14}, a_{47}
		 \right\} \eqpunc .
	  \end{align*}
	  Then the set
	  \begin{align*}
		 \bigl\{ 
			\partial_1(a)
		 \bigm|
			a \in \A_{\cal T}
		 \bigr\} 
			&= \left\{
				  v_{1}\!-\!v_{0}, \ 
				  v_{2}\!-\!v_{1}, \ 
				  \ldots, v_{7} \! - \! v_{4}
				\right\}
	  \end{align*}
	  forms a basis of $\img\partial_{1}$.

	  The edges $\A\backslash \A_{\T} = \{a_{03},a_{25},a_{36},a_{58}\}$ induce the cycles
	  \begin{align*}
		 c_{0} 
			&\defeq 
				a_{01} + a_{14} - a_{34} - a_{03}\eqpunc, \\
		 c_{1} 
			&\defeq 
				a_{12} + a_{25} - a_{45} - a_{14}\eqpunc, \\
		 c_{2}
			&\defeq 
				a_{34} + a_{47} - a_{67} - a_{36}\eqpunc, \\
		 c_{3}
			&\defeq 
				a_{45} + a_{58} - a_{78} - a_{47}\eqpunc,
	  \end{align*}
	  which form a basis of $\ker\partial_{1}$.
	  \qede
	\end{example}

	The above arguments can be extended to graphs with more than one
        component. Namely, for a graph $\G$ with $\operatorname{comp}(\G)$
        components, one can show (see, e.g., \cite[p. 425]{bamberg:course})
        that
        \begin{align*}
          \dim H_0 
            &= \operatorname{comp}(\G) \eqpunc, \\
          \dim H_1 
            &= |\A| - |\V| + \operatorname{comp}(\G) \eqpunc.
        \end{align*}
        These two quantities are known as the zeroth and first Betti numbers of the
        graph $\G$, respectively. (More generally, $\dim H_i$ is called the
        $i$-th Betti number.) Moreover, note that $\dim H_1$ is also known at
        the cyclomatic number of the graph $\G$, i.e., the minimum number of
        edges that must be removed in order to make the graph cycle free.

\subsection{NFG representation of the boundary operator}
\label{sec:one:complex:nfg:representation:1}

In this subsection we introduce the following NFGs:
\begin{itemize}

\item $\N_{\partial_1}$, an NFG whose exterior function is proportional to the
  indicator function of $\bigl\{ (\y, \partial_1 \y) \bigm| \y \in C_1 \bigr\}$.

\item $\N_{\ker \partial_1}$, an NFG whose exterior function is proportional
  to the indicator function of $\ker \partial_1$.

\item $\N_{\img \partial_1}$, an NFG whose exterior function is proportional
  to the indicator function of $\img \partial_1$.

\end{itemize}
We start by defining the set
\begin{align*}
  \B_{\partial_1}
    &\defeq
       \bigl\{
         (\y, \x)
       \bigm|
         \y \in C_1, \ \x = \partial_1 \y
       \bigr\} \eqpunc ,
\end{align*}
which contains all possible pairs of a $1$-chain and its image under
$\partial_1$.
We can rewrite $\B_{\partial_1}$ in terms of the bases in Definition~\ref{def:one:complex:1} as
\begin{align}
  \!\!\! \B_{\partial_1}
    = \Bigl\{ 
         \Bigl( 
           \bigl( \y(a) \bigr)_{a \in \A}, 
           \bigl( \x(v) \bigr)_{v \in \V} \kern-.08em
         \Bigr) 
       \Bigm| 
			 \y \in C_1, \x = \partial_{1}\y
       \Bigr\} \eqpunc .
		 \label{eq:bd:partial:1} 
\end{align}

Using this notation, we make the following definition, whose terminology was
motivated by analogous objects in~\cite{Forney:algebraic:topology:notes}.

\begin{definition}
  \label{def:input:output:nfg:1}

  Given a graph $\G = (\V,\A)$, we define the input/output NFG
  $\N_{\partial_1}$ associated with $\partial_1$ to be the NFG that has the
  following properties:
  \begin{itemize}

  \item For every $a \in \A$, there is an equality indicator function with two
    full edges and an input (ingoing) half-edge whose associated variable is
    $\y(a)$.

  \item For every $v \in \V$, there is a parity indicator function with a full
    edge (possibly with a sign inverter) for every edge incident on $v$ and an
    output (outgoing) half-edge whose associated variable is $\x(v)$.

  \item The full edges connect the indicator functions in the obvious way,
    i.e., the parity indicator function of $v\in \V$ is connected to the
    equality indicator function of $a\in \A$ iff $a$ is incident on $v$.
    \qedd

  \end{itemize}
\end{definition}

\begin{example}
  \label{example:input:output:nfg:1}

  Consider again the graph $\G = (\V,\A)$ in Fig.~\ref{fig:square}. The
  input/output NFG $\N_{\partial_1}$ associated with $\partial_1$ is shown in
  Fig.~\ref{fig:square1-continue}(a).  \qede
\end{example}

\begin{figure}[t]
	\def\s{.71}	
	\def\L{2}
	\def\dist{1.9cm}	
	\def\linetype{}
	\def\dongle{.1}
	\ifonecol
          \def\xshift{6}
          \else
          \def\xshift{4.5}
	\fi
	\def\yshift{-4.0}
	\def\xoffset{.3*\s}
	\def\yoffset{.35*\s}
	\centering
	\begin{tikzpicture}

                \def\c{.5}
		\node(a)[scale=\s] at
		(0*\xshift,0*\yshift)
                {\input{figtex/square/square-1complex/square-nfga.tex}};
		\node[vcaption, below = of a]
                     {(a) NFG $\N_{\partial_1}$};

		\def\c{.6}
		\node(c)[scale=\s] at
		(0*\xshift,1*\yshift)
                {\input{figtex/square/square-1complex/square-nfga-delta1.tex}};
		\node[vcaption, below = of c]
                     {(c) Intermediate NFG $\N_{\ker \partial_1}$};

		\def\c{.5}
		\node(e)[scale=\s] at
		(0*\xshift,2*\yshift)
                {\input{figtex/square/square-1complex/square-nfga-delta2.tex}};
		\node[vcaption, below = of e]
                     {(e) NFG $\N_{\ker \partial_1}$};

		\def\c{.6}
		\node(g)[scale=\s] at
		(0*\xshift,3*\yshift)
                {\input{figtex/square/square-1complex/square-nfga-one1.tex}};
		\node[vcaption, below = of g]
                     {(g) Intermediate NFG $\N_{\img \partial_1}$};

		\def\c{.5}
		\node(i)[scale=\s] at
		(0*\xshift,4*\yshift)
                {\input{figtex/square/square-1complex/square-nfga-one2.tex}};
		\node[vcaption, below = of i]
                     {(i) NFG $\N_{\img \partial_1}$};
                
                \def\c{.5}
		\node(b)[scale=\s] at
		(1*\xshift,0*\yshift)
                {\input{figtex/square/square-1complex/square-nfgb.tex}};
		\node[vcaption, below = of b]
                     {(b) NFG $\N_{d_1}$};

		\def\c{.6}
		\node(d)[scale=\s] at
		(1*\xshift,1*\yshift)
                {\input{figtex/square/square-1complex/square-nfgb-delta1.tex}};
		\node[vcaption, below = of d]
                     {(d) Intermediate NFG $\N_{\ker d_1}$};

		\def\c{.5}
		\node(f)[scale=\s] at
		(1*\xshift,2*\yshift)
                {\input{figtex/square/square-1complex/square-nfgb-delta2.tex}};
		\node[vcaption, below = of f]
                     {(f) NFG $\N_{\ker d_1}$};

		\def\c{.6}
		\node(h)[scale=\s] at
		(1*\xshift,3*\yshift)
                {\input{figtex/square/square-1complex/square-nfgb-one1.tex}};
		\node[vcaption, below = of h]
                     {(h) Intermediate NFG $\N_{\img d_1}$};

		\def\c{.5}
		\node(j)[scale=\s] at
		(1*\xshift,4*\yshift)
                {\input{figtex/square/square-1complex/square-nfgb-one2.tex}};
		\node[vcaption, below = of j]
                     {(j) NFG $\N_{\img d_1}$};
	\end{tikzpicture}
	\caption{NFGs associated with the graph $\G = (\V,\A)$ in
          Fig.~\ref{fig:square}.}
	\label{fig:square1-continue}
\end{figure}

\begin{lemma}
  \label{lemma:input:output:nfg:1}

  Let $\G = (\V,\A)$ be some graph and let $\N_{\partial_1}$ be the
  input/output NFG associated with $\partial_1$, then
	\begin{align}
		\label{eq:z-indicator}
	  Z_{\N_{\partial_1}}(x_{\D})
		 &\ \propto \ 
			 \delta_{\B_{\partial_1}}\!(x_{\D})
				\quad \text{for all $x_{\D} \in \X_{\D}$} \eqpunc .
	\end{align}
  \qedl
\end{lemma}

\begin{proof}
  Consider an arbitrary $1$-chain $\y$ and an arbitrary $0$-chain $\x$.  Then,
  we have $(\y,\x) \in \B_{\partial_{1}}$ iff 
	\begin{align}
		\x
		& = \partial_{1}\y
		= \sum_{a\in \A} \y(a) \cdot \partial_{1}(a)
		= \hspace{-5mm}\sum_{a \defeq (u,v)\in \A} \hspace{-5mm}\y(a) \cdot (v-u) \nonumber \\
		 &= \sum_{v\in \V} \Big( \sum_{a\in {\rm In}(v)} \y(a) - \sum_{a\in {\rm Out}(v)} \y(a) \Big) \cdot v,
                  \label{eq:input:output:nfg:partial:1}
	\end{align}
	where the second equality follows from~(\ref{eq:1-chain-basis}) and
        the linearity of $\partial_1$, and the last equality is obtained by
        rearranging the terms in the summation preceding it. Here we used the
        notation ${\rm In}(v)$ and ${\rm Out}(v)$ to denote the sets of edges
        entering and leaving a vertex $v$, respectively. By the independence
        of the vertices (in $C_0$), we have $\x = \partial_1 \y$ iff $\x(v) =
        \sum_{a\in {\rm In}(v)} \y(a) - \sum_{a\in {\rm Out}(v)}\y(a)$ for all
        $v\in \V$.  This constraint (with the output and input half-edges of
        $\N_{\partial_1}$ associated with $\x(v), v \in \V$, and $\y(a)$, $a
        \in \A$, respectively) corresponds to the local parity indicator
        functions in $\N_{\partial_1}$. The equality indicator functions
        simply account for the fact that each edge is incident on two
        vertices. Finally, one can verify that the proportionality constant
        in~\eqref{eq:z-indicator} is given by $\sum_{x_{\E} \in
          \X_{\E}: \, x_{\D} = 0_{\D}} f_{\N_{\partial_1}}(x_{\E})$, i.e., the
        number of valid configurations of $\N_{\partial_1}$ that are all-zero
        on the half-edges.
\end{proof}

We now proceed to define $\N_{\ker \partial_1}$ and
$\N_{\img \partial_1}$. Note that coordinate-based representations of
$\ker \partial_1$ and $\img \partial_1$ are, respectively,
\begin{align}
  \ker \partial_1
    &= \Bigl\{
         \bigl( \y(a) \bigr)_{a \in \A}
       \Bigm|
         \y \in C_1, \ 
         \partial_1 \y = 0
       \Bigr\} \eqpunc ,
         \label{eq:ker:partial:1} \\
  \img \partial_1
    &= \Bigl\{
          \x \defeq \bigl( (\partial_1 \y)(v) \bigr)_{v \in \V}
       \Bigm|
         \y \in C_1
       \Bigr\} \eqpunc ,
         \label{eq:img:partial:1}
\end{align}

\begin{definition}
  \label{def:img-kernel:boundary:nfg:1}
  Given an input/output NFG $\N_{\partial_1}$, we define the NFG
  $\N_{\ker \partial_1}$ to be be the NFG obtained from $\N_{\partial_1}$ by
  deleting all the output half-edges. We also define $\N_{\img \partial_1}$ as
  the NFG obtained from $\N_{\partial_1}$ by deleting all the input-half
  edges.  \qedd
\end{definition}

A lemma similar to Lemma~\ref{lemma:input:output:nfg:1} leads to
\begin{align*}
  Z_{\N_{\ker \partial_1}}(x_{\D})
    &\ \propto \ 
       \delta_{\ker \partial_1}\!(x_{\D}),
         &&\quad \text{$x_{\D} \in \X_{\D}$, 
                       $\D \defeq \D_{\N_{\ker \partial_1}}$} \eqpunc , \\
  Z_{\N_{\img \partial_1}}(x_{\D})
    &\ \propto \ 
       \delta_{\img \partial_1}\!(x_{\D}),
         &&\quad \text{$x_{\D} \in \X_{\D}$,
                       $\D \defeq \D_{\N_{\img \partial_1}}$} \eqpunc .
\end{align*}
This follows by noting that the NFG $\N_{\ker \partial_1}$ can be obtained
from $\N_{\partial_1}$ by attaching function nodes representing
Kronecker-delta functions to the output half-edges (see
(\ref{eq:bd:partial:1}) and (\ref{eq:ker:partial:1})), thereby, in effect,
deleting these half-edges. (More precisely, the NFG $\N_{\ker \partial_1}$ can
be obtained from $\N_{\partial_1}$ by constructing an intermediate NFG where
function nodes representing Kronecker-delta functions are attached to the
output half-edges (see (\ref{eq:bd:partial:1}) and (\ref{eq:ker:partial:1})),
and then simplifying the intermediate NFG to an NFG having the same exterior
function by deleting these Kronecker-delta functions and their incident
edges.)
Similarly, the NFG $\N_{\img \partial_1}$ can be obtained
from $\N_{\partial_1}$ by attaching function nodes representing all-one
functions to the input half-edges (see (\ref{eq:bd:partial:1}) and
(\ref{eq:img:partial:1})), thereby, in effect, deleting these half-edges.

\begin{example}
  \label{example:input:output:nfg:2}

  We continue
  Example~\ref{example:input:output:nfg:1}. Figs.~\ref{fig:square1-continue}(e)
  and~(c) show the NFG $\N_{\ker \partial_1}$ and the intermediate NFG used to
  obtain $\N_{\ker \partial_1}$ from $\N_{\partial_1}$, respectively.
  Similarly, Figs.~\ref{fig:square1-continue}(i) and~(g) show the NFG
  $\N_{\img \partial_1}$ and the intermediate NFG used to obtain
  $\N_{\img \partial_1}$ from $\N_{\partial_1}$, respectively.  \qede
\end{example}

  We remark that the same subspace may be represented by many NFGs and so,
  besides the NFGs presented in this section,
  one can give several NFGs representing $\B_{\partial_1}$, $\img \partial_1$, and $\ker
  \partial_1$.

Another example where an NFG appeared whose exterior function is proportional
to the indicator function of $\ker \partial_1$ is the support NFG
of~\cite[Fig.~12]{vontobel:electrical}.
In that paper, factor-graph representations of
electrical networks are considered, and $\ker \partial_1$ appears naturally in
that context because $\ker \partial_1$ encodes exactly Kirchhoff's current law
when $1$-chains are used to express currents along branches of an electrical
network.\footnote{Let us point out an important difference in the NFG drawing
  conventions used here and in~\cite{vontobel:electrical}. Namely, whereas
  here arrows are used to label input and output half-edges, respectively, in
  the paper~\cite{vontobel:electrical} arrows are used to express which
  arguments are taken positively and which arguments are taken negatively in a
  parity indicator function node.}

In order to highlight the importance of NFGs whose exterior function is
proportional to the indicator function of the kernel of some mapping, or
proportional to the indicator function of the image of some mapping, let us
connect the above findings to coding theory. Namely, in coding theory there
are two popular approaches to define linear codes:
\begin{itemize}

\item In a \emph{kernel representation}, a linear code of length $n$ and
  dimension $k$ over $\F$ is described as the kernel of some linear map
  $\phi$, i.e.,
  \begin{align*}
    \setC
      &\defeq
         \bigl\{
           x \in \F^n
         \bigm|
           \phi(x) = 0
         \bigr\}.
  \end{align*}
  Typically, the map $\phi$ is defined via a rank-$(n \! - \! k)$ parity-check
  matrix of size $m \times n$, where $m \geq n-k$. There are well-known
  approaches, in particular in the context of low-density parity-check codes,
  on how to specify an NFG whose exterior function is proportional to the
  indicator function of the code
  $\setC$~\cite{Tanner81,wiberg:95,frank:factor,loeliger:intro,forney:normal}.

\item In an \emph{image representation}, a linear code of length $n$ and
  dimension $k$ over $\F$ is described as the image of some linear map
  $\theta$, i.e.,
  \begin{align*}
    \setC
      &\defeq
         \bigl\{
           \theta(x)
         \bigm|
           x \in \F^k
         \bigr\}.
  \end{align*}
  Typically, the map $\theta$ is defined via a generator matrix of size $k
  \times n$. Again, there are well-known approaches, in particular in the
  context of turbo and low-density generator-matrix codes on how to specify an
  NFG whose exterior function is proportional to the indicator function of the
  code $\setC$~\cite{frank:factor,loeliger:intro,forney:normal}.

\end{itemize}

\begin{definition}
  \label{def:nfg:representation:forms:1}

  In the following,
  an NFG whose support NFG equals $\N_{\img \theta}$ for some
  linear mapping $\theta$ will be said to be in image-representation form.
  Similarly,
  an NFG whose support NFG equals $\N_{\ker \phi}$ for some
  linear mapping $\phi$ will be said to be in kernel-representation
  form.
  \qedd
\end{definition}

We conclude this subsection by noting that besides introducing NFGs associated
with the boundary operator $\partial_1$, one can also introduce NFGs
associated with the boundary operators $\partial_2$ and $\partial_0$. The
generalization of the above definitions is straightforward and we omit the
details.

\subsection{Cochains, coboundary operators, and cohomology spaces}
\label{sec:one:complex:dual:domain:1}

For all the objects that we encountered in
Section~\ref{sec:one:complex:primal:domain:1}, there are dual objects that we
now introduce.

\begin{definition}
  \label{def:cochain:1}

  Given a $1$-complex as in Definitions~\ref{def:one:complex:1}
  and~\ref{def:one:complex:1:extension:1}, for all $i = -1, 0, 1, 2$, let
  $\widehat{C}_i$ be the dual space of $C_i$, i.e., the space of all linear
  maps from $C_i$ to $\F$. Elements of $\widehat{C}_i$ are called
  $i$-cochains. Note that $\widehat{C}_{-1} = \{ 0 \}$ and $\widehat{C}_{2} = \{
  0 \}$.
  \qedd
\end{definition}

Because $\widehat{C}_0$ is the dual space of $C_0$, when defining an element
$\varphi$ of $\widehat{C}_0$, we do not need to specify $\varphi(\x)$ for all
$\x \in C_0$. It is sufficient to specify $\varphi(\x)$ on any basis of $C_0$, in particular,
it is sufficient to specify $\varphi(\x)$ for all $\x \in \V$.

\begin{definition}
  \label{def:coboundary:1}
  
  For all $i = 0, 1, 2$, we define the coboundary operator $d_{i}:
  \widehat{C}_{i-1} \rightarrow \widehat{C}_{i}$ to be the linear map which
  satisfies
  \begin{align}
    (d_i\varphi)(\cdot)
      &\defeq 
         \varphi\bigl( \partial_i(\cdot) \bigr)
           \quad \text{for all $\varphi \in \widehat{C}_{i-1}$} \eqpunc .
             \label{eq:def:coboundary:1:condition:1}
  \end{align}
  \qedd
\end{definition}

For $i = 1$, the condition in~\eqref{eq:def:coboundary:1:condition:1} implies
the following for an arbitrary directed edge $a = (v,v')$ and an arbitrary
$\varphi \in \widehat{C}_{0}$:
\begin{align*}
  (d_1 \varphi)(a) 
    &= \varphi\bigl( \partial_1(a) \bigr) 
     = \varphi(v'-v) 
     = \varphi(v') - \varphi(v) \eqpunc .
\end{align*}
Therefore, $d_1 \varphi$ is the function that assigns to the directed edge $a$
the difference between the $\varphi$-value at the ending vertex of the edge
and the $\varphi$-value at the starting vertex of the edge.\footnote{For
  example, in some physics application, $\varphi$ might represent some
  potential function. Then $d_1 \varphi$ is the function that yields the
  potential differences along directed edges. We refer the interested reader
  to~\cite{bamberg:course} for a very accessible introduction to the use of
  algebraic topology in physics.}

The objects that we have introduced in this subsection are summarized in
\eqref{eq:one:complex:cochain:1}; the collection of these objects is known as
a $1$-dimensional cochain complex.

Because $d_0$ and $d_2$ are the trivial maps, it holds that $\img d_0 = \{ 0
\}$ and $\ker d_2 = \widehat{C}_1$, from which it follows that
\begin{align}
  \img d_0
    &\subseteq 
       \ker d_1 \eqpunc ,
         \label{eq:one:complex:d:containment:0} \\
  \img d_1 
    &\subseteq 
       \ker d_2 \eqpunc . 
         \label{eq:one:complex:d:containment:1}
\end{align}
The fact that there might be a gap between $\img d_0$ and $\ker d_1$ is
captured by the $0$-th cohomology space. Similarly, the fact that there might
be a gap between $\img d_1$ and $\ker d_2$ is captured by the $1$-st
cohomology space.

\begin{definition}
  \label{def:cohomology:spaces:1:1}

  The $0$-th and the $1$-st cohomology spaces are defined to be the quotient
  spaces
  \begin{align}
    \widehat{H}_0
      &\defeq
         \ker d_{1} \ / \ \img d_{0} \eqpunc ,
      \label{eq:cohomology:1:0} \\
    \widehat{H}_1
      &\defeq
         \ker d_{2} \ / \ \img d_{1} \eqpunc ,
      \label{eq:cohomology:1:1}
  \end{align}
  respectively.
  \qedd
\end{definition}

Again, although the relationships in~\eqref{eq:one:complex:d:containment:0}
and~\eqref{eq:one:complex:d:containment:1} are rather trivial here, they will
naturally generalize to $m$-complexes for $m \geq 2$. 

Since (see Appendix~\ref{sec:cw-complex} for details) 
\begin{align}
	\label{eq:diff-ker}
	\ker d_{i} = \left( \img \partial_{i} \right)^{\perp}\eqpunc, \\
	\img d_{i} = \left( \ker\partial_{i} \right)^{\perp}\eqpunc,
	\label{eq:diff-img}
\end{align}
we have $\dim \widehat{H}_i = \dim H_i$ for $i = 0, 1$. Therefore, if one is
interested in computing $\dim H_i$ for $i = 0, 1$, one can not only compute
them via Definition~\ref{def:homology:spaces:1:1}, but also via $\dim H_i
= \dim \widehat{H}_i$ and Definition~\ref{def:cohomology:spaces:1:1}.

\subsection{NFG representation of the coboundary operator}
\label{sec:one:complex:nfg:representation:2}

In the same way that we associated the NFGs $\N_{\partial_1}$,
$\N_{\ker \partial_1}$, and $\N_{\img \partial_1}$ with the linear map
$\partial_1$, we can associate the NFGs $\N_{d_1}$, $\N_{\ker d_1}$, and
$\N_{\img d_1}$ with the linear map $d_1$. 

We start by defining the set
\begin{align*}
  \B_{d_1}
    &\defeq
       \bigl\{
			 (\widehat{\x}, \widehat{\y})
       \bigm|
         \widehat{\x} \in \widehat{C}_0, \ 
			\widehat{\y} = d_1 \widehat{\x}
       \bigr\} \eqpunc ,
\end{align*}
which contains all possible pairs of a $0$-cochain and its image under
$d_1$. In order to go from a coordinate-free representation of the elements of
this set to a coordinate-based representation, we define bases for
$\widehat{C}_0$ and $\widehat{C}_1$ such that the matrix representation of
$d_1$ is the transpose of the matrix representation of $\partial_1$. We can
then write 
\begin{align*}
  \widehat{\x} 
    &= \sum_{v \in \V} 
         \widehat{\x}(v) \cdot v, \\
  d_1 \widehat{\x}
    &= \sum_{a \in \A} 
         (d_1 \widehat{\x})(a) \cdot a.
\end{align*}
With this, we obtain
\begin{align*}
  \B_{d_1}
    &\defeq
       \Bigl\{
         \Bigl( 
           \bigl( \widehat{\x}(v) \bigr)_{v \in \V}, 
           \bigl( \widehat{\y}(a) \bigr)_{a \in \A}
         \Bigr)
       \Bigm|
         \widehat{\x} \in \widehat{C}_0, \ 
			\widehat{\y} = d_1 \widehat{\x}
       \Bigr\} \eqpunc .
\end{align*}

The following definitions are analogous to
Section~\ref{sec:one:complex:nfg:representation:2}, and so we omit the
details.
\begin{definition}
  \label{def:input:output:nfg-df}

  Given a graph $\G = (\V,\A)$, we define the input/output NFG
  $\N_{d_1}$ associated with $d_1$ to be the NFG that has the
  following properties:
  \begin{itemize}

  \item For every $a \in \A$, there is a parity indicator function with two
    full edges and an output (outgoing) half-edge whose associated variable is
    $\widehat \y(a)$.

  \item For every $v \in \V$, there is an equality indicator function with a
    full edge (possibly with a sign inverter) for every edge incident on $v$
    and an input (ingoing) half-edge whose associated variable is $\widehat
    \x(v)$.

  \item The full edges connect the indicator functions in the obvious way,
    i.e., the equality indicator function of $v\in \V$ is connected to the
    parity indicator function of $a\in \A$ iff $a$ is incident on $v$.  \qedd
  \end{itemize}

\end{definition}

\begin{definition}
  \label{def:img-kernel:boundary:nfg-df}

  Given an input/output NFG $\N_{d_1}$, we define the NFG $\N_{\ker d_1}$ to
  be be the NFG obtained from $\N_{d_1}$ by deleting all the output
  half-edges. We also define $\N_{\img d_1}$ as the NFG obtained from
  $\N_{d_1}$ by deleting all the input-half edges.  \qedd
\end{definition}

\begin{example}
  \label{example:input:output:nfg:3}

  Consider again the graph $\G = (\V,\A)$ in
  Fig.~\ref{fig:square}. Fig.~\ref{fig:square1-continue}(b) shows the
  input/output NFG $\N_{d_1}$; Figs.~\ref{fig:square1-continue}(f) and~(d)
  show the NFG $\N_{\ker d_1}$ and its intermediate version,
  respectively; Figs.~\ref{fig:square1-continue}(j) and~(h) show the
  NFG $\N_{\img d_1}$ and its intermediate version, respectively.
  \qede
\end{example}

From~(\ref{eq:diff-ker}), (\ref{eq:diff-img}), and \eqref{thm:ft-code} of
Theorem~\ref{thm:nfg-dual}, it is evident that, up to a scaling factor, we have
\begin{align}
	\label{eq:img1-ker1-nfg-a}
   \N_{\img d_1}&= \widehat{\N}_{\ker \partial_1} \\
	\label{eq:ker1-img1-nfg-a}
	\N_{\ker d_1}&= \widehat{\N}_{\img \partial_1}.
\end{align}
(See Figs.~\ref{fig:square1-continue}(j) and (e), and Figs.~\ref{fig:square1-continue}(f) and (i),
respectively.)

We conclude this subsection by noting that besides introducing NFGs associated
with the coboundary operator $d_1$, one can also introduce NFGs associated
with the coboundary operators $d_2$ and $d_0$.

\section{$2$-Complexes}
\label{sec:two:complex}

The algebraic topology approach becomes more valuable when moving to
$2$-complexes, and, more generally, to higher-order complexes.

The present section is about $2$-complexes and has a similar structure as
Section~\ref{sec:one:complex}: in
Section~\ref{sec:two:complex:primal:domain:1} we introduce chains, boundary
operators, and homology spaces, whereas in
Section~\ref{sec:two:complex:dual:domain:1} we introduce cochains, coboundary
operators, and cohomology spaces; these sections are complemented by
Sections~\ref{sec:two:complex:nfg:representation:1}
and~\ref{sec:two:complex:nfg:representation:2}, where we introduce some NFGs
that we associate with boundary and coboundary operators,
respectively. Finally, because of the importance of $2$-torus lattice graphs for
later sections, Section~\ref{sec:cw-torus} discusses these objects in detail.

\subsection{Chains, boundary operators, and homology spaces}
\label{sec:two:complex:primal:domain:1}

\begin{definition}[$2$-complex]
  \label{def:two:complex:1}

  In addition to vertices and edges, a $2$-complex also includes
  two-dimensional objects (surfaces). Consider a graph $\G \defeq (\V,\A)$
  that can be drawn on some surface without edge crossings, then the graph
  divides the surface into regions called \emph{faces}, which we denote by the
  set $\setS$. In the following, we will use the notation $\G \defeq
  (\V,\A,\setS)$ for such graphs. We define the following objects:
  \begin{itemize}
 
  \item $C_0 \defeq \F^{\V}$, whose elements are called $0$-chains;

  \item $C_1 \defeq \F^{\A}$, whose elements are called $1$-chains.

  \item $C_2 \defeq \F^{\setS}$, whose elements are called $2$-chains.

  \end{itemize}
  With a $0$-chain $\x \in \F^{\V}$, a $1$-chain $\y \in \F^{\A}$, a $2$-chain
  $\z \in \F^{\setS}$ we associate the formal sums
  \begin{align*}
    \x
      &\defeq
         \sum_{v \in \V}
           \x(v) \cdot v \eqpunc ,
    \quad
    \y
       \defeq
         \sum_{a \in \A}
           \y(a) \cdot a \eqpunc ,
    \quad
    \z
       \defeq
         \sum_{s \in \setS}
           \z(s) \cdot s \eqpunc ,
  \end{align*}
  respectively.
  \qedd
\end{definition}

As in Section~\ref{sec:one:complex}, for every $a \in \A$ we fix a direction.
Similarly, for every $s \in \setS$, we fix an orientation. Namely,
letting $\A_s$ be the set of edges adjacent to the face $s$, we associate an
orientation of $s$ which refers to traversing the edges in $\A_s$ in some direction (clockwise or counter-clockwise). 
The boundary of $s$ is defined as the weighted sum of the edges in $\A_s$, where the weight of an
edge depends on the relative direction of the edge w.r.t. to the chosen orientation of $s$. 

\begin{definition}
  \label{def:two:complex:boundary:operator:1}

  The boundary operator $\partial_1: C_1 \to C_0$, which maps $1$-chains to
  $0$-chains, is defined to be the (unique) linear map which satisfies
  \begin{align*}
    \text{for every $a = (v,v') \in \A$:} \quad \partial_1(a) \defeq v' - v \eqpunc .
  \end{align*}
  The boundary operator $\partial_2: C_2 \to C_1$, which maps $2$-chains to
  $1$-chains, is defined to be the (unique) linear map which satisfies
  \begin{align*}
    \text{for every $s \in \setS$:} 
      \quad 
      \partial_2(s) \defeq \sum_{a \in \A_s} \alpha(a) \cdot a \eqpunc ,
  \end{align*}
  where $\A_s$ is the set of edges adjacent to $s$ and $\alpha(a) = +1$ if the direction of
  $a$ is the same as the orientation of $s$ (i.e., the direction in which $\A_s$ is traversed) and $\alpha(a)=-1$ otherwise. \\
  {\mbox{} \hfill \qedd}
\end{definition}

In the case of a planar graph, we take the orientation of an inner face to be
clockwise and we take the orientation of the outer face, if it is included in $\s$, to be
counter-clockwise.

\begin{example}%
  \label{eg:2-complex-square:prelim:1}

  Consider the planar graph in Fig.~\ref{fig:square2}(a) with $|\V| = 16$ vertices,
  $|\A| = 24$ edges, and $|\setS| = 9$ faces. (Note that in this example there
  is no outer face.)
  In the following, we use $a_{ij}$ to denote the directed edge $(v_i,v_j)$.
  We have
  \begin{align*}
    \partial_2 s_0
      &= a_{01} + a_{15} - a_{45} - a_{04} \eqpunc , \\
    \partial_2 s_1 
      &= a_{12} + a_{26} - a_{56} - a_{15} \eqpunc , \\
    \vdots \ \ 
      &= \ \ \vdots \quad \ \quad 
             \vdots \quad \ \quad 
             \vdots \quad \ \quad \vdots %
  \end{align*}
  Note that applying the boundary operator $\partial_2$ to $s_0 + s_1$ yields
  \begin{align*}
    \partial_2(s_0 + s_1)
      &= \partial_2(s_0)
         +
         \partial_2(s_1) \\
      &= (a_{01} \! + \! a_{15} \! - \! a_{45} \! - \! a_{04})
         +
         (a_{12} \! + \! a_{26} \! - \! a_{56} \! - \! a_{15}) \\
      &= a_{01} + a_{12} + a_{26} - a_{56} - a_{45} - a_{04} \eqpunc .
  \end{align*}
	\qede
\end{example}

\begin{example}%
  \label{eg:2-complex-square-hole:prelim:1}

  The graph in Fig.~\ref{fig:square2-hole}(a) is similar to the graph in
  Fig.~\ref{fig:square2}(a), but has only $|\setS| = 8$ faces. Namely, the face $s_4$ has been
  removed, thereby leaving a ``hole'' in the plane.
  \qede
\end{example}

The following definition turns out to be useful for subsequent considerations.

\begin{definition}
  \label{def:two:complex:1:extension:1}

  We define the sets $C_3 \defeq \{ 0 \}$ and $C_{-1} \defeq \{ 0 \}$, along
  with the trivial mappings $\partial_3: C_3 \to C_2$ and $\partial_0: C_0 \to
  C_{-1}$.
  \qedd
\end{definition}

\begin{figure}[t]
	%\begin{tikzpicture}
	%\node (a){
  %\begin{minipage}{0.9\linewidth}
  \begin{align}
    \label{eq:two:complex:chain:1}
	&\textcolor{gray}{C_{3} } \ 
         \textcolor{gray}{\xrightarrow{\partial_{3}}} \ 
         C_{2} 
         \xrightarrow{\partial_{2}}
         C_{1} 
         \xrightarrow{\partial_{1}} 
         C_{0} \ 
         \textcolor{gray}{\xrightarrow{\partial_{0}}} \ 
         \textcolor{gray}{C_{-1}} \\
    \label{eq:two:complex:cochain:1}
        &\textcolor{gray}{\widehat{C}_{3}} \
         \textcolor{gray}{\xleftarrow{d_{3}}} \ 
         \widehat{C}_{2} 
         \xleftarrow{d_{2}} 
         \widehat{C}_{1} 
         \xleftarrow{d_{1}} 
         \widehat{C}_{0} \ 
         \textcolor{gray}{\xleftarrow{d_{0}}} \
         \textcolor{gray}{\widehat{C}_{-1}}
  \end{align}
  %\end{minipage}
  %};
  %\end{tikzpicture}
  \caption{Spaces and mappings associated with a $2$-complex.}
  \label{fig:two:complex:chain:and:cochain:1}
\end{figure}

The objects that we have introduced so far are summarized in
\eqref{eq:two:complex:chain:1}; the collection of these objects is known as a
$2$-dimensional chain complex, or simply $2$-complex.

Note that the oriented boundary $\partial_2(s)$ of a face $s$ is a cycle,
i.e., it is in $\ker \partial_1$. Therefore, $\img \partial_2 \subseteq
\ker \partial_1$. This, together with the trivial observations
$\img \partial_3 = \{ 0 \}$ and $\ker \partial_0 = C_0$ implies that
\begin{align}
  \img \partial_3
    &\subseteq 
       \ker \partial_2 \eqpunc ,
         \label{eq:two:complex:containment:2} \\
  \img \partial_2 
    &\subseteq 
       \ker \partial_1 \eqpunc ,
         \label{eq:two:complex:containment:1} \\
  \img \partial_1 
    &\subseteq 
       \ker \partial_0 \eqpunc . 
         \label{eq:two:complex:containment:0}
\end{align}

\begin{definition}
  \label{def:homology:spaces:2:1}
 
  The $2$-nd, the $1$-st, and the $0$-th homology spaces are defined to be the
  quotient spaces
  \begin{align}
    H_2
      &\defeq
         \ker\partial_{2} \ / \ \img\partial_{3} \eqpunc ,
      \label{eq:homology:2:1} \\
    H_1
      &\defeq
         \ker\partial_{1} \ / \ \img\partial_{2} \eqpunc ,
      \label{eq:homology:2:1} \\
    H_0
      &\defeq
         \ker\partial_{0} \ / \ \img\partial_{1} \eqpunc ,
      \label{eq:homology:2:0}
  \end{align}
  respectively.
  \qedd
\end{definition}

Of particular interest is $\dim H_1$, which represents the number of ``holes''
of the $2$-complex. Intuitively, the graph $\G = (\V,\A,\setS)$ in
Fig.~\ref{fig:square2}(a) has no hole, whereas the graph $\G = (\V,\A,\setS)$
in Fig.~\ref{fig:square2-hole}(a) has a hole because the latter graph is
missing the surface element $s_4$. This can be made more rigorous by
considering closed curves that pass through the surface elements. In the case
of Fig.~\ref{fig:square2}(a), any such closed curve can be continuously
contracted to a single point, whereas in the case of
Fig.~\ref{fig:square2-hole}(a), this is not possible because of the
missing surface element $s_4$. (For more details, see, e.g.,
\cite[Chap.~2]{hatcher:topology}.)

\begin{example}%
  \label{eg:2-complex-square}

  We continue Example~\ref{eg:2-complex-square:prelim:1}. Consider again the
  graph in Fig.~\ref{fig:square2}(a).
  Similar to Example~\ref{example:one:complex:basis}, 
  the nine cycles
  $\partial_2 s_i$, $i = 0, 1, \ldots, 8$, form a basis of
  of $\ker\partial_1$, and so also form a basis of $\img\partial_{2}$
  since $\img\partial_2 \subseteq \ker\partial_1$.
  Therefore,
  \begin{align}
    \img \partial_{2} 
      &= \ker\partial_1 \eqpunc ,
           \label{eq:exact}
  \end{align}
  which may equivalently be stated as
  \begin{align*}
    \dim H_{1} 
      &= 0,
  \end{align*}
  reflecting the lack of holes in this $2$-complex.  

  We can compute $\dim H_2 = \dim(\ker\partial_{2}) - \dim(\img\partial_{3}) =
  0$ as follows. First, by the rank-nullity theorem, we have
  $\dim(\ker\partial_{2}) = 0$. Second, from the triviality of the map
  $\partial_3$ it follows that $\dim(\img\partial_{3}) = 0$.

  Finally, $\dim H_0$ equals the number of connected components of $\G$, which
  is one.

  These findings can be summarized as follows:\footnote{Here and in the
    following, we omit the trivial spaces $C_{3}$ and $C_{-1}$ when discussing
    $2$-complexes.}
  \begin{align}
	  \label{eq:cw-planar}
    \begin{array}{cccc cc}
      & C_{2} & \xrightarrow{\partial_{2}} 
      & C_{1} & \xrightarrow{\partial_{1}} 
      & C_{0} \\
      \dim C_i & |\setS|  &   & |\A| &   & |\V| \\
      \dim H_i & 0 &   & 0 &   & 1 \eqpunc , 
    \end{array}
  \end{align}
  \qede
\end{example}

\begin{remark}
	Equation~\eqref{eq:cw-planar} holds for any planar graph, provided that $\s$ is taken as the set
	of all inner faces. 
  \qede
\end{remark}

\begin{remark}
	\label{remark:outer-face}
	If we include the outer face, then we will have $\dim H_{2} = 1$. To see this, note that
	adding the outer face will increase the dimension of $C_2$ by one, but leave the dimension of $\img \partial_2$ unchanged since the boundary of
	the added face is contained in the original image space. (In our example, it is not hard to
	verify that the boundary of the outer face is equal to
        $-\partial_2(s_0 + s_1 + \cdots + s_8)$.)
	Hence, as promised, we have $\dim(\ker \partial_2) = 1 = \dim H_2$, where the first equality is by the
	rank-nullity theorem and the second equality is by the triviality of $\partial_3$.
	Finally, the dimensions of $H_0$ and $H_1$ remain unchanged.
	We summarize this as follows.
	\begin{align}
		\begin{array}{cccc cc}
		  \label{eq:cw-planar-outer}
				& C_{2} & \xrightarrow{\partial_{2}} & C_{1} & \xrightarrow{\partial_{1}} & C_{0} \cr
				\dim C_i	& |S| 	   & 		                 & |\A|    & 					    & |\V| \cr
		\dim H_i	& 1 	   & 		                 & 0    & 					    & 1\eqpunc .
		\end{array}
	\end{align}
  \qede
\end{remark}

\begin{figure}[t]
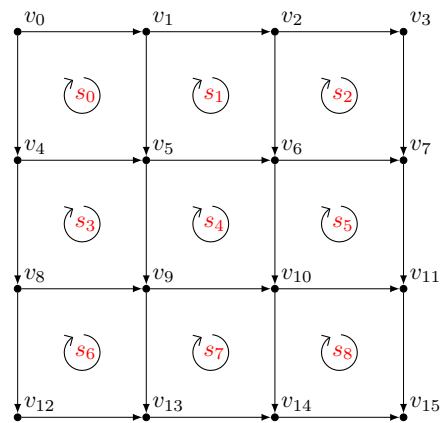
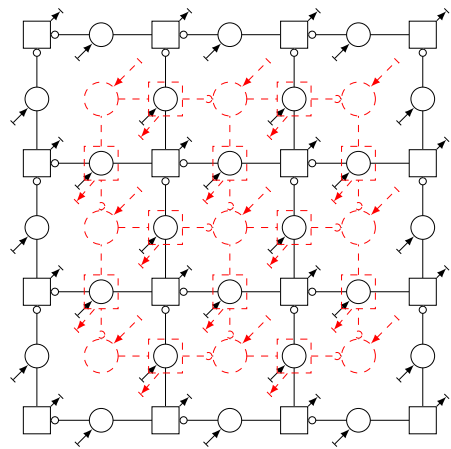
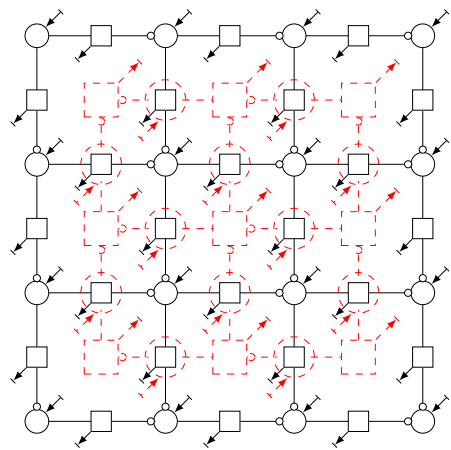

  \def\s{.90}	
  \def\L{3}
  \def\dist{1.9cm}	
  \def\linetype{}
  \def\dongle{.1}
  \def\c{.5}
  \ifonecol
  	\def\xshift{6}
  	\else
  	\def\xshift{4.5}
  \fi
  \def\yshift{-6.6}
  \def\xoffset{.3*\s}
  \def\yoffset{.35*\s}
  \centering
  \begin{tikzpicture}
  	\node(a)[scale=\s] at
  	(0*\xshift,0)
          {\input{figtex/square/square-2complex/square-graph.tex}};
  	\node[vcaption, below = of a]
          {(a) Graph $\G = (\V,\A,\setS)$};

   	\def\xshift{-.25}

   	\node(b)[scale=\s] at
   	(1*\xshift,\yshift)
           {\input{figtex/square/square-2complex/square-nfga.tex}};
   	\def\xoffset{-.362*\s} \def\yoffset{-0*\s}
   	\node[scale=\s] at
   	(0*\xshift+\xoffset,\yshift+\yoffset)
        {\def\clr{red} 
          \def\linetype{dashed}
          \input{figtex/square/square-2complex/square-nfgb-part.tex}};
  	\node[vcaption, below = of b]
          {(b) NFG $\N_{\partial_1}$ (black, solid) and
               NFG $\N_{\partial_2}$ (red, dashed)};

  	\node(c)[scale=\s] at
  	(1*\xshift,2*\yshift)
          {\input{figtex/square/square-2complex/square-nfgb.tex}};
  	\def\xoffset{-.085*\s} \def\yoffset{-0*\s}
  	\node[scale=\s] at
  	(1*\xshift+\xoffset,2*\yshift+\yoffset)
        {\def\clr{red} 
          \def\linetype{dashed}
          \input{figtex/square/square-2complex/square-nfga-part.tex}};
  	\node[vcaption, below = of c]
          {(c) NFG $\N_{d_1}$ (black, solid) and
               NFG $\N_{d_2}$ (red, dashed)};
  \end{tikzpicture}
  \caption{
	  A graph $\G = (\V,\A,\setS)$ and associated NFGs.  To allow a clearer overlay of figures, we
	  deviate from conventions here and use a circle node to denote an equality indicator function
	  and a square node to denote a parity indicator function.
  }
  \label{fig:square2}
\end{figure}

\begin{figure}[t]
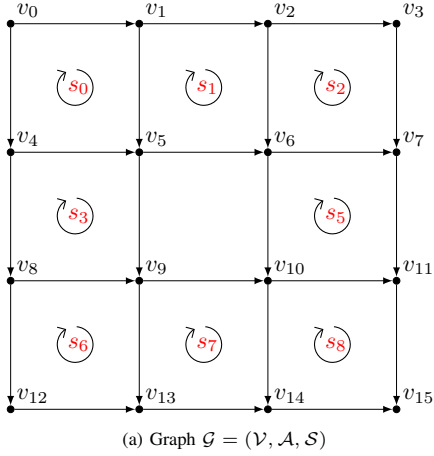
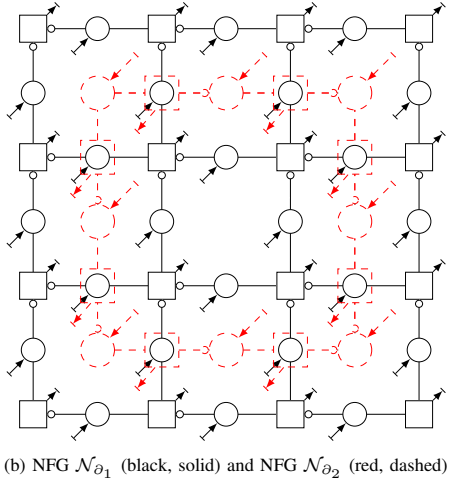
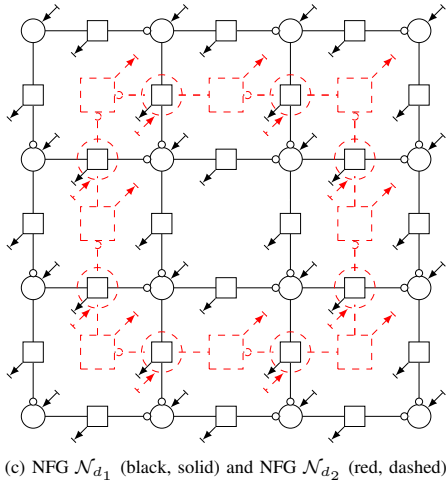

  \def\s{.90}	
  \def\L{3}
  \def\dist{1.9cm}	
  \def\linetype{}
  \def\dongle{.1}
  \def\c{.5}
  \ifonecol
  	\def\xshift{6}
  	\else
  	\def\xshift{4.5}
  \fi
  \def\yshift{-6.6}
  \def\xoffset{.3*\s}
  \def\yoffset{.35*\s}
  \centering
  \begin{tikzpicture}
  	\node(a)[scale=\s] at
  	(0*\xshift,0)
          {\input{figtex/square/square-2complex/square-graph-hole.tex}};
  	\node[vcaption, below = of a]
            {(a) Graph $\G = (\V,\A,\setS)$};

  	\def\xshift{-.25}
  	\node(b)[scale=\s] at
  	(1*\xshift,\yshift)
          {\input{figtex/square/square-2complex/square-nfga.tex}};
  	\def\xoffset{-.361*\s} \def\yoffset{-0*\s}
  	\node[scale=\s] at
  	(0*\xshift+\xoffset,\yshift+\yoffset)
          {\def\clr{red} \def\linetype{dashed}
          \input{figtex/square/square-2complex/square-nfgb-part-hole.tex}};
  	\node[vcaption, below = of b]
          {(b) NFG $\N_{\partial_1}$ (black, solid) and
               NFG $\N_{\partial_2}$ (red, dashed)};

  	\node(c)[scale=\s] at
  	(1*\xshift,2*\yshift)
          {\input{figtex/square/square-2complex/square-nfgb.tex}};
  	\def\xoffset{-.084*\s} \def\yoffset{-0*\s}
  	\node[scale=\s] at
  	(1*\xshift+\xoffset,2*\yshift+\yoffset){\def\clr{red}
        \def\linetype{dashed}
          \input{figtex/square/square-2complex/square-nfga-part-hole.tex}};
  	\node[vcaption, below = of c]
          {(c) NFG $\N_{d_1}$ (black, solid) and
               NFG $\N_{d_2}$ (red, dashed)};
  \end{tikzpicture}
  \caption{
	  A graph $\G = (\V,\A,\setS)$ and associated NFGs.  To allow a clearer overlay of figures, we
	  deviate from conventions here and use a circle node to denote an equality indicator function
	  and a square node to denote a parity indicator function.
  }
  \label{fig:square2-hole}
\end{figure}

\begin{example}%
  \label{eg:2-complex-square-hole}

  We continue Example~\ref{eg:2-complex-square-hole:prelim:1}. Consider again
  the graph in Fig.~\ref{fig:square2-hole}(a).
  Note that this $2$-complex has
  $1$-cycles that cannot be represented as the boundary of some
  $2$-chains. For example, the cycle $a_{56} + a_{6,10} - a_{9,10} -
  a_{59}$ is such a $1$-cycle. This points to the fact that, although the
  eight $1$-cycles $\partial_2 s_i$,
  $i \in \{0\dots,8\} \backslash \{4\}$,
  form a basis of
  $\img\partial_{2}$, they do not form a basis of $\ker\partial_1$.
  We would like to show that
  \begin{align*}
    \dim H_{1} 
      &= 1,
  \end{align*}
  which reflects the ``hole'' in the $2$-complex represented by
  Fig.~\ref{fig:square2-hole}(a).
  This is indeed the case since, starting with the faces as in Example~\ref{eg:2-complex-square},
  then removing $s_4$ does not affect $\ker\partial_{1}$, but it reduces
  $\dim(\img\partial_2)$ by one, since $\partial_{2}(s_i)$, $i=0,\ldots,8$, are independent as
  observed in Example~\ref{eg:2-complex-square}.

  For $\dim H_2$ and $\dim H_0$ we obtain $0$ and $1$, respectively. Overall,
  these findings can be summarized as follows:
  \begin{align}
    \begin{array}{cccc cc}
      & C_{2} & \xrightarrow{\partial_{2}} 
      & C_{1} & \xrightarrow{\partial_{1}} 
      & C_{0} \\
      \dim C_i & |\setS|  &   & |\A| &   & |\V| \\
      \dim H_i & 0 &   & 1 &   & 1 \eqpunc  
    \end{array}
  \end{align}
  \qede
\end{example}

Let us conclude this section by pointing out that there are quantum stabilizer
codes based on $2$-complexes. For these codes, the number of information
qubits equals $\dim H_1$. We refer the interested reader
to~\cite{kitaev2003fault} about the so-called toric (quantum stabilizer)
code. See also the discussion in~\cite{Li:Vontobel:16:1}.

\subsection{NFG representation of the boundary operator}
\label{sec:two:complex:nfg:representation:1}

It is straightforward to generalize the definitions in
Section~\ref{sec:one:complex:nfg:representation:1} to $2$-complexes. In this
subsection, we will therefore only discuss some examples.

\begin{example}
  \label{eg:2-complex-square:2}

  We continue Examples~\ref{eg:2-complex-square:prelim:1}
  and~\ref{eg:2-complex-square}. Fig.~\ref{fig:square2}(b) shows an
  input/output NFG $\N_{\partial_1}$ (black, solid lines) and an input/output
  NFG $\N_{\partial_2}$ (red, dashed lines).
  To allow a clearer overlay of NFGs in this figure, we omit the ``$=$'' and ``$+$'' symbols and draw an equality indicator function as a
  circle node and a parity indicator function as a square node.
  Note that the output half-edges
  of $\N_{\partial_2}$ are in parallel to the input half-edges of
  $\N_{\partial_1}$. If we remove such pairs of half-edges and replace them by
  full-edges connecting $\N_{\partial_2}$ and $\N_{\partial_1}$, we obtain the
  input/output NFG $\N_{\partial_1 \circ \partial_2}$ corresponding to the
  mapping $\partial_1 \circ \partial_2$, which is the concatenation of first
  applying the mapping $\partial_2$ and then the mapping $\partial_1$. Note
  that because $\partial_1 \circ \partial_2 = 0$, which is a consequence of
  $\img \partial_2 \subseteq \ker \partial_1$, this mapping is trivial.
  \qede
\end{example}

\begin{example}
  \label{eg:2-complex-square-hole:2}

  We continue Examples~\ref{eg:2-complex-square-hole:prelim:1}
  and~\ref{eg:2-complex-square-hole}. Fig.~\ref{fig:square2-hole}(b) shows an
  input/output NFG $\N_{\partial_1}$ (black, solid lines) and an input/output
  NFG $\N_{\partial_2}$ (red, dashed lines).
  \qede
\end{example}

It should be clear that once we have drawn $\N_{\mu}$ for some linear mapping
$\mu$, we can easily obtain $\N_{\ker \mu}$ and $\N_{\img \mu}$.

\subsection{Cochains, coboundary operators, and cohomology spaces}
\label{sec:two:complex:dual:domain:1}

For all the objects that we encountered in
Section~\ref{sec:two:complex:primal:domain:1}, there are dual objects that we
now introduce.

\begin{definition}
  \label{def:cochain:1}

  Given a $2$-complex as in Definitions~\ref{def:two:complex:1}
  and~\ref{def:two:complex:1:extension:1}, for all $i = -1, 0, 1, 2, 3$, let
  $\widehat{C}_i$ be the dual space of $C_i$, i.e., the space of all linear
  maps from $C_i$ to $\F$. Elements of $\widehat{C}_i$ are called
  $i$-cochains. Note that $\widehat{C}_{-1} = \{ 0 \}$ and $\widehat{C}_{3} = \{
  0 \}$.
  \qedd
\end{definition}

For $i = 0, 1, 2, 3$, the coboundary operator $d_i$ is defined analogously to
Definition~\ref{def:coboundary:1}. For $i = 2$, the condition
in~\eqref{eq:def:coboundary:1:condition:1} implies the following for an
arbitrary face $s \in \setS$ and an arbitrary $\varphi \in \widehat{C}_{1}$:
\begin{align*}
  (d_2 \varphi)(s) 
    &= \varphi \bigl( \partial_2(s) \bigr) 
     = \varphi \left( \sum_{a \in A_s} \!\!  \alpha(a) \cdot a \right) 
     = \!\! \sum_{a \in A_s} \!\!  \alpha(a) \cdot \varphi(a) \eqpunc ,
\end{align*}
where we have used the notation as in
Definition~\ref{def:two:complex:boundary:operator:1}. If $\varphi$ represents
some flow along the directed edges, then $d_2 \varphi$ is the function that
yields the curl around oriented faces.

Note that if a flow is obtained by applying the coboundary operator $d_1$ to a
$0$-chain representing some potential function, then the curl around any
oriented face will be zero. This implies that $\img d_1 \subseteq \ker
d_2$. (This generalizes a well-known result from vector calculus which states
that $\operatorname{curl} \circ \operatorname{grad} = 0$. See \cite[Section~15.4]{bamberg:course}
for a more general discussion.) Because $\img d_0 =
\{ 0 \}$ and $\ker d_3 = \widehat{C}_2$, we therefore have
\begin{align}
  \img d_0
    &\subseteq 
       \ker d_1 \eqpunc ,
         \label{eq:two:complex:d:containment:0} \\
  \img d_1 
    &\subseteq 
       \ker d_2 \eqpunc ,
         \label{eq:two:complex:d:containment:1} \\
  \img d_2
    &\subseteq 
       \ker d_3 \eqpunc . 
         \label{eq:two:complex:d:containment:2}
\end{align}
The fact that there might be a gap between $\img d_i$ and $\ker d_{i+1}$, $i =
0, 1, 2$, is captured by the $i$-th cohomology space.

\begin{definition}
  The $0$-th, the $1$-st, and the $2$-nd cohomology spaces are defined to be
  the quotient spaces
  \begin{align}
    \widehat{H}_0
      &\defeq
         \ker d_{1} \ / \ \img d_{0} \eqpunc ,
      \label{eq:cohomology:1:0} \\
    \widehat{H}_1
      &\defeq
         \ker d_{2} \ / \ \img d_{1} \eqpunc ,
      \label{eq:cohomology:1:1} \\
    \widehat{H}_2
      &\defeq
         \ker d_{3} \ / \ \img d_{2} \eqpunc ,
      \label{eq:cohomology:1:2}
  \end{align}
  respectively.
  \qedd
\end{definition}

A similar argument to the one in Section~\ref{sec:one:complex:dual:domain:1} gives $\dim
\widehat{H}_i = \dim H_i$ for $i = 0, 1, 2$,
and so, $\dim \widehat{H}_i$  has the same interpretation as $\dim H_i$.

The objects that we have introduced in this subsection are summarized in
\eqref{eq:two:complex:cochain:1}; the collection of these objects is known as
a $2$-dimensional cochain complex.

An important application of these homology and cohomology spaces is to
characterize continuous $2$-dimensional surfaces. This can be done by
triangulating these surfaces and studying the dimension of the resulting
homology and cohomology spaces. Most importantly, for a given $i$, $\dim H_i$
will be independent of the chosen triangulation.

\subsection{NFG representation of the coboundary operator}
\label{sec:two:complex:nfg:representation:2}

It is straightforward to generalize the definitions in
Section~\ref{sec:one:complex:nfg:representation:2} to $2$-complexes. In this
subsection, we will therefore only discuss some examples.

\begin{example}
  \label{eg:2-complex-square:3}

  We continue Examples~\ref{eg:2-complex-square:prelim:1},
  \ref{eg:2-complex-square}, and~\ref{eg:2-complex-square:2}.
  Fig.~\ref{fig:square2}(c) shows an input/output NFG $\N_{d_1}$ (black, solid
  lines) and an input/output NFG $\N_{d_2}$ (red, dashed lines). Note that the
  output half-edges of $\N_{d_1}$ are in parallel to the input half-edges of
  $\N_{d_2}$. If we remove such pairs of half-edges and replace them by
  full-edges connecting $\N_{d_1}$ and $\N_{d_2}$, we obtain the input/output
  NFG $\N_{d_2 \circ d_1}$ corresponding to the mapping $d_2 \circ d_1$. Note
  that because $d_2 \circ d_1 = 0$, which is a consequence of $\img d_1
  \subseteq \ker d_2$, this mapping is trivial.
  \qede
\end{example}

\begin{example}
  \label{eg:2-complex-square-hole:3}

  We continue Examples~\ref{eg:2-complex-square-hole:prelim:1},
  \ref{eg:2-complex-square-hole},
  and~\ref{eg:2-complex-square-hole:2}. Fig.~\ref{fig:square2-hole}(c) shows
  an input/output NFG $\N_{d_1}$ (black, solid lines) and an input/output NFG
  $\N_{d_2}$ (red, dashed lines).
  \qede
\end{example}

We conclude this subsection by noting that an example where an NFG appeared
whose exterior function is proportional to $\img d_1$ is the support NFG of
Fig.~11 in~\cite{vontobel:electrical}. In that paper, factor-graph
representations of electrical networks are considered, and $\ker d_2$ and
$\img d_1$ appear naturally in that context because $\ker d_2$ encodes exactly
Kirchhoff's voltage law and $\img d_1$ gives voltage differences along edges
based on voltage potentials at nodes. Because $\img d_1 \subseteq \ker d_2$,
these voltage differences automatically satisfy Kirchhoff's voltage law.

\subsection{$2$-torus lattice graph}
\label{sec:cw-torus}

Because of the importance of $2$-torus lattice graphs for later sections, this
subsection discusses this object in detail.

\begin{figure}[t]
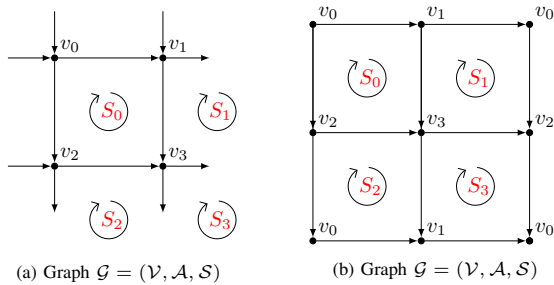
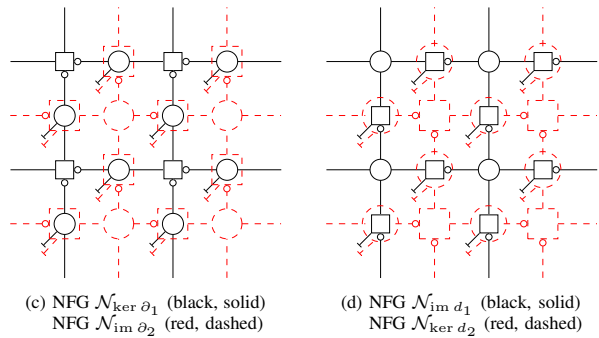

  \def\s{.83}	
  \def\L{1}
  \def\dist{1.8cm}	
  \def\linetype{}
  \def\dongle{.1}
  \def\c{.5}
  \def\minsize{2.5mm}
  \ifonecol
  	\def\xshift{6}
  \else
  	\def\xshift{4}
  \fi
  \def\yshift{-5}
  \centering
  \begin{tikzpicture}
    \def\s{.80}
    \ifonecol \def\xshift{6.2} \else \def\xshift{4.2} \fi
    \def\xoffset{-.246*\s} \def\yoffset{0*\s}

    \node(a)[scale=\s] at (0,0)
      {\input{figtex/torus2/torus-graph-variable.tex}};
    \node(al)[vcaption, below=of a]
             {(a) Graph $\G = (\V,\A,\setS)$};

    \node(b)[scale=\s] at
      (1.05*\xshift+\xoffset,\yoffset)
      {\input{figtex/torus2/torus-graph2-variable.tex}};
    \node(bl)[vcaption, below=of b]
             {(b) Graph $\G = (\V,\A,\setS)$};
     
    \node(c)[scale=\s] at (0,\yshift){\input{figtex/torus2/torus-r-dnfg.tex}};
    \node[scale=\s] at
      (\xoffset,\yshift+\yoffset)
    {\def\linetype{dashed} 
      \def\clr{red} 
      \input{figtex/torus2/torus-l-pnfg.tex}};
    \node[vcaption, below=of c]
         {(c) \parbox[t]{0.32\linewidth}{
                NFG $\N_{\ker\partial_1}$ (black, solid) \\
                NFG $\N_{\img\partial_2}$ (red, dashed)
              }};

  	\node(d)[scale=\s] at (\xshift,\yshift)
          {\input{figtex/torus2/torus-r-pnfg.tex}};
  	\node[scale=\s] at
  	(\xshift+\xoffset,\yshift+\yoffset)
        {\def\linetype{dashed}
          \def\clr{red} \input{figtex/torus2/torus-l-dnfg.tex}};
  	\node[vcaption, below=of d]
         {(d) \parbox[t]{0.32\linewidth}{
                NFG $\N_{\img d_1}$ (black, solid) \\
                NFG $\N_{\ker d_2}$ (red, dashed)
              }};
  \end{tikzpicture}
  \caption{
	  A $2$-torus lattice graph $\G = (\V,\A,\setS)$ and its associated NFGs.
	  To allow a clearer overlay of figures, we deviate from conventions here and use a circle node
	  to denote an equality indicator function and a square node to denote a parity indicator
	  function. In (c) and (d) the left and right most edges are identified and the top and
	  bottom most edges are identified.
  }
  \label{fig:torus}
\end{figure}

We define a $2$-torus lattice graph as a $2$-dimensional square lattice graph
drawn on a torus.  Namely, the graph $\G$ in Fig.~\ref{fig:torus}(a) is a
$2$-torus lattice graph, which is obtained from the graph in
Fig.~\ref{fig:square} by identifying the left and right most edges (resulting
in a cylinder) and the top and bottom most edges (resulting in a doughnut
shape).  More explicitly, the graph $\G \defeq (\V,\A,\setS)$ in
Fig.~\ref{fig:torus}(a) is obtained from the graph in Fig.~\ref{fig:square} as
follows:
\begin{itemize}

\item $v_0$ is obtained by identifying $v_0$, $v_2$, $v_6$, $v_8$;

\item $v_1$ is obtained by identifying $v_1$, $v_7$;

\item $v_2$ is obtained by identifying $v_3$, $v_5$;

\item $v_3$ is obtained from $v_4$;

\item edges in Fig.~\ref{fig:torus}(a) are obtained by identifying edges in
  Fig.~\ref{fig:square} in a manner consistent with the above vertex
  identifications.

\end{itemize}
In other words, the resulting graph in Fig.~\ref{fig:torus}(a) is such that
the edge in Fig.~\ref{fig:torus}(a) that leaves $v_2$ downwards continues as
the edge that enters $v_0$ from above, etc.
Note that $|\V| = 4$, $|\A| = 8$, and $|\setS| = 4$.

An alternative representation of the graph in Fig.~\ref{fig:torus}(a) is shown
in Fig.~\ref{fig:torus}(b), where identically labeled vertices and edges are
identified.

Figs.~\ref{fig:torus}(c) and~(d) show some of the NFGs that can be associated
with the graph in Fig.~\ref{fig:torus}(a).

\begin{lemma}
  \label{lemma:torroidal:lattice:dim:h1:1}

  For the above-defined $2$-torus lattice graph it holds that
  \begin{align*}
    \dim H_1
      &= 2.
  \end{align*}
  \qedl
\end{lemma}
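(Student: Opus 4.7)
The plan is to compute $\dim H_1 = \dim \ker \partial_1 - \dim \img \partial_2$ by evaluating each summand separately via the rank--nullity theorem, rather than explicitly exhibiting a basis of $H_1$ (which would correspond to the two independent non-contractible loops around the torus).

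First, I would handle $\dim \ker \partial_1$. Since the toroidal lattice $\G$ is connected, the standard argument shows that $\img \partial_1$ consists of all $0$-chains $\sum_{v} c_v \cdot v$ whose coefficients sum to zero in $\F$ (the image of $\partial_1$ equals the kernel of the augmentation $\partial_0$, which is determined by one linear constraint per connected component). Thus $\dim \img \partial_1 = |\V| - 1 = 3$, and by rank--nullity applied to $\partial_1: C_1 \to C_0$,
\begin{align*}
  \dim \ker \partial_1
    &= |\A| - \dim \img \partial_1
     = 8 - 3
     = 5 \eqpunc .
\end{align*}

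Second, I would compute $\dim \img \partial_2$ via $\dim \img \partial_2 = |\setS| - \dim \ker \partial_2$, so the task reduces to showing $\dim \ker \partial_2 = 1$. The key observation is that on the toroidal lattice, every edge $a \in \A$ lies on the oriented boundary of exactly two faces, and the two associated signs $\alpha(a)$ are opposite (this is the orientability of the torus and is directly verifiable from Fig.~\ref{fig:torus}(b), where each interior edge of the square is shared between two adjacent faces, and each identified pair of boundary edges is likewise shared between two faces). Consequently, the ``fundamental $2$-cycle'' $z^{\star} \defeq \sum_{s \in \setS} s$ satisfies $\partial_2 z^{\star} = 0$, giving $\dim \ker \partial_2 \geq 1$.

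For the reverse inequality, I would argue that any $z = \sum_{s} z(s) \cdot s \in \ker \partial_2$ must have all coefficients equal. Indeed, for each edge $a$ shared by faces $s_1, s_2$, the coefficient of $a$ in $\partial_2 z$ equals $\pm (z(s_1) - z(s_2))$, which must vanish; hence $z(s_1) = z(s_2)$ for any pair of adjacent faces. Since the dual graph of the toroidal lattice is connected (any two faces can be linked by a sequence of edge-adjacencies), this forces $z(s)$ to be constant over $s \in \setS$, so $z$ is a scalar multiple of $z^{\star}$ and $\dim \ker \partial_2 = 1$. Therefore $\dim \img \partial_2 = 4 - 1 = 3$, and combining the two computations yields $\dim H_1 = 5 - 3 = 2$. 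The only non-routine step is the dual-graph connectedness argument for $\dim \ker \partial_2 = 1$, but this is a direct consequence of the identification pattern in Fig.~\ref{fig:torus}(b).
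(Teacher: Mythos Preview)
Your proof is correct and follows essentially the same route as the paper: compute $\dim\ker\partial_1 = |\A| - (|\V|-1) = 5$ via connectedness and rank--nullity, then $\dim\img\partial_2 = |\setS| - \dim\ker\partial_2 = 4-1=3$ via the observation that $\ker\partial_2$ is spanned by the sum of all faces, and subtract. You supply more detail than the paper does for the claim $\dim\ker\partial_2 = 1$ (the adjacent-faces/dual-graph-connectedness argument), which the paper simply asserts; one small remark is that your parenthetical about ``the kernel of the augmentation $\partial_0$'' uses the reduced-homology convention rather than the paper's, where $\partial_0$ is the zero map to $C_{-1}=\{0\}$, but this does not affect your actual computation of $\dim\img\partial_1$.
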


\begin{proof}
  We first show that $\dim(\ker\partial_{1}) = 5$ and that
  $\dim(\img\partial_{2}) = 3$. This then yields
  \begin{align*}
    \dim H_1
      &= \dim( \ker\partial_{1} \ / \ \img\partial_{2} ) \\
      &= \dim( \ker\partial_{1} ) - \dim( \img\partial_{2} ) \\
      &= 5 - 3 \\
      &= 2 \eqpunc ,
  \end{align*}
  which is the promised result. The result $\dim(\ker\partial_{1}) = 5$
  follows immediately from \eqref{eq:dim-ker-bd1} as $\dim(\ker\partial_{1}) = |\A| - (|\V| - 1) =
  5$. On the other hand, the result $\dim(\img\partial_{2}) = 3$ follows from
  observing that the boundary of a $2$-chain is equal to zero if and only if
  it is (up to a scaling factor in $\F$) equal to the sum of all the faces,
  i.e., $\dim(\ker\partial_{2}) = 1$. Therefore, $\dim(\img\partial_{2}) =
  |\setS| - 1 = 3$.
\end{proof}

The fact that $\dim H_1 > 0$ is an indication that the graph $\G \defeq
(\V,\A,\setS)$ has holes, i.e., there are cycles in $C_1$ that cannot be
written as boundaries of $2$-chains. The space $H_1$ consists of the
equivalence classes of such cycles, where two cycles are equivalent if they
differ by a boundary of a $2$-chain. Because $\dim H_1 = 2$,
%there are two such equivalence classes.
a basis of $H_1$ consists of two such equivalence classes.

Let $\{c_{\rm h}+\img\partial_{2}, \ c_{\rm v}+\img\partial_{2}\}$ be a basis
of $H_{1}$, where $c_{\rm h}$, $c_{\rm v}\in \ker\partial_{1} \setminus
\img\partial_{2}$ are two non-equivalent cycles that are not boundaries, say,
\begin{align*}
  c_{\rm h}
    &\defeq 
       a_{01} + a_{10}\eqpunc , \\
  c_{\rm v}
    &\defeq 
       a_{02} + a_{20}\eqpunc .
\end{align*}
(Here ``$\rm h$'' and ``$\rm v$'' stand for ``horizontal'' and ``vertical''
w.r.t. Figs.~\ref{fig:torus}(a) or (b).)  Then one can express the space of
$1$-cycles as the union of the cosets
\begin{align*}
  \ker\partial_{1}
    &= \bigcup_{\alpha_{\rm h},\alpha_{\rm v} \in \F} \ 
         (
           \alpha_{\rm h} \cdot c_{\rm h} 
           + 
           \alpha_{\rm v} \cdot c_{\rm v}
           +
           \img\partial_{2}
         ) \eqpunc .
\end{align*}
Fig.~\ref{fig:torus}(c) shows the NFGs representing $\img\partial_{2}$
(dashed) and $\ker\partial_{1}$ (solid).

The above findings can easily be generalized to the $2$-torus lattice graph with
$n\defeq L_1\times L_2$ vertices, where $L_1$ and $L_2$ are arbitrary 
integers larger than one.
(This $2$-torus lattice graph can be obtained from the $(L_1+1)\times (L_2+1)$ square lattice using
a similar identification of edges and vertices as described at the beginning of the
subsection.)
In particular, we find that $\dim(\img\partial_{2})
= n - 1$ and $\dim(\ker\partial_{1}) = n + 1$, which imply $\dim H_1 = 2$ for
all $L_1$ and all $L_2$. Moreover, using the obvious extension of the
vertices' indexing, i.e., by indexing the first row of vertices by $0, \ldots,
{L_{2}-1}$, the second row by ${L_2}, \ldots, {2L_2-1}$, etc., one may choose
\begin{align*}
  c_{\rm h} 
    &= a_{0,1} + a_{1,2} + \cdots + a_{L_{2}-1,0} \eqpunc , \\
  c_{\rm v}
    &= a_{0,L_2} + a_{L_2,2L_2} + \cdots + a_{(L_1-1)L_2,0} \eqpunc ,
\end{align*}
such that
\begin{align}
  \ker\partial_{1}
    &= \bigcup_{\alpha_{\rm h},\alpha_{\rm v} \in \F} \ 
         (
           \alpha_{\rm h} \cdot c_{\rm h} 
           + 
           \alpha_{\rm v} \cdot c_{\rm v}
           +
           \img\partial_{2}
         ) \eqpunc .
           \label{eq:coset:2}
\end{align}

Finally, since $\dim H_0 = 1$ for a connected graph and $\dim H_{2} = \dim
\left( \ker\partial_{2} \right)$ for a $2$-complex, we may summarize our
discussion of the $2$-torus lattice graph by
\begin{align}
	\label{eq:cw-torus-summary}
  \begin{array}{cccccc}
      & C_{2} & \xrightarrow{\partial_{2}} 
      & C_{1} & \xrightarrow{\partial_{1}}  
      & C_{0} \\
    \dim C_i	
      & n     &   & 2n    &    & n \\
    \dim H_i	& 1 &      & 2  &   & 1 \eqpunc .
  \end{array}
\end{align}

We conclude this section with the following observation that will be useful in
Section~\ref{sec:kw:duality:ising:model:1}. Namely, for the $2$-torus lattice graph, we have
\begin{align}
	\label{eq:torus-selfdual}
	\img \partial_2  = \img d_1.
\end{align}
This can be seen by comparing Figs.~\ref{fig:torus}(c) (red, dashed lines) and
(d) (black, solid lines), and is a consequence of the \emph{self-duality} of
the $2$-torus lattice graph. Namely, for any graph $\G = (\V,\A,\s)$ the
\emph{dual graph} $\G_d \defeq (\s, \A_d, \V)$ is defined such that any two
vertices in $\G_d$ are adjacent iff the corresponding surfaces in $\G$ share
an edge. (One can properly define directions for the edges, but we omit the
details.) With this definition, one can verify that the graph $\G$ in
Fig.~\ref{fig:torus}(a) is isomorphic to its dual, which consequently leads to
\eqref{eq:torus-selfdual}.

We conclude this section on $2$-complexes by pointing out
Appendix~\ref{sec:cw-complex} on higher-order complexes.

\section{Statistical Models}
\label{sec:statistical}

In this section we offer a brief introduction to statistical models, in
particular to the Boltzmann distribution and the partition function.

A \emph{statistical model} is defined as a collection of random variables
(RVs) $\{ X_{1}, \dots, X_{n} \}$, where each RV assumes values from a finite
set $\cal X$. (Frequently, $X_{1}, \dots, X_{n}$ are called spins.) With each
configuration $x\in {\cal X}^{n}$ we associate an energy level $E(x)$ such
that the joint distribution of the RVs is the Boltzmann distribution
\begin{align}
  p(x) 
    &\defeq
       \frac{\operatorname{e}^{-\beta E(x)}}{Z} \eqpunc ,
         \label{eq:boltzmann:distribution:1}
\end{align}
where $Z$ is the partition function defined as
\begin{align*}
  Z 
    &\defeq  
        \sum_{x\in {\cal X}^{n}} 
          \operatorname{e}^{-\beta E(x)} \eqpunc ,
\end{align*}
where $\beta\defeq \frac{1}{kT}$ is the inverse temperature, and where $k$ and
$T$ are, respectively, the Boltzmann constant and the temperature.

At first sight, $Z$ is ``only'' a temperature-dependent normalization constant
that appears in~\eqref{eq:boltzmann:distribution:1}. However, the way $Z$
changes with temperature $T$ (and more generally with other parameters like
pressure) tells a lot about how macroscopic properties of a system change with
changing temperature. In particular, in the limit $n \to \infty$ (the
so-called thermodynamic limit), $Z$ or its derivatives might exhibit
discontinuities, thereby delineating phase
transitions~\cite{baxter:exactly}.

Note that
the partition function not only appears in statistical physics in the study of
macroscopic properties induced by the microscopic properties, but, among other
areas, also in the following fundamental problems:
\begin{itemize}

\item the capacity of a constrained channel in information theory (see, e.g.,
  \cite{kato-zeger:capacity}),

\item the number of graph (vertex) colorings in graph theory (see, e.g.,
  \cite{whitney:coloring}),

\item permanents, graph homomorphism, integer flows, etc. (see, e.g.,
  \cite{barvinok:16:1}.

\end{itemize}

Given the importance of the partition function in various fields, several
approaches have been devised to tackle the partition function from different
angles. This has resulted in a variety of methods that have provided:
\begin{itemize}

\item estimates of the partition function, e.g., stochastic approximations
  \cite{potamianos:stochastic}, the renormalization group approximation
  \cite{kadanoff:renorm, wilson:renorm}, and the Bethe approximation
  \cite{bethe};

\item bounds on the partition function \cite{wainwright:bounds};

\item identities that the partition function satisfies, such as the
  Kramers--Wannier duality \cite{kramers-wannier}.

\end{itemize}

A central topic of the remaining sections will be the re-derivation of the
Kramers--Wannier duality for the two-dimensional Ising model with the help of
NFGs and the concepts introduced in earlier sections. The use of some notions
from algebraic topology toward proving the Kramers--Wannier duality is not new
\cite{savit:duality, druhl:duality}, however, the separation of the arguments
leading to the Kramers--Wannier duality into two steps, as discussed below,
seems natural, and concisely points out what is needed for a
Kramers--Wannier-type duality to hold. (See also
Fig.~\ref{fig:kw-duality:simplified:1}.) Moreover, Kramers--Wannier duality is
typically argued in the limit when the size of the square lattice (on the
torus) is large. In contrast, the results provided here hold for any size of
the lattice.

\section{Kramers--Wannier Duality \\
              for the Ising Model}
\label{sec:kw}

In this section we first introduce the Ising model. Afterwards, we suitably
combine the results that we have encountered in earlier sections toward
re-deriving the Kramers--Wannier duality for that model.

\subsection{Ising model}
\label{sec:ising:model:1}

Let $L$ be an arbitrary positive integer and define $n \defeq L^2$. The Ising
model \cite{ising1925} (see, e.g., \cite{baxter:exactly}) is a statistical model with binary
spins, where, w.l.o.g., we will assume ${\cal X}\defeq \F_{2}$.  Moreover, the
spins are arranged over some lattice vertices (called sites). The
two-dimensional nearest-neighbor (ferromagnetic) Ising model is one where the
lattice is chosen as the square lattice of size $L \times L$ (on the plane or
the torus), and the energy of a configuration $x\in \F_{2}^{n}$ is defined as
\begin{align}
  E(x)
    &\defeq 
       -
       \sum_{(i,j)\in \A} 
         \big(
           2\delta_{=}(x_i,x_j)
           -
           1
         \big)\eqpunc,
  \label{eq:energy-ising}
\end{align}
where $\A$ is the set of edges of the lattice.\footnote{The convention taken
  by physicists in adopting the minus sign in (\ref{eq:energy-ising}) is so
  that the configurations with aligned spins (i.e., the configurations where
  all spins are equal) have the lowest energy level.} Subsequently, unless
specified otherwise, we will refer to this model simply as the
(two-dimensional) Ising model. In this case, the partition function can be written
as
\begin{align}
  \label{eq:z-ising-1}
  Z &= \sum_{x\in \F_2^n} 
         \prod_{(i,j)\in \A} 
           \kappa_{\beta}(x_j-x_i) \eqpunc ,
\end{align}
where 
\begin{equation}
\begin{aligned}
	\label{eq:kappa}
  \kappa_{\beta}(0)& \defeq e^{\beta} \eqpunc , \\
  \kappa_{\beta}(1)& \defeq e^{-\beta} \eqpunc .
\end{aligned}
\end{equation}
If $\beta$ is clear from the context, then we will simply write $\kappa$
instead of $\kappa_{\beta}$.

Rewriting the r.h.s. of (\ref{eq:z-ising-1}) more explicitly as
\begin{align}
  \label{eq:z-ising-1.2}
  \sum_{x_{\V}\in \F_2^n} \sum_{x_{\A}\in \F_{2}^{2n}} 
         \prod_{e \defeq (i,j)\in \A} 
			\kappa_{\beta}(x_e) \cdot \delta_{+}(x_e,x_i,-x_j) \eqpunc ,
\end{align}
it is clear that
\begin{align}
	\label{eq:z-pnfg}
  Z
    &= Z_{\Nisingbeta},
\end{align}
where $\Nisingbeta$ is the pairwise interaction NFG in
Fig.~\ref{fig:ising-pnfg} (see Definition~\ref{assumption:nfg:2}).  In other words, the NFG $\Nisingbeta$ represents
the Ising model on the torus at inverse temperature $\beta$, where, as was
discussed in Section~\ref{sec:interaction}, the sites are represented by
equality indicator functions and the parity indicator functions ensure that
the interaction between a pair of adjacent sites depends only on the
difference (modulo $2$) between their spins.  From \eqref{eq:z-pnfg}, and by
recalling the definition of $\code_{\N}$
from~\eqref{eq:set:proj:valid:configuration:1}, we note
that~\eqref{eq:z-ising-1} can also be written as
\begin{align}
  \label{eq:z-ising-2}
  Z &= 2 
         \cdot
         \sum_{x \in \code_{\Nisingbeta}} 
           \prod_{e\in \A}
             \kappa_{\beta}(x_{e}) \eqpunc .
\end{align}
The factor $2$ stems from the fact that for every term in the sum
$\sum_{x \in \code_{\Nisingbeta}} \!\! \cdots$ in~\eqref{eq:z-ising-2}
there are $|\F_{2}|=2$
corresponding terms in the sum
$\sum_{x\in \F_2^n} \cdots$ in~\eqref{eq:z-ising-1}.

\begin{figure}[t]
  \def\s{1}	
  \def\c{.56}	
  \def\L{2}	
  \def\dist{2cm}
  \centering
  \def\yshift{-7cm}
  \def\xshift{9cm}
  \begin{tikzpicture}[scale=\s]
  	\node(a) at (0,0){\input{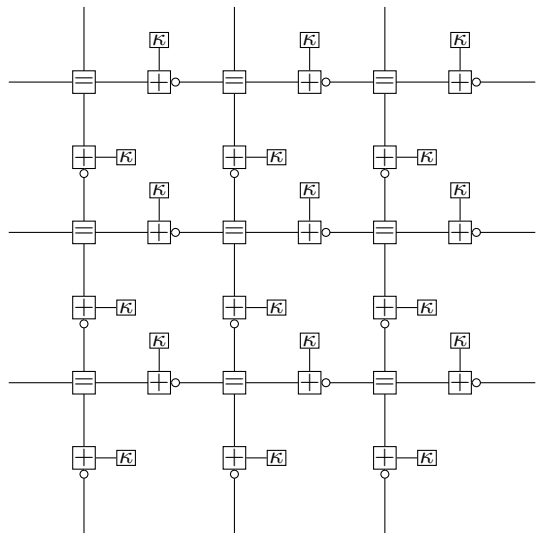}};
  \end{tikzpicture}
  \caption{The NFG $\Nisingbeta$ representing the Ising model on the
    torus. (Note: the leftmost edges are identified with the rightmost edges
    and the topmost edges are identified with the bottommost edges. Moreover,
    $\kappa$ is shorthand notation for $\kappa_{\beta}$.)}
  \label{fig:ising-pnfg}
\end{figure}

\begin{figure}[t]
	\def\s{1}	
	\def\c{.56}	
	\def\L{2}	
	\def\dist{2cm}
	\centering
	\def\yshift{-7cm}
	\begin{tikzpicture}[scale=\s]
		\node(a) at (0,0){\input{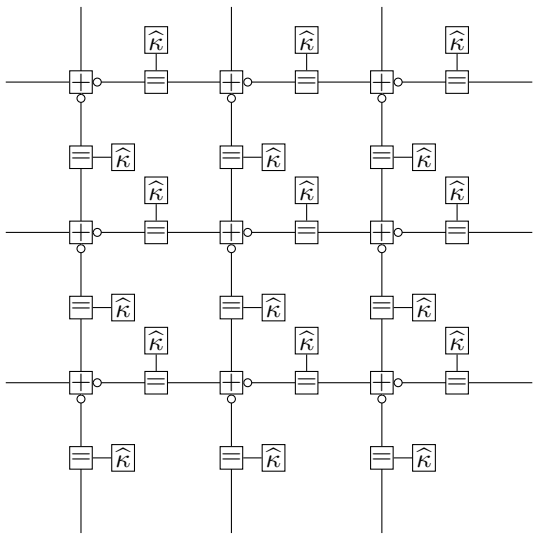}};
	\end{tikzpicture}
	\caption{The Fourier-transformed NFG $\FTNisingbeta$. (Note: the
          leftmost edges are identified with the rightmost edges and the
          topmost edges are identified with the bottommost edges.
          $\widehat{\kappa}$ is shorthand notation for
          $\widehat{\kappa_{\beta}}$.)}
	\label{fig:ising-dnfg}
\end{figure}

\subsection{Fourier-transformed NFG of Ising model}
\label{sec:ft-ising}
In the previous subsection we saw that the NFG $\Nisingbeta$ represents the partition function of the Ising model. From
Section~\ref{sec:fourier}, another
obvious NFG that represents the partition function of the Ising model is the Fourier-transformed NFG
$\FTNisingbeta$ in Fig.~\ref{fig:ising-dnfg}. Namely, we have
\begin{align}
	Z 
	\stackrel{\eqref{eq:z-pnfg}}{=}
	Z_{\Nisingbeta} = 2^{-n} Z_{\FTNisingbeta} \eqpunc,
\end{align}
where
the second equality is by \eqref{thm:ft-z} of Theorem~\ref{thm:nfg-dual} upon
noting that $|\A| - |\V| = n$ for the torus.  This expression of the partition function can be written more
explicitly as
\begin{align*}
	Z = 2^{-n} \sum_{x\in \code_{\widehat{N}_{I(\beta)}}} \prod_{e\in \A}\widehat{\kappa_{\beta}}(x_{e}).
\end{align*}

We now make the following observations about $\FTNisingbeta$.
\begin{itemize}
	\item A degree-one function in $\FTNisingbeta$ is the Fourier transform
		\begin{equation}
		\begin{aligned}
			\label{eq:kappa-hat}
			\widehat{\kappa_{\beta}}(0)
			&= e^{\beta} + e^{-\beta} \eqpunc , \\
			\widehat{\kappa_{\beta}}(1)
			&= e^{\beta} - e^{-\beta} %
		\end{aligned}
		\end{equation}
		of $\kappa_{\beta}$.  Note that at large $\beta$, i.e., low
                temperature, we have $\kappa_{\beta}(0) \gg
                \kappa_{\beta}(1)$. Consequently, the function
                $\prod_{e}\kappa_{\beta}(x_e)$, which is defined on a
                high-dimensional space, is strongly irregular, i.e., it
                contains high peaks and deep valleys, making it harder for
                sampling-based methods to provide an accurate estimate of the
                partition function.  

                In contrast, the Fourier transform shows the opposite
                trend. Namely, at low temperature, we have
                $\widehat{\kappa_{\beta}}(0) \approx
                \widehat{\kappa_{\beta}}(1)$. Consequently, the function
                $\prod_{e}\widehat{\kappa_{\beta}}(x_e)$, which is also
                defined on a high-dimensional space, is almost flat, making it
                easier for sampling-based methods to provide an accurate
                estimate of the partition function.

                This observation is reversed at small values of $\beta$, i.e.,
                high temperature, where sampling estimators based on the
                interaction NFG $\N_{I(\beta)}$ outperform estimators based on
                the Fourier-transformed NFG $\widehat{\N}_{I(\beta)}$. For
                more details, we refer the reader to
                \cite{jerrum:subgraph-world, mehdi:ising, ay:partition}.
	\item
		The above trend carries on beyond the
		Ising model \cite{ay:partition}. However, in the case of the
                Ising model one can make the above argument more explicit.
		Namely, from \eqref{eq:kappa} and \eqref{eq:kappa-hat}, we can write
		$\widehat{\kappa_{\beta}}$ as
		\begin{align}
			\label{eq:kappa-kappa-hat}
			\widehat{\kappa_{\beta}}(x) = \sqrt{2\cdot c_{\beta}} \cdot \kappa_{\widetilde \beta}(x), \ \forall x\in \F_2,
		\end{align}
		where
	  \begin{align}
			c_{\beta}
			&\defeq 2 \cdot \sinh(\beta) \cdot \cosh(\beta)
			\label{eq:def:c:beta:1}
	  \end{align}
	  and the \emph{dual inverse temperature} $\widetilde \beta$ is defined as
		\begin{align}
			\widetilde{\beta}
			&\defeq - \frac{1}{2} \log \left( \tanh\beta \right) \eqpunc .
			\label{eq:def:dual:inverse:temperature:1}
		\end{align}
		Note that $c_{\beta}$ does not depend on $x$ and so can be carried outside the summation when
		computing the partition function. Moreover, the dual inverse temperature is a strictly
		decreasing function of $\beta$, i.e., the function $-\frac{1}{2}\log(\tanh(\cdot))$ maps high
		temperatures to low temperatures, and vice versa.

              \item Although the Fourier transform maps low temperatures to
                high temperatures, the resulting NFG
                $\widehat{\N}_{I(\beta)}$, unlike $\N_{I(\beta)}$, is not a
					 pairwise interaction NFG (see Definition~\ref{assumption:nfg:2}). The task of mapping
                $\widehat{\N}_{I(\beta)}$ to a pairwise interaction NFG at
                high temperature will be discussed in the next
                subsection. (Even though this last step, i.e., from
                $\widehat{\N}_{I(\beta)}$ to a pairwise interaction NFG at
                high temperature, may not be required if one is interested
                in sampling-based approaches for estimating the partition
                function, from a theoretical perspective it can lead to very
                valuable insights.)

\end{itemize}

\subsection{Kramers--Wannier duality}
\label{sec:kw:duality:ising:model:1}

Consider an Ising model on the torus, which can be represented by an NFG
$\Nisingbeta$ as in Fig.~\ref{fig:ising-pnfg}. Kramers and
Wannier~\cite{kramers-wannier} (see also \cite{savit:duality, druhl:duality})
made the observation that the
partition function of the Ising model at inverse temperature $\beta$ can be
expressed in terms of the partition function of the Ising model at the dual inverse
temperature $\widetilde{\beta}$ %
defined in \eqref{eq:def:dual:inverse:temperature:1}.

More precisely, one can make the following statement.

\begin{theorem}
  \label{theorem:kw:duality:limit:n:1}

  For the Ising model on the $2$-torus lattice graph of size $n$, we have
  \begin{align}
    \lim_{n \to \infty}
      \frac{1}{n}
        \log\bigl( Z_{\Nisingbeta} \bigr)
      &= \log(c_{\beta})
         +
         \lim_{n \to \infty}
           \frac{1}{n}
             \log\bigl( Z_{\Nisingbetatilde} \bigr) \eqpunc ,
               \label{eq:theorem:kw:duality:limit:n:1}
  \end{align}
  where $c_{\beta}$ and $\widetilde \beta$ were defined in \eqref{eq:def:c:beta:1} and \eqref{eq:def:dual:inverse:temperature:1}, respectively.
  \qedt
\end{theorem}

Our approach to proving the above result consists of first proving a more
general result that holds for finite $n$ and then to take the limit $n \to
\infty$.

Recall the definition of the cycles $c_{\rm h}$ and $c_{\rm v}$ in Section~\ref{sec:cw-torus}.
In order to state our result, we use the following notation. For any real number $\alpha \geq 0$,
we define $\N^{\rm h}_{I(\alpha)}$ to be the same NFG as $\N_{I(\alpha)}$ but with the interaction
function
\[
	\kappa_{\alpha}(\cdot) \text{ replaced by } \kappa_{\alpha}(1-\cdot) %
\]
along the cycle $c_{\rm h}$.
We similarly define $\N^{\rm v}_{I(\alpha)}$, where the replacement is made along the cycle
$c_{\rm v}$; and $\N^{\rm hv}_{I(\alpha)}$, where the replacement is made along both $c_{\rm h}$ and
$c_{\rm v}$.

\begin{theorem}
  \label{theorem:kw:duality:1}

  For the Ising model on the  $2$-torus lattice graph of size $n$, we have
  \begin{align}
    Z_{\Nisingbeta}
      &= \frac{c_{\beta}^n}{2} 
         \cdot 
         \left(
           Z_{\Nisingbetatilde} 
           + 
           Z_{\Nisingbetatildeh}
           + 
           Z_{\Nisingbetatildev} 
           + 
           Z_{\Nisingbetatildehv} 
         \right) \eqpunc ,
           \label{eq:theorem:kw:duality:1}
  \end{align}
  where $c_{\beta}$ and $\widetilde \beta$ were defined in \eqref{eq:def:c:beta:1} and
  \eqref{eq:def:dual:inverse:temperature:1}, respectively.
  \qedt
\end{theorem}
\begin{proof}
	First note that 
	\begin{align}
		\label{eq:code-im-d1-a}
		\code_{\N_{I(\beta)}} = \img d_1,
	\end{align}
	which is clear by comparing the support NFG of $\N_{I(\beta)}$ in Fig.~\ref{fig:ising-pnfg} and Fig.~\ref{fig:torus}(d) (black, solid lines). 
  We prove the theorem in two steps, namely, we separately show that
  \begin{align}
    Z_{\Nisingbeta}
	 &\!\!=
			2^{-n}
			\! \cdot
         Z_{\FTNisingbeta} \eqpunc ,
           \label{eq:kw:proof:part:1}
           \\
    Z_{\FTNisingbeta}
      &\!\!=
			2^{n-1}
			\! \cdot
			c_{\beta}^n
			\!\cdot
         \bigl(
           Z_{\Nisingbetatilde} 
           \!+\! 
           Z_{\Nisingbetatildeh}
           \!+\! 
           Z_{\Nisingbetatildev} 
           \!+\! 
           Z_{\Nisingbetatildehv} 
         \bigr)\eqpunc.
           \label{eq:kw:proof:part:2}
  \end{align}

  The expression in~\eqref{eq:kw:proof:part:1} follows from \eqref{thm:ft-z}
  of Theorem~\ref{thm:nfg-dual} upon noting that $|\A| - |\V| = n$ for the
  $2$-torus lattice graph.

  The expression in~\eqref{eq:kw:proof:part:2} can be shown as
  follows. Namely,
  \begin{align*}
    &
    Z_{\FTNisingbeta} \\
	   &\quad
		\stackrel{\rm (i)}{=} \!\!
		 \sum_{x\in \ker\partial_{1}} \ 
           \prod_{e\in \A}
             \widehat{\kappa_{\beta}}(x_{e}) \\
      &\quad
		\stackrel{\rm (ii)}{=} \!\!\sum_{x\in \img\partial_{2}} \ 
           \prod_{e\in \A}
             \widehat{\kappa_{\beta}}(x_{e})
         +
	 \cdots 
         +
	 \sum_{x \in c_{\rm v}+c_{\rm h} + \img\partial_{2}} \ 
           \prod_{e\in \A}
             \widehat{\kappa_{\beta}}(x_{e}) \\
      &\quad
       = \!\!\sum_{x\in \img\partial_{2}} \ 
           \prod_{e\in \A}
             \widehat{\kappa_{\beta}}(x_{e}) 
           + 
           \cdots 
           +
           \sum_{x\in \img\partial_{2}} \ 
             \prod_{e\in \A}
               \widehat{\kappa_{\beta}}(x_{e} \! + \! c_{\rm v} \! + \! c_{\rm h}) 
                 \\
      &\quad
		\stackrel{\rm (iii)}{=} \bigl( \sqrt{2 c_{\beta}} \bigr)^{2n}
         \cdot
         \left(
         \sum_{x\in \img\partial_{2}} \ 
           \prod_{e\in \A}
             \kappa_{\widetilde{\beta}}(x_{e}) 
           + 
           \cdots
         \right) \\
      &\quad
		\stackrel{\rm (iv)}{=} %
			 2^{n} c_{\beta}^{n}
         \cdot
         \left(
			\sum_{x \in \code_{\N_{I(\beta)}}}
           \prod_{(i,j) \in \A}
             \kappa_{\widetilde{\beta}}(x_j - x_i) 
           + 
           \cdots
         \right) \\
      &\quad
		\stackrel{\rm (v)}{=} 2^{n-1} \cdot
         c_{\beta}^n
         \cdot
         \left(
			Z_{\N_{I(\tilde \beta)}}
           + 
           \cdots
         \right) \eqpunc ,
  \end{align*}
  where
  equality (i) follows from~\eqref{eq:code-im-d1-a} and \eqref{eq:ker1-img1-nfg-a},
  equality (ii) follows from~\eqref{eq:coset:2},
  equality (iii) follows from expressing $\widehat{\kappa_{\beta}}$ in terms of $\kappa_{\widetilde{\beta}}$ (see~\eqref{eq:kappa-kappa-hat}--\eqref{eq:def:dual:inverse:temperature:1}),
  equality (iv) follows from the self-duality of the $2$-torus lattice graph \eqref{eq:torus-selfdual} and \eqref{eq:code-im-d1-a},
  and equality (v) follows from \eqref{eq:z-ising-2} and the fact that the support NFG is independent of $\beta$, i.e., $\code_{\N_{I(\beta)}} = \code_{\N(\widetilde \beta)}$.
\end{proof}

Here we also give a summary of the proof using Figs.~\ref{fig:torus}(c) and
(d).  The NFG $\N_{I(\beta)}$ is in image representation since its support NFG
is equal to $\N_{\img d_1}$, as can be seen by comparing
Fig.~\ref{fig:ising-pnfg} and Fig.~\ref{fig:torus}(d) (black, solid lines).
  The NFG $\widehat{\N}_{I(\beta)}$ is in kernel representation since its support NFG is equal to $\N_{\ker \partial_1}$, as can be seen by comparing Fig.~\ref{fig:ising-dnfg} and Fig.~\ref{fig:torus}(c) (black, solid lines).
  Note that in Fig.~\ref{fig:torus}, the NFGs in image-representation form are pairwise interaction
  NFGs, while the NFGs in kernel-representation form are not pairwise interaction NFGs.
  The purpose of~\eqref{eq:kw:proof:part:1} is to invert the temperature, which is
  accomplished via the Fourier transform. As a side effect, the image representation form (pairwise
  interaction) is lost since this step causes $\N_{\img d_1}$ (Fig.~\ref{fig:torus}(d) black,
  solid lines) to be replaced by $\N_{\ker \partial_{1}}$ (Fig.~\ref{fig:torus}(c) black, solid lines). 
  The purpose of~\eqref{eq:kw:proof:part:2} is to recover the image representation form
  (pairwise interaction), which is accomplished using \eqref{eq:coset:2}, allowing one to replace
  $\N_{\ker \partial_{1}}$ (Fig.~\ref{fig:torus}(c) black, solid lines) with $\N_{\img \partial_{2}}$ (Fig.~\ref{fig:torus}(c) red, dashed lines).
  Finally, since the $2$-torus lattice graph is self-dual, $\N_{\img \partial_{2}}$ is identical to
  $\N_{\img d_{1}}$, and so the resulting interaction system is an Ising model.

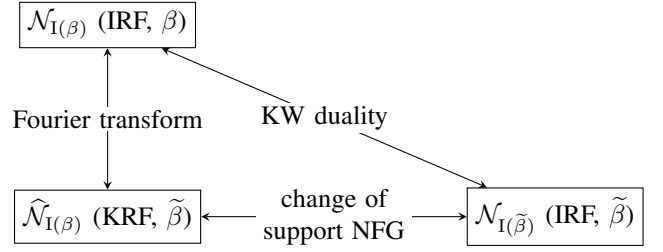
\begin{figure}[t]
  \centering
  \begin{tikzpicture}
    [scale=1.7, every node/.style={draw}]

    \node (il)[align=left] at (0, 1.5)
      {$\Nisingbeta$ (IRF, $\beta$)};

    \node (kh)[align=left] at (0, 0)
    {$\FTNisingbeta$ (KRF, $\widetilde \beta$)};

    \node (ih)[align=left] at (3.5, 0)
      {$\Nisingbetatilde$ (IRF, $\widetilde{\beta}$)};

    \draw[>=stealth] (il)[<->] --node[inner sep=1, fill=white, draw=none]
                                     {KW duality} (ih);

    \draw[>=stealth] (il)[<->] --node[inner sep=1, fill=white, draw=none]
                                     {Fourier transform} (kh);

    \draw[>=stealth] (ih)[<->] --node[inner sep=1, fill=white, draw=none]
                                     {\parbox{2cm}{\centering 
                                                     change of \\ 
                                                     support NFG}
                                     } (kh);
  \end{tikzpicture}
  \caption{Key objects and key steps of the Kramers--Wannier duality. The
    content of every box is ``NFG (form, inverse temperature)'', where ``NFG''
    refers to the relevant NFG, where ``form'' is either ``IRF'' (shorthand
    for ``image-representation form'') or ``KRF'' (shorthand for
    ``kernel-representation form''), and where ``inverse temperature'' refers
    to the inverse temperature appearing in the interaction functions.}
  \label{fig:kw-duality:simplified:1}
\end{figure}

The key objects and key steps of the above comments are summarized in
Fig.~\ref{fig:kw-duality:simplified:1}.

\begin{proof}
  (Proof of Theorem~\ref{theorem:kw:duality:limit:n:1}.) Because
  $\Nisingbetatilde$ and $\Nisingbetatildeh$ differ only in $L$ ($= \sqrt{n}$)
  interaction functions, we have for any configuration that the global
  function value for $\Nisingbetatildeh$ is lower bounded by the global
  function value for $\Nisingbetatilde$ times $\operatorname{e}^{-2\sqrt{n}
    \widetilde{\beta}}$ and upper bounded by the global function value for
  $\Nisingbetatilde$ times $\operatorname{e}^{+2\sqrt{n}
    \widetilde{\beta}}$. Summing over all configurations, we obtain
  \begin{align*}
    Z_{\Nisingbetatilde} 
      \cdot 
      \operatorname{e}^{-2 \sqrt{n} \widetilde{\beta}}
      &\leq
         Z_{\Nisingbetatildeh}
       \leq
         Z_{\Nisingbetatilde} 
           \cdot 
           \operatorname{e}^{+2\sqrt{n} \widetilde{\beta}} \eqpunc ,
  \end{align*}
  i.e.,
  \begin{align*}
    \frac{1}{n}
      \log\bigl( Z_{\Nisingbetatilde} \bigr)
      \! - \!
      \frac{2 \widetilde{\beta}}{\sqrt{n}}
      &\leq
         \frac{1}{n}
           \log\bigl( Z_{\Nisingbetatildeh} \bigr)
       \leq
         \frac{1}{n}
         \log\bigl( Z_{\Nisingbetatilde} \bigr)
         \! + \!
         \frac{2 \widetilde{\beta}}{\sqrt{n}} \eqpunc .
  \end{align*}
  Similarly,
  \begin{align*}
    \frac{1}{n}
      \log\bigl( Z_{\Nisingbetatilde} \bigr)
      \! - \!
      \frac{2 \widetilde{\beta}}{\sqrt{n}}
      &\leq
         \frac{1}{n}
           \log\bigl( Z_{\Nisingbetatildev} \bigr)
       \leq
         \frac{1}{n}
         \log\bigl( Z_{\Nisingbetatilde} \bigr)
         \! + \!
         \frac{2 \widetilde{\beta}}{\sqrt{n}} \eqpunc , \\
    \frac{1}{n}
      \log\bigl( Z_{\Nisingbetatilde} \bigr)
      \! - \!
      \frac{4 \widetilde{\beta}}{\sqrt{n}}
      &\leq
         \frac{1}{n}
           \log\bigl( Z_{\Nisingbetatildehv} \bigr)
       \leq
         \frac{1}{n}
         \log\bigl( Z_{\Nisingbetatilde} \bigr)
         \! + \!
         \frac{4 \widetilde{\beta}}{\sqrt{n}} \eqpunc .
  \end{align*}
  For finite $\widetilde \beta$, the desired result then follows by taking the limit $n \to \infty$ on both
  sides of~\eqref{eq:theorem:kw:duality:1}, along with using the above
  inequalities to simplify the expression.
\end{proof}

\section{Extensions}
\label{sec:ext}

In this section, we discuss the Kramers--Wannier duality for some statistical
models beyond the two-dimensional Ising model.  There are two (independent)
directions of extending the Ising model. Namely, on the one hand, one may look
at higher-dimensional lattices, and, on the other hand, one may consider a
different form of interaction functions between adjacent sites. We discuss the
first direction in Section~\ref{sec:3d:ising:model:1}, which is about
three-dimensional Ising models, and discuss the second direction in
Section~\ref{sec:potts:model:1}, which is about the Potts model \cite{potts1952}.

\subsection{Three-Dimensional Ising Models}
\label{sec:3d:ising:model:1}

In this section we consider different three-dimensional Ising type models and
derive Kramers--Wannier type results. Note that we will first consider the
cubic lattice before moving on to the $3$-torus lattice graph.

A $3$-complex, in addition to vertices, edges, and surfaces, also includes
three-dimensional objects (polyhedra). We will denote the underlying graph by
$\G \defeq (\V,\A,\setS,\setP)$, where $\setP$ collects all polyhedra. The
definition of the boundary of a vertex, edge, or a surface is similar to the
$2$-complex case, see Section~\ref{sec:two:complex:primal:domain:1}. The
boundary of a three-dimensional object $p \in \setP$ is defined as the
weighted sum of the surfaces surrounding $p$, where the weights are chosen
from the set $\left\{ -1, +1 \right\}$ depending on the relative orientations
of $p$ and its surrounding surfaces.

\begin{example}%
  Consider the graph $\G \defeq (\V,\A,\setS,\setP)$ in Fig.~\ref{fig:cube}(a)
  for which $\setP = \{ p_0 \}$, i.e., it consists of a single polyhedron. (One
  can ignore the light blue dashed lines in the figure for the moment.)
  We choose the orientation (not shown in the figure) of the solid cube $p_0$
  so that the front, left, and bottom faces have weight $+1$, and so that the
  back, right, and top faces have weight $-1$ in $\partial_{3}(p_0)$. One may
  verify that this results in the following $3$-complex:
  \begin{align*}
    \begin{array}{cc cc cc cc}
             &C_{3} & \xrightarrow{\partial_{3}} 
             & C_{2} & \xrightarrow{\partial_{2}} 
             & C_{1} & \xrightarrow{\partial_{1}}
             & C_{0} \cr
       \dim C_i	& 1    &  & 6 &  & 12 &  & 8 \\
      \dim H_i	& 0    &  & 0 &  & 0  &  & \eqpunc\eqpunc 1
    \end{array}
  \end{align*}
  Some NFGs that are associated with this complex are shown in
  Figs.~\ref{fig:cube}(b)--(g). (The choice of which NFGs are shown in the
  figure is based on which NFGs will be required for the subsequent
  discussions.)

  The cube in Fig.~\ref{fig:cube}(a) may be extended as part of a cubic lattice
  on a larger number of vertices in the obvious way, e.g., as done in
  Fig.~\ref{fig:cube-more}(a). (The dashed lines in Fig.~\ref{fig:cube}(a)
  show how the cube can be connected as part of a larger lattice.) In this
  case, the NFGs in Fig.~\ref{fig:cube} also extend in the obvious way, for
  instance, the NFGs $\N_{\img \partial_3}$ and $\N_{\img d_{1}}$ are as in
  Figs.~\ref{fig:cube-more}(b) and (c), respectively.
  \qede
\end{example}

\begin{figure}[t]
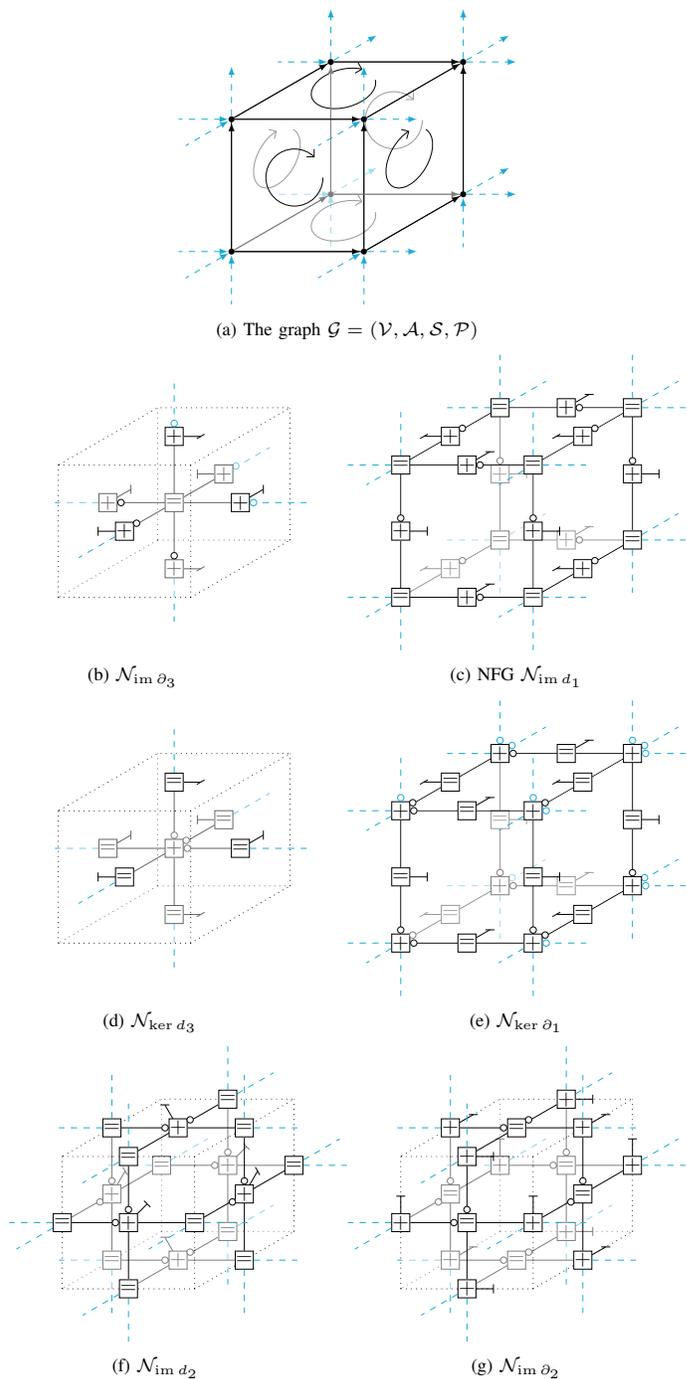

	\def\s{.8}	
	\def\dist{2.2}
	\ifonecol
	\def\xshift{6.5}
	\def\yyshift{-.75}
	\else
	\def\xshift{4.5}
	\def\yyshift{0}
	\fi
	\def\yshift{-4.6}
	\centering
	\begin{tikzpicture}
		\node(a)[scale=\s] at (\xshift/2,\yyshift)
                {\input{figtex/torus/cube-graph.tex}};
		\node[vcaption, below = of a] 
                {(a) The graph $\G = (\V,\A,\setS,\setP)$};

		\node(b)[scale=\s] at (1*\xshift,1*\yshift)
                {\input{figtex/torus/cube-imdif1.tex}};
		\node(b1)[vcaption, below = of b] 
                       {(c) NFG $\N_{\img d_1}$};

		\node(b)[scale=\s] at (1*\xshift,2*\yshift)
                {\input{figtex/torus/cube-kerbd1.tex}};
		\node(b2)[vcaption, below = of b] 
                       {(e) $\N_{\ker \partial_1}$};

		\node(b)[scale=\s] at (1*\xshift,3*\yshift)
                {\input{figtex/torus/cube-imbd2.tex}};
		\node(b3)[vcaption, below = of b] 
                       {(g) $\N_{\img \partial_{2}}$};

		\node(c)[scale=\s] at (0*\xshift,1*\yshift)
                {\input{figtex/torus/cube-imbd3.tex}};
		\node[vcaption, node distance=\xshift*.75, left = of b1] 
                   {(b) $\N_{\img \partial_{3}}$};

		\node(c)[scale=\s] at (0*\xshift,2*\yshift)
                {\input{figtex/torus/cube-kerdif3.tex}};
		\node[vcaption, node distance=\xshift*.75, left = of b2] 
                   {(d) $\N_{\ker d_{3}}$};

		\node(c)[scale=\s] at (0*\xshift,3*\yshift)
                {\input{figtex/torus/cube-imdif2.tex}};
		\node[vcaption, node distance=\xshift*.75, left = of b3] 
                   {(f) $\N_{\img d_{2}}$};

	\end{tikzpicture}
	\caption{A graph $\G = (\V,\A,\setS,\setP)$ representing a cubic
        lattice and associated NFGs. (The dashed lines show how the figures
        extend for a larger lattice.)}
      \label{fig:cube}
\end{figure}

\begin{figure}[t]
	\def\s{.8}	
	\def\dist{2.2}
	\def\xshift{5}
	\def\yshift{-6}
	\centering
	\begin{tikzpicture}
		\node(a)[scale=\s] at (0,0){\input{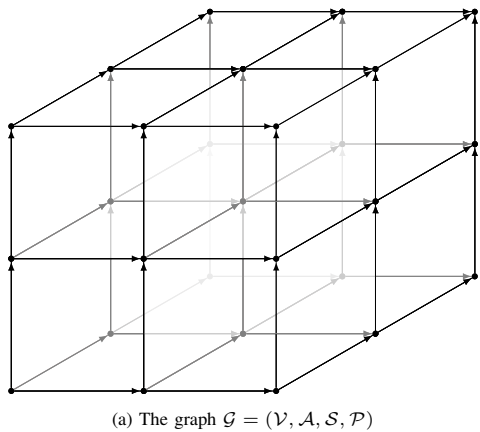}};
		\node[vcaption, below = of a] {(a) The graph $\G = (\V,\A,\mathcal{S},\mathcal{P})$};
		
		\node(imdif1)[scale=\s] at (0*\xshift,2*\yshift){\input{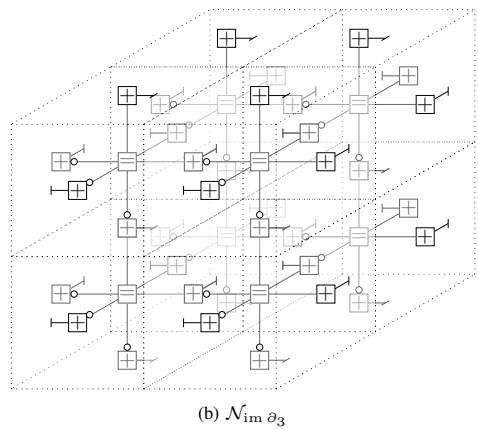}};
		\node[vcaption, below = of imdif1] {(c) $\N_{\img d_{1}}$};

		\node(imbd3)[scale=\s] at (0*\xshift,1*\yshift){\input{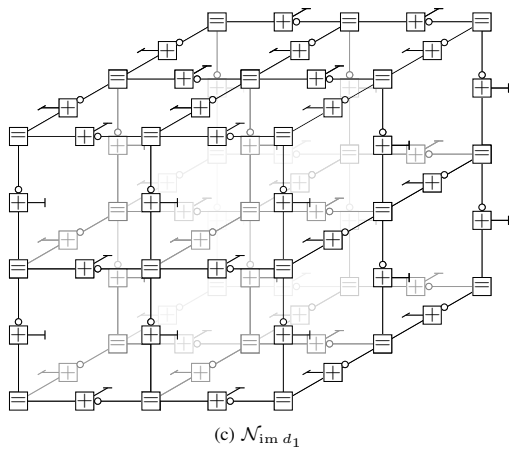}};
		\node[vcaption, below = of imbd3] {(b) $\N_{\img \partial_{3}}$};

	\end{tikzpicture}
	\caption{A demonstration of some of the drawings in
          Fig.~\ref{fig:cube} for a larger cubic lattice.}
	\label{fig:cube-more}
\end{figure}

The Ising model on the three-dimensional cubic lattice is defined in the same
way as the two-dimensional Ising model, see the NFG $\N_{I(\beta)}$ in
Fig.~\ref{fig:cube-ising-pnfg} (for now however without periodic
boundary). Namely, the spins are located on the vertices of the cubic lattice,
and we have the partition function $Z$ as in (\ref{eq:z-ising-1}), where now
$\A$ is the set of edges of the cubic lattice.  The support NFG of $\N_{I(\beta)}$ is the
NFG in Fig.~\ref{fig:cube-more}(c). In subsequent discussions, we may simply
refer to the corresponding NFGs in Fig.~\ref{fig:cube}, where it is understood
that the NFGs are extended to the appropriate number of vertices in the lattice.

\begin{figure}[t]
  \def\s{1}	
  \def\c{.56}	
  \def\L{2}	
  \def\dist{2cm}
  \centering
  \def\yshift{-7cm}
  \def\xshift{9cm}
  \begin{tikzpicture}[scale=\s]
  	\node(b) at (0*\xshift,0){\input{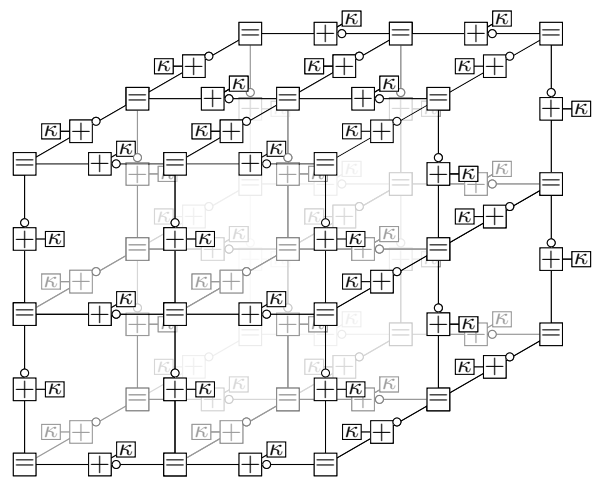}};
  \end{tikzpicture}
  \caption{The NFG $\Nisingbeta$ representing the Ising model on the
    cubic lattice.}
  \label{fig:cube-ising-pnfg}
\end{figure}

\begin{figure}[t]
  \def\s{1}	
  \def\c{.56}	
  \def\L{2}	
  \def\dist{2cm}
  \centering
  \def\yshift{-7cm}
  \def\xshift{9cm}
  \begin{tikzpicture}[scale=\s]
  	\node(b) at (0*\xshift,0){\input{figtex/torus/cube-dnfg-kappa}};
  \end{tikzpicture}
  \caption{The Fourier-transformed NFG $\FTNisingbeta$.}
  \label{fig:cube-ising-dnfg}
\end{figure}

In the three-dimensional Ising model, a spin (away from the borders of the
lattice) has six neighbors, as can be seen in the NFG $\Nisingbeta$ in
Fig.~\ref{fig:cube-ising-pnfg}. Now we can obtain a Kramers--Wannier type
duality for the three-dimensional model as follows. (For reasons of
simplicity, we omit the discussion of scaling factors.)
\begin{itemize}

\item We take the Fourier transform of the NFG $\Nisingbeta$ in
  Fig.~\ref{fig:cube-ising-pnfg} (see Fig.~\ref{fig:cube}(c) for the support
  NFG), which results in the NFG $\FTNisingbeta$ shown in
  Fig.~\ref{fig:cube-ising-dnfg} (see Fig.~\ref{fig:cube}(e) for the support
  NFG). As in Section~\ref{sec:kw:duality:ising:model:1}, we can relate the
  partition function of $\Nisingbeta$ to the partition function of $\FTNisingbeta$.

\item As in Section~\ref{sec:kw:duality:ising:model:1}, the interaction
  functions of $\FTNisingbeta$ are in the desired form, but $\FTNisingbeta$ is
  in kernel-representation form because the support NFG of $\FTNisingbeta$ is
  equivalent to $\N_{\ker \partial_1}$. In order to obtain an NFG in
  image-representation form, we replace the support NFG $\N_{\ker \partial_1}$
  (see Fig.~\ref{fig:cube}(e) by $\N_{\img \partial_2}$ (see
  Fig.~\ref{fig:cube}(g)). Note that for the considered graph $\G$ we have
  $\dim H_1 = 0$, i.e., $\N_{\ker \partial_1}$ is equivalent to
  $\N_{\img \partial_2}$.

\end{itemize}

In summary, we have been able to relate the partition functions of the following
NFGs:
\begin{enumerate}

\item an NFG in image-representation form that represents some statistical
  model at inverse temperature $\beta$;

\item an NFG in image-representation form that represents some other
  statistical model at the dual inverse temperature $\widetilde{\beta}$.

\end{enumerate}

There are two key differences, however, to the analogous result in
Section~\ref{sec:kw:duality:ising:model:1}.
\begin{itemize}
	\item In Section~\ref{sec:kw:duality:ising:model:1}
		the connectivity pattern of the NFG corresponding to 
		Item~2 was, due to the self-duality of the $2$-torus lattice, the same
		as the connectivity pattern of the NFG corresponding to Item~1. This is not the case here
		anymore, i.e., the connectivity pattern of the NFG corresponding to Item~2 is different from
		the connectivity pattern of the NFG corresponding to Item~1. In particular, in this section, whereas for the
		statistical model in Item~1 a spin (away from the borders of the lattice)
		interacts with $6$ other spins, for the statistical model in Item~2 a spin
		(away from the borders of the lattice) interacts with $12$ other spins.
	\item In Section~\ref{sec:kw:duality:ising:model:1} the NFG corresponding to Item~2, since the lattice is
		two-dimensional, is a pairwise interaction NFG. (Like the NFG corresponding to Item~1.)
		This is not the case here anymore, i.e., the NFG corresponding to Item~2 is not 
		a pairwise interaction NFG. (Unlike the corresponding NFG in Item~1) In
		particular, in this section, whereas for the statistical model in Item~1 an interaction function (away from the
		borders of the lattice) is connected to a parity indicator function that involves two spins,
		for the statistical model in Item~2, an interaction function is connected to a parity
		indicator function that involves four spins.

\end{itemize}

Finally, let us point out a different approach to obtain a Kramers--Wannier
type duality result for the three-dimensional Ising model. (In fact, this is
the route taken by Savit in \cite{savit:duality}.) Namely, instead of placing
spins at vertices and letting them interact via edges, we can place spins at
the center of solid cubes and let them interact via their common face. A
Kramers--Wannier type duality result is then obtained by considering the NFGs
in Figs.~\ref{fig:cube}(b)(d)(f) instead of Figs.~\ref{fig:cube}(c)(e)(g). We
omit the details.

Next, we proceed to the Ising model on the $3$-torus lattice graph, where we
start with the following example dedicated to the $3$-complex obtained from
the $3$-torus.

\begin{example}%
  The chain complex of the $3$-torus lattice graph can be described using
  Fig.~\ref{fig:3torus}(a), obtained from Fig.~\ref{fig:cube}(a) (ignoring the
  dashed edges) by identifying the left and right most faces, the top and 
  bottom most faces, and the front and back most faces. Such an identification
  of the faces results in an identification of some corresponding edges and
  some corresponding vertices.  Namely, in this case, all the vertices in
  Fig.~\ref{fig:cube}(a) are identified as one vertex, all the horizontal
  edges as one edge, all the vertical edges as one edge, and all the
  perpendicular (to the page) edges as one edge, as shown in
  Fig.~\ref{fig:3torus}(a). Equivalently, this torus can be drawn as in Fig.~\ref{fig:3torus}(b). (By repeating vertices, edges, and
  faces.) Finally, the boundary operator is
  defined as in the previous example, and one may verify that one obtains the
  following $3$-complex:
  \begin{align*}
    \begin{array}{cc cc cc cc}
        & C_{3} & \xrightarrow{\partial_{3}} 
        & C_{2} & \xrightarrow{\partial_{2}} 
        & C_{1} & \xrightarrow{\partial_{1}}
        & C_{0} \\
      \dim C_i & 1 &  & 3  &  & 3  &  & 1 \\
      \dim H_i & 1 &  & 3  &  & 3  &  & 1
    \end{array}
  \end{align*}
  Some NFGs that are associated with this complex are shown in
  Figs.~\ref{fig:3torus}(c)--(h).

  Let $n \defeq L^{3}$, where $L$ is some positive integer. The cubic lattice
  with $L \times L \times L$ vertices on the $3$-torus can be
  obtained via applying the identification above on the cubic lattice on
  $(L+1)^{3}$ vertices. As we have seen throughout this work, such a lattice
  will have the same dimensions of the homology spaces $H_i$ as above. In
  other words, one may verify that the chain complex arising from the cubic
  lattice on the $3$-torus may be summarized as
  \begin{align*}
    \begin{array}{cc cc cc cl}
        & C_{3} & \xrightarrow{\partial_{3}} 
        & C_{2} & \xrightarrow{\partial_{2}} 
        & C_{1} & \xrightarrow{\partial_{1}}
        & C_{0} \\
      \dim C_i	& n    &  & 3n &  & 3n &  &  n \\
      \dim H_i	& 1    &  & 3  &  & 3  &  &  1
    \end{array}
  \end{align*}
  \qede

  The corresponding NFGs to the ones in Fig.~\ref{fig:3torus}(c)--(h) for any $n$ can be obtained in the
  obvious way, e.g., Figs.~\ref{fig:3torus-more}(a) and (b), respectively, show $\N_{\img \partial_3}$ and
  $\N_{\img d_{1}}$ for $n=8$.
\end{example}

\begin{figure}[t]
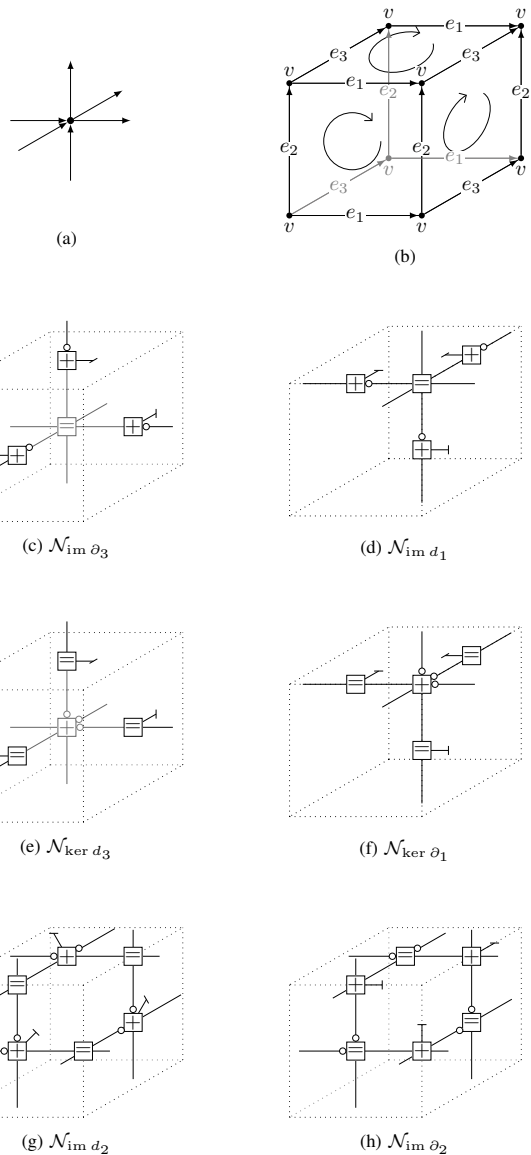

  \def\s{.8}	
  \def\c{.56}	
  \def\L{1}	
  \def\dist{2.2}
  \centering
  \ifonecol
  \def\xshift{6.5}
  \else
  \def\xshift{4.5}
  \fi
  \def\yshift{-4.0}
  \begin{tikzpicture}[scale=1]
  	\node(a)[scale=\s] at (0*\xshift,0)
          {\input{figtex/torus/3torus-graph-n1}};
  	\node[vcaption, node distance=4.5mm, below=of a] 
             {(a)};

  	\node(a)[scale=\s] at (1*\xshift,0)
          {\input{figtex/torus/3torus-graph-n1-2}};
  	\node[vcaption, below=of a] 
             {(b)};

  	\node(d)[scale=\s] at (1*\xshift,1*\yshift)
          {\input{figtex/torus/3torus-imdif1-n1.tex}};
  	\node[vcaption, node distance=1mm,  below = of d] 
             {(d) $\N_{\img d_1}$};

  	\node(f)[scale=\s] at (1*\xshift,2*\yshift)
          {\input{figtex/torus/3torus-kerbd1-n1.tex}};
  	\node[vcaption, node distance=1mm,  below = of f] 
             {(f) $\N_{\ker \partial_1}$};

  	\node(h)[scale=\s] at (1*\xshift,3*\yshift)
          {\input{figtex/torus/3torus-imbd2-n1.tex}};
  	\node[vcaption, below = of h] 
             {(h) $\N_{\img \partial_2}$};

  	\node(c)[scale=\s] at (0*\xshift,1*\yshift)
          {\input{figtex/torus/3torus-imbd3-n1.tex}};
  	\node[vcaption, below = of c]
             {(c) $\N_{\img \partial_{3}}$};

  	\node(e)[scale=\s] at (0*\xshift,2*\yshift)
          {\input{figtex/torus/3torus-kerdif3-n1.tex}};
  	\node[vcaption, below = of e]
          {(e) $\N_{\ker d_{3}}$};

  	\node(f)[scale=\s] at (0*\xshift,3*\yshift)
          {\input{figtex/torus/3torus-imdif2-n1.tex}};
  	\node[vcaption, below = of f]
          {(g) $\N_{\img d_{2}}$};
  \end{tikzpicture}
  \caption{A graph $\G = (\V,\A,\setS,\setP)$ representing a $3$-torus lattice graph and associated NFGs.
	  In (a) and (c)--(h) the left and right most edges, the top and bottom most edges, and the front and
	  back most edges are identified.
  }
  \label{fig:3torus}
\end{figure}

\begin{figure}[t]
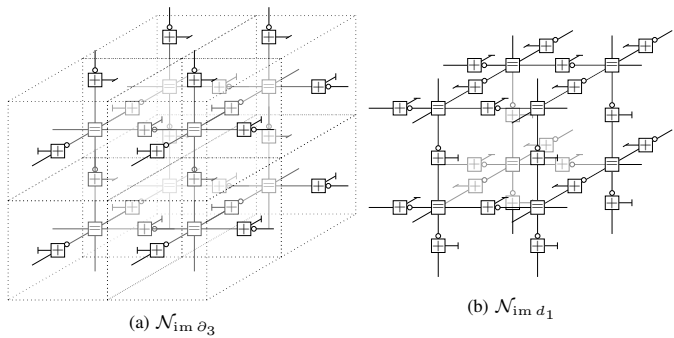

	\def\s{.6}	
	\def\c{.56}	
	\def\L{1}	
	\def\dist{2.2}
	\centering
	\ifonecol
	\def\xshift{6.5}
	\else
	\def\xshift{4.5}
	\fi
	\def\yshift{-4.0}
	\begin{tikzpicture}[scale=1]
		\node(a)[scale=\s] at (0*\xshift,1*\yshift)
                  {\input{figtex/torus/3torus-imbd3.tex}};
						\node[vcaption, below = of a] {(a) $\N_{\img \partial_{3}}$};

		\node(b)[scale=\s] at (1*\xshift,1*\yshift)
                  {\input{figtex/torus/3torus-imdif1.tex}};
						\node[vcaption, node distance=1mm,  below = of b] {(b) $\N_{\img d_1}$};

	\end{tikzpicture}
	\caption{A demonstration of some of the NFGs in Fig.~\ref{fig:3torus}
          for the $3$-torus with $n=8$ vertices, where the edges follow the same identification as in Fig.~\ref{fig:3torus}}
	\label{fig:3torus-more}
\end{figure}

\begin{figure}[t]
	\def\s{1}	
	\def\c{.56}	
	\def\L{2}	
	\def\dist{2.5}
	\centering
	\def\yshift{-7cm}
	\def\xshift{5cm}
	\begin{tikzpicture}[scale=\s]
		\node(b) at (0*\xshift,0*\yshift)
                  {\input{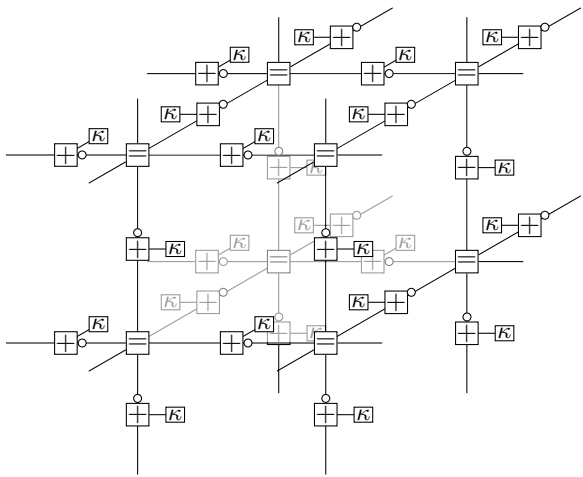}};
	\end{tikzpicture}
	\caption{The NFG $\G$ representing the Ising model on the
          $3$-torus with $n = 8$ sites, where the edges follow the same identification as in Fig.~\ref{fig:3torus}.}
	\label{fig:3d-ising-pnfg}
\end{figure}

The Ising model on the $3$-torus is shown in
Fig.~\ref{fig:3d-ising-pnfg}. The argument for a Kramers--Wannier type duality
result is similar to the cubic lattice. However, while it is still true that
$\img \partial_{2} \subseteq \ker \partial_{1}$, the reverse inclusion does
not hold since $\dim H_{1} = 3$, and we must take care of the cosets of
$\img \partial_{2}$ in $\ker \partial_{1}$. The approach is similar to the
approach in Section~\ref{sec:kw:duality:ising:model:1}, except that here we
have $2^3 = 8$ cosets instead of $2^2 = 4$ cosets. We omit the details.

\subsection{The Potts Model}
\label{sec:potts:model:1}

In the Ising model on the $2$-torus lattice, the spins are
binary and the interaction between a pair of adjacent sites may take one of
two values depending on whether the two spins are equal or not, i.e., the
interaction depends on the Hamming distance between the pair of spins. An
obvious extension is to allow spins to take values from a larger alphabet,
which we will assume to be the ring $\Z_{q}$ of integers modulo $q$,
where $q \geq 2$ is an integer. While we still insist that the interaction may
only depend on the difference between a pair of adjacent spins, there is
certainly more freedom (compared to the Ising model) in choosing such a
dependency, and several well-studied statistical models may be obtained by
further specifying such a dependency.  For instance, insisting on an
interaction that only depends on the Hamming distance results in the standard
Potts model, and specifying the interaction as one that depends on the Lee
distance gives the vector Potts model.
For a detailed discussion of the Potts model \cite{potts1952}, see, e.g., \cite{wu:Potts}.

Note that most techniques from the previous sections can be suitably extended
from a finite field $\F$ to the ring $\Z_q$:
\begin{itemize}

\item The Fourier transform results still hold because they only rely on
  $\langle \Z_q, + \rangle$ being a finite Abelian group.

\item The results involving $m$-complexes need to be suitably generalized
  because vector spaces are replaced by modules.

\end{itemize}

It turns out that for the standard Potts model one can derive Kramers--Wannier
type duality results. (We omit the rather straightforward details.) However,
for the vector Potts model this does not seem to be the case. The issue here
is that $\widehat{\kappa_{\beta}}$ cannot be expressed as some function
$\kappa'_{\widetilde{\beta}}$ such that $\widetilde{\beta}^{-1} \cdot
\log(\kappa'_{\widetilde{\beta}})$ is independent of $\widetilde{\beta}$, as
would be required for the interaction function of a statistical
model. (Exceptions are the special case $q = 3$, where the Lee distance
reduces to the Hamming distance, and the special case $q = 4$, where one can
show that the interaction decomposes into the Kronecker product of two
interactions of binary spins.)

\section{Conclusion}
\label{sec:conclusion:1}

Over the last twenty years, factor graphs have proven to be a powerful
concept. In contrast to many other diagrammatic representations that serve
only a particular limited purpose, factor graphs have turned out to be very
versatile: they can be applied in various contexts, they highlight how
variables are related, they help us to efficiently do exact and approximate
calculations, and they can be transformed. With the present paper, we have
added another tool to this growing toolbox by showing how objects from
algebraic topology can be expressed in terms of factor graphs and by
discussing how algebraic topology can guide the transformation of factor
graphs. We have then used these tools to re-prove the Kramers--Wannier
duality. These tools have also proven to be beneficial in other
contexts~\cite{Li:Vontobel:16:1}.

\section*{Acknowledgment}

The authors gratefully acknowledge discussions with David Forney, Andi
Loeliger, Yongyi Mao, and Mehdi Molkaraie on topics related to this paper. In
particular, the authors appreciate a comment in a paper by Molkaraie
and Loeliger~\cite{mehdi:ising} that started the investigations that led to
the results discussed in the present paper 
and would also like to thank the reviewers and the associate editor for their valuable
comments. 

\appendix
\subsection{Higher-Order Complexes}
\label{sec:cw-complex}

In Sections~\ref{sec:one:complex}--\ref{sec:two:complex} we have introduced
$1$-com\-plexes and $2$-complexes, respectively. The generalization to
$m$-complexes for $m \geq 3$ is fairly straightforward and so
this appendix is rather brief.

Let $m$ be some positive integer. An $m$-dimensional chain complex, or simply
$m$-complex, consists of the following objects:
\begin{itemize}

\item finite-dimensional spaces $C_i$ over $\F$, $i \in \{ -1, 0, 1, \ldots,$
  $m \! + \! 1 \}$;

\item boundary operators $\partial_i: C_i \to C_{i-1}$, $i \in \{ 0, 1,
  \ldots, m \! + \! 1 \}$, where these operators satisfy
  \begin{align*}
    \img \partial_{i+1} 
      &\subseteq
         \ker \partial_{i},
           \quad i \in \{ 0, 1, \ldots, m \} \eqpunc ,
  \end{align*}
  i.e.,
  \begin{align*}
    \partial_{i} \circ \partial_{i+1} 
      &= 0,
           \quad i \in \{ 0, 1, \ldots, m \} \eqpunc .
  \end{align*}

\end{itemize}
These objects are summarized in~\eqref{eq:m:complex:chain}. For $i \in \{ 0,
1, \ldots, m \}$, the $i$-th homology space is defined to be
\begin{align*}
  H_i
    &\defeq 
       \ker\partial_{i} \ / \ \img\partial_{i+1} \eqpunc .
\end{align*}

On the other hand, an $m$-dimensional cochain complex consists of the
following objects:
\begin{itemize}

\item spaces $\widehat{C}_i$, $i \in \{ -1, 0, 1, \ldots, m \! + \! 1 \}$,
  where $\widehat{C}_i$ is the dual space of $C_i$;

\item coboundary operators $d_i: \widehat{C}_{i-1} \to \widehat{C}_{i}$, $i
  \in \{ 0, 1, \ldots, m \! + \! 1 \}$, where these operators satisfy
  \begin{align*}
    \img d_{i}
      &\subseteq
         \ker d_{i+1},
           \quad i \in \{ 0, 1, \ldots, m \} \eqpunc .
  \end{align*}
  i.e.,
  \begin{align*}
    d_{i+1} \circ d_i 
      &= 0,
           \quad i \in \{ 0, 1, \ldots, m \} \eqpunc .
  \end{align*}

\end{itemize}
These objects are summarized in~\eqref{eq:m:complex:cochain}. For $i \in \{ 0,
1, \ldots, m \}$, the $i$-th cohomology space is defined to be
\begin{align*}
  \widehat{H}_i
    &\defeq 
       \ker d_{i+1} \ / \ \img d_{i} \eqpunc .
\end{align*}

The following list contains some further important notions:
\begin{itemize}

\item elements of $C_i$ are called $i$-chains;

\item elements of $\widehat{C}_i$ are called $i$-cochains;

\item elements of $\img \partial_{i+1}$ are called $i$-boundaries;

\item elements of $\ker \partial_i$ are called $i$-cycles;

\item elements of $\img d_i$ are called $i$-coboundaries;

\item elements of $\ker d_{i+1}$ are called $i$-cocycles.

\end{itemize}

\begin{figure}[t]
  \begin{align}
    \label{eq:m:complex:chain}
    &\textcolor{gray}{C_{m+1}} \ 
     \textcolor{gray}{\xrightarrow{\!\!\partial_{m+1}\!\!}} \ 
     C_{m} 
     \xrightarrow{\partial_{m}}
     \cdots 
     \xrightarrow{\partial_{3}} 
     C_{2} 
     \xrightarrow{\partial_{2}} 
     C_{1} 
     \xrightarrow{\partial_{1}}
     C_{0} \ 
     \textcolor{gray}{ \xrightarrow{\partial_{0}}} \ 
     \textcolor{gray}{C_{-1}} \\
     \label{eq:m:complex:cochain}
    &\textcolor{gray}{\widehat{C}_{m+1}} \ 
     \textcolor{gray}{\xleftarrow{\!\!d_{m+1}\!\!}} \
     \widehat{C}_{m} 
     \xleftarrow{d_{m}}
     \cdots 
     \xleftarrow{d_{3}}
     \widehat{C}_{2} 
     \xleftarrow{d_{2}} 
     \widehat{C}_{1} 
     \xleftarrow{d_{1}}
     \widehat{C}_{0} \
     \textcolor{gray}{\xleftarrow{d_{0}}} \ 
     \textcolor{gray}{\widehat{C}_{-1}}
   \end{align}
   \caption{Spaces and mappings associated with an $m$-complex.}
   \label{fig:chain-cochain}
\end{figure}

Finally, let us mention the useful fact that
\begin{align}
  \dim \widehat{H}_{i} 
    &= \dim H_{i},
         \quad i \in \{ 0, 1, \ldots, m \}.
  \label{eq:dim-cohomo}
\end{align}
This is a direct consequence of
\begin{align}
  \label{eq:diffi-ker}
  \ker d_{i} 
    &= \left( 
         \img \partial_{i} 
       \right)^{\perp} \eqpunc , \\ 
  \label{eq:diffi-img}
  \img d_{i} 
    &= \left( 
         \ker\partial_{i} 
       \right)^{\perp} \eqpunc ,
\end{align}
where for any subspace $U \subseteq {C}_{i}$ we have defined its orthogonal space
$U^{\perp}$ to be
\begin{align*}
  U^{\perp}
    &\defeq 
       \bigl\{
         \varphi \in \widehat{C}_{i} 
       \bigm|
         \varphi(x)=0 \ \text{for all $x \in U$}
       \bigr\} \eqpunc .
\end{align*}
Equality \eqref{eq:diffi-ker} follows by noting that $\img \partial_{i} \subseteq C_{i-1}$, and so
\begin{align*}
	(\img \partial_i)^{\perp}
	&= \left\{ \varphi \in \widehat{C}_{i-1} \mid \varphi(\x) = 0 \ \  \forall \x \in \img \partial_{i} \right\}
	\\
	&= \left\{ \varphi \in \widehat{C}_{i-1} \mid \varphi(\partial_{i} \y) = 0 \ \  \forall \y \in C_{i} \right\}
	\\
	&= \left\{ \varphi \in \widehat{C}_{i-1} \mid (d_{i} \varphi)(\y) = 0 \ \  \forall \y \in C_{i} \right\}
	\\
	&= \ker d_{i}.
\end{align*}
(A similar argument leads to \eqref{eq:diffi-img}.)
Now \eqref{eq:dim-cohomo} follows easily from
\begin{align*}
	\dim\widehat H_{i}
	&= \dim(\ker d_{i+1}) - \dim(\img d_{i}) \\
	&= \bigl(
             \dim C_{i} \! - \! \dim(\img \partial_{i+1})
           \bigr) 
           - 
           \bigl(
             \dim C_i \! - \! \dim(\ker \partial_{i})
           \bigr) \\
	&= \dim H_{i}.
\end{align*}

%\balance
\bibliographystyle{IEEEtran}
\bibliography{bibliolist} %

\begin{IEEEbiographynophoto}{Ali Al-Bashabsheh}
  received a B.Sc.\ (2001) and an M.Sc.\ (2005) in electrical engineering from
  Jordan University of Science and Technology, an M.Sc.\ (2012) in mathematics
  from Carleton University, and a Ph.D.\ (2014) in electrical engineering from
  the University of Ottawa.  His research interests include graphical models,
  information theory, and machine learning.
\end{IEEEbiographynophoto}

\begin{IEEEbiographynophoto}{Pascal O.~Vontobel}
  (S'96--M'97--SM'12) received the Diploma degree in electrical engineering in
  1997, the Post-Diploma degree in information techniques in 2002, and the
  Ph.D.\ degree in electrical engineering in 2003, all from ETH Zurich,
  Switzerland.

  From 1997 to 2002 he was a research and teaching assistant at the Signal and
  Information Processing Laboratory at ETH Zurich, from 2006 to 2013 he was a
  research scientist with the Information Theory Research Group at
  Hewlett--Packard Laboratories in Palo Alto, CA, USA, and since 2014 he has
  been an Associate Professor at the Department of Information Engineering at
  the Chinese University of Hong Kong. Besides this, he was a postdoctoral
  research associate at the University of Illinois at Urbana--Champaign
  (2002--2004), a visiting assistant professor at the University of
  Wisconsin--Madison (2004--2005), a postdoctoral research associate at the
  Massachusetts Institute of Technology (2006), and a visiting scholar at
  Stanford University (2014). His research interests lie in coding and
  information theory, quantum information processing, data science,
  communications, and signal processing.

	Dr.\ Vontobel has been an Associate Editor for the {\sc IEEE Transactions on
	Information Theory} (2009--2012) and the {\sc IEEE Transactions on
	Communications} (2014--2017), an Awards Committee Member of the IEEE
	Information Theory Society (2013--2014), a Distinguished Lecturer of the IEEE
	Information Theory Society (2014--2015), and a TPC co-chair of the 2016 IEEE
	International Symposium on Information Theory. Currently, he is a TPC co-chair
	of the upcoming 2018 IEICE International Symposium on Information Theory and
	its Applications in Singapore and of the upcoming 2018 IEEE Information Theory
	Workshop in Guangzhou, China. He has been on the technical program committees
	of many international conferences and has co-organized several topical
	workshops. Moreover, he has been three times a plenary speaker at
	international information and coding theory conferences and has been awarded
	the ETH medal for his Ph.D.\ dissertation.
\end{IEEEbiographynophoto}

\end{document}